\newtheorem{theorem}{Theorem}[section]
\newtheorem{proposition}[theorem]{Proposition}
\newtheorem{lemma}[theorem]{Lemma}
\newtheorem{definition}[theorem]{Definition}
\newtheorem{algorithm}[theorem]{Algorithm}
\newtheorem{remark}[theorem]{Remark}
\numberwithin{equation}{section}
\title{Moment Explosions in the Rough Heston Model}
\author{Stefan Gerhold, Christoph Gerstenecker, Arpad Pinter}
\email{sgerhold@fam.tuwien.ac.at}
\address{FAM, TU Wien}
\thanks{Financial support from the
Austrian Science Fund (FWF) under grants P~24880 and P~30750
is gratefully acknowledged. We thank Omar El~Euch, Antoine Jacquier, and
Martin Keller-Ressel for helpful comments.}
\date{\today}
\begin{document}

\begin{abstract}
   We show that the moment explosion time in the
   rough Heston model [El Euch, Rosenbaum 2016, arxiv:1609.02108] is finite if and only if
   it is finite for the classical Heston model. Upper and lower bounds for the
   explosion time are established, as well as an algorithm to compute the
   explosion time (under some restrictions). We show that the critical moments are finite for
   all maturities. For negative correlation, we apply our algorithm for the moment explosion
   time to compute the lower critical moment.
\end{abstract}

%\keywords{}

\subjclass[2010]{91G20,45D05}

\maketitle

\section{Introduction}

It has long been known that the marginal distributions of a realistic asset price model
should not feature tails that are too thin (as, e.g., in the Black-Scholes model). In many models that have been proposed, the tails are of power law type. Consequently, not all moments of the asset price are finite.
Existence of the moments has been thoroughly investigated for classical models; in particular,
we mention here Keller-Ressel's work~\cite{KR11} on affine stochastic volatility models. Precise information
on the critical moments -- the exponents where the stock price ceases to be integrable,
depending on maturity -- is of interest for several reasons.
It allows to approximate the wing behavior of the volatility smile, to assess the convergence
rate of some numerical procedures, and to identify models that would assign infinite
prices to certain financial products.
We refer to~\cite{AnPi07,KR11} and the article
\emph{Moment Explosions} in~\cite{Co14}
for further details and references on these motivations.
Moreover, when using the Fourier representation to price options, choosing
a good integration path (equivalently, a good damping parameter) to avoid
highly oscillatory integrands requires knowing
the strip of analyticity of the characteristic function. Its boundaries are described
by the critical moments~\cite{Le04b,LoKa07}.

In recent years, attention has shifted in financial modeling from the classical (jump-)diffusion and L\'evy models
to rough volatility models. Since the pioneering work by Gatheral et al.~\cite{GaJaRo14},
the literature on these non-Markovian stochastic volatility models, inspired by fractional
Brownian motion, has grown rapidly. We refer, e.g., to Bayer et al.~\cite{BaFrGaMaSt17} for many
references.
%For rough volatility models, not much is known concerning moment explosions.
In the present paper we provide some results
on the explosion time and the critical moments of the rough Heston model.
While there are several ``rough'' variants of the Heston model, we work with
the one proposed by El Euch and Rosenbaum~\cite{ElRo16}.
The dynamics of this model are given by
\begin{align*}
  dS_t &= S_t\sqrt{V_t}\,dW_t,\quad S_0 > 0, \\ 
  V_t &= V_0 + \frac{1}{\Gamma(\alpha)}\int_0^t (t-s)^{\alpha-1}\lambda(\bar{v}-V_s)\,ds + \frac{1}{\Gamma(\alpha)}\int_0^t (t-s)^{\alpha-1}\xi\sqrt{V_s}\,dZ_s,  \\
  d\langle W,Z\rangle_t &= \rho\,dt,
\end{align*}
where $W$ and $Z$ are correlated Brownian motions, $\rho\in(-1,1)$, and $\lambda,$ $\xi,$ $\bar{v}$
are positive parameters. The smoothness parameter $\alpha$ is in $(\tfrac12,1)$.
(For  $\alpha=1$, the model clearly  reduces to the classical Heston model.) Besides having a
microstructural foundation, this model features a characteristic function that can
be evaluated numerically in an efficient way, by solving a fractional Riccati equation
(equivalently, a non-linear Volterra integral equation; see Section~\ref{se:prelim}).
Its tractability
makes the rough Heston model attractive for practical implementation, and at the same time facilitates our analysis.

We first analyze the explosion time, i.e., the maturity at which a fixed moment
explodes. While the explosion time of the classical Heston model has an explicit formula,
for rough Heston we arrive at a well-known hard problem: Computing the explosion
time of the solution of a non-linear Volterra integral equation (VIE) of the second kind.
There is no general algorithm known, and in most cases that have been studied
in the literature, only bounds are available.
See Roberts~\cite{Ro98} for an overview.
Using the specific structure of our case, we show that the explosion time is finite
if and only if it is finite for the  classical Heston model,  and we provide a lower and an
upper bound (Sections~\ref{se:expl vie}--\ref{se:val}).
As a byproduct, the validity of the fractional Riccati equation, respectively the VIE,
for all moments is established, which culminates in Section~\ref{se:complex}.
In Section~\ref{se:comp exp} we derive an algorithm to compute the explosion time, under some restrictions
on the parameters.
%\margincomment{mention monotonicity and Volterra, when done}
The critical moments are finite for all maturities (Section~\ref{se:cm fin}) and
can be computed by numerical root finding (Section~\ref{se:comp mom}).
Our approach has two limitations: First, to compute the critical moments, maturity must
not be too high. Second, our algorithm can compute the upper critical moment only for $\rho>0$,
and the lower critical moment for $\rho<0$. As the latter is the more important case in
practice, we focus on the \emph{left} wing of implied volatility when recalling
the relation between critical moments and strike asymptotics (Lee's moment formula; see Section~\ref{se:app}).

Corollary~3.1 in~\cite{ElRo17} is related to our results. For each maturity, it gives explicit lower and
upper bounds for the critical moments. Inverting them yields a lower bound
for the explosion time; the latter is not comparable to our bounds.

\section{Preliminaries}\label{se:prelim}

El Euch and Rosenbaum~\cite{ElRo16} established a semi-explicit representation of the moment generating function (mgf) of the log-price $X_t = \log(S_t/S_0)$
in the rough Heston model.
The mgf is given by
\begin{equation}\label{eq:mgf}
  \mathbb{E}[e^{uX_t}] = \exp\big(\bar{v}\lambda I^1_t \psi(u,t)+v_0I^{1-\alpha}_t \psi(u,t)\big),
\end{equation}
where $\psi$ satisfies a fractional Riccati differential equation (see below).
The constraint $\rho\in(-1/\sqrt{2},1/\sqrt{2}]$ from~\cite{ElRo16} was removed recently
in~\cite{ElRo17}.
The paper~\cite{AlGaRa17} contains an alternative derivation of the fractional
Riccati equation, and~\cite{AbLaPu17}
has more general results, embedding the rough Heston model into the new class
of affine Volterra processes.
Recall the following definition (see e.g.~\cite{KiSrTr06}):
\begin{definition}%[Riemann-Liouville fractional integral and derivative]
  The (left-sided) Riemann-Liouville fractional integral $I_t^\alpha$ of order $\alpha\in(0,\infty)$ %started at $0$ with respect to the integration variable~$t$
  of a function~$f$ is given by
  \begin{equation}\label{eqIII:frac-int}
    I^\alpha_t f(t) := \frac1{\Gamma(\alpha)}\int_0^t (t-s)^{\alpha-1}f(s)\,ds
  \end{equation}
  whenever the integral exists, and the (left-sided) Riemann-Liouville fractional derivative $D_t^\alpha$ of order $\alpha\in[0,1)$ %started at $0$ with respect to the integration variable~$t$
  of~$f$ is given by
  \begin{equation}\label{eqIII:frac-der}
    D^\alpha_t f(t) := \frac{d}{dt}I^{1-\alpha}f(t)
  \end{equation} 
  whenever this expression exists. 
\end{definition}
(The fractional derivative $D^\alpha_t$ can be defined for $\alpha>1$ as well, but
this is not needed in our context.)
The function~$\psi$ from~\eqref{eq:mgf}  is the unique continuous solution of the fractional Riccati initial value problem
\begin{align}
  D^\alpha_t \psi(u,t) &= R(u,\psi(u,t)), \label{eqIII:Ha1}\\
  I^{1-\alpha}_t \psi(u,0) &= 0, \label{eqIII:Ha2}
\end{align}
where~$R$ is defined as
\begin{equation}\label{eqIII:R}
  R(u,w)=c_1(u)+c_2(u)w+ c_3w^2,
\end{equation}
with coefficients
\begin{align*}
  c_1(u)&=\tfrac12 u(u-1), \\ 
  c_2(u)&=\rho\xi u-\lambda, \\
  c_3&=\tfrac12\xi^2>0.
\end{align*}
For $\alpha=1$, this becomes a standard Riccati differential equation, which admits
a well-known explicit solution~\cite[Chapter~2]{Ga06}.
The roots of $R(u,\cdot)$ are located at the points $\frac1{c_3}(-e_0(u)\pm\sqrt{e_1(u)})$
with 
\begin{align}
  e_0(u) &:= \tfrac12 c_2(u) = \tfrac12(\rho\xi u-\lambda), \label{eqIII:e_0} \\
  e_1(u) &:= e_0(u)^2-c_3c_1(u) = e_0(u)^2 - \tfrac14\xi^2 u(u-1).\label{eqIII:e_1}
\end{align}

The following result, relating fractional differential equations
and Volterra integral equations, is a special case of Theorem~3.10 in~\cite{KiSrTr06}.
\begin{theorem}\label{thm:frac vie}
  Let $\alpha\in(0,1)$, $T>0$ and suppose that $\psi \in C[0,T]$ and $H\in C(\mathbb R)$.
  Then~$\psi$ satisfies the fractional differential equation
  \begin{align*}
    D_t^\alpha \psi(t) &= H(\psi(t)), \\
    I_t^{1-\alpha}\psi(0) &= 0
  \end{align*}
  if and only if it satisfies the Volterra integral equation (VIE)
  \[
    \psi(t) = \frac{1}{\Gamma(\alpha)}\int_{0}^t(t-s)^{\alpha-1} H(\psi(s))ds.
  \]
\end{theorem}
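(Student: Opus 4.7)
The plan is to prove the equivalence by applying the semigroup property of the Riemann-Liouville fractional integral, $I_t^{\beta}I_t^{\gamma}=I_t^{\beta+\gamma}$ for $\beta,\gamma>0$ (which follows from Fubini's theorem together with the Beta integral), and to use that $I_t^{1}g(t)=\int_0^t g(s)\,ds$ is the ordinary antiderivative.

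For the forward direction, suppose $\psi$ satisfies the fractional differential equation. By definition $D_t^\alpha\psi(t)=\tfrac{d}{dt}I_t^{1-\alpha}\psi(t)=H(\psi(t))$, and $H\circ\psi$ is continuous on $[0,T]$. Integrating this identity from $0$ to $t$ and using the initial condition $I_t^{1-\alpha}\psi(0)=0$ yields
\[
  I_t^{1-\alpha}\psi(t)=\int_0^t H(\psi(s))\,ds=I_t^1 H(\psi(t)).
\]
Applying $I_t^\alpha$ to both sides and invoking the semigroup property gives $I_t^1\psi(t)=I_t^{1+\alpha}H(\psi(t))$. Both sides are classically differentiable in $t$, since $\psi$ and $H\circ\psi$ are continuous, and differentiating produces exactly the VIE.

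For the converse, suppose $\psi(t)=I_t^\alpha H(\psi(t))$. Applying $I_t^{1-\alpha}$ and using the semigroup property gives $I_t^{1-\alpha}\psi(t)=I_t^1 H(\psi(t))=\int_0^t H(\psi(s))\,ds$. This vanishes at $t=0$, yielding the initial condition. Since $s\mapsto H(\psi(s))$ is continuous, the fundamental theorem of calculus implies $D_t^\alpha\psi(t)=\tfrac{d}{dt}I_t^{1-\alpha}\psi(t)=H(\psi(t))$.

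The main obstacle is the regularity bookkeeping: under the sole hypothesis $\psi\in C[0,T]$, the quantity $I_t^{1-\alpha}\psi(t)$ is continuous but need not be differentiable except through the identifications above, so one must argue that the fractional derivative appearing in the problem is well-defined precisely \emph{because} of the VIE (in the converse direction) or precisely \emph{because} of the ODE hypothesis (in the forward direction). The algebraic manipulations themselves are routine once the semigroup identity and the continuity of $H\circ\psi$ are in hand; this is why the statement follows as a special case of Theorem~3.10 in~\cite{KiSrTr06}.
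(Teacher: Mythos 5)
Your proof is correct. The paper itself gives no argument for this statement beyond citing it as a special case of Theorem~3.10 in Kilbas--Srivastava--Trujillo, and what you have written is precisely the standard proof underlying that reference: apply $I^\alpha$ (resp.\ $I^{1-\alpha}$) to the given identity, use the semigroup law $I^\beta I^\gamma = I^{\beta+\gamma}$ together with the fact that $I^1$ is ordinary integration, and then differentiate, which is legitimate because in each direction the relevant function is $I^1$ of something continuous. You also correctly identify the only delicate point, namely that under the bare hypothesis $\psi\in C[0,T]$ the differentiability of $I^{1-\alpha}_t\psi$ is not free but is supplied by whichever of the two equations is assumed (in the forward direction the hypothesis that $\frac{d}{dt}I^{1-\alpha}_t\psi(t)=H(\psi(t))$ holds with $H\circ\psi$ continuous makes $I^{1-\alpha}_t\psi$ a $C^1$ function, so the fundamental theorem of calculus applies; in the converse direction $I^{1-\alpha}_t\psi=\int_0^t H(\psi(s))\,ds$ is manifestly $C^1$). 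Nothing further is needed.
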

Using Theorem~\ref{thm:frac vie},
the Riccati differential equation~\eqref{eqIII:Ha1} with initial value~\eqref{eqIII:Ha2} can be transformed into the non-linear Volterra integral equation
\begin{equation*}
  \psi(u,t) = \frac1{\Gamma(\alpha)}\int_0^t (t-s)^{\alpha-1} R(u,\psi(u,s))\,ds.
\end{equation*}
%with the weakly singular kernel $k(z):=\frac1{\Gamma(\alpha)}z^{\alpha-1}$.
This integral equation was used in~\cite{ElRo16} to compute~$\psi$ numerically.
The function
\[
  f(u,t):=c_3\psi(u,t)
\]
solves
\begin{equation}\label{eq:vie}
  f(u,t) = \frac1{\Gamma(\alpha)}\int_0^t (t-s)^{\alpha-1} G(u,f(u,s))\,ds
\end{equation}
with non-linearity
\begin{align}
  G(u,w) &= c_3R(u,w/c_3) \notag \\
  &= (w + e_0(u))^2- e_1(u),  \label{eqIII:G1}
\end{align}
where $e_0(u)$ and $e_1(u)$ are defined in~\eqref{eqIII:e_0} and~\eqref{eqIII:e_1}.
Equation~\eqref{eq:vie} is a nonlinear Volterra integral equation with weakly singular kernel;
it will be used to analyze the blow-up behavior of~$f$ (and thus of~$\psi$) in Section~\ref{se:expl vie}.
We quote the following standard existence and uniqueness result for equations of this kind.
\begin{theorem}\label{thm:vie ex}
  Let $\alpha\in(0,1)$, $g\in C[0,\infty)$, and suppose that $H:\mathbb R \to \mathbb R$
  is locally Lipschitz continuous. Then there is $T^*\in(0,\infty]$ such that
  the Volterra integral equation
  \begin{equation*}%\label{eq:gen vie}
    \psi(t) = g(t) + \int_0^t (t-s)^{\alpha-1} H(u(s))ds
  \end{equation*}
  has a unique continuous solution $\psi$ on $[0,T^*)$, 
  and there is no continuous solution on any larger right-open interval $[0,T^{**})$.
\end{theorem}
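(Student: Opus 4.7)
The plan is the standard Picard/Banach argument for local existence and uniqueness, followed by a continuation to a maximal interval. Write the VIE as a fixed-point equation $\psi = \mathcal T\psi$, where
\[
 (\mathcal T\psi)(t) = g(t) + \int_0^t (t-s)^{\alpha-1} H(\psi(s))\,ds.
\]
Since $\alpha > 0$, the kernel $(t-s)^{\alpha-1}$ is integrable, and since $H$ is locally Lipschitz, on a closed ball $B$ of radius $r$ around the constant function $g(0)$ in $C[0,\delta]$ we have $\mathcal T(B) \subseteq B$ and
\[
  \|\mathcal T\psi_1 - \mathcal T\psi_2\|_{C[0,\delta]} \le L \cdot \frac{\delta^\alpha}{\alpha}\, \|\psi_1-\psi_2\|_{C[0,\delta]},
\]
with $L$ a Lipschitz constant of $H$ on a neighbourhood of the values taken by elements of $B$. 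For $\delta$ small enough both the self-map and the contraction conditions hold, so Banach's fixed-point theorem yields a unique continuous solution on $[0,\delta]$.

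For continuation, suppose a continuous solution $\psi$ exists on $[0,\tau]$. For $t\ge \tau$ the VIE reads
\[
  \psi(t) = \tilde g(t) + \int_\tau^t (t-s)^{\alpha-1} H(\psi(s))\,ds,\qquad
  \tilde g(t) := g(t) + \int_0^\tau (t-s)^{\alpha-1} H(\psi(s))\,ds,
\]
where $\tilde g$ is continuous on $[\tau,\infty)$ because $\psi$ is continuous on $[0,\tau]$. After translating by $\tau$, this is an equation of the same form, so the local existence argument extends the solution to $[0,\tau+\delta']$ for some $\delta'>0$. Define $T^* := \sup\{T>0 : \text{a continuous solution exists on } [0,T)\}$; by construction no continuous extension to any larger right-open interval is possible.

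Uniqueness on $[0,T^*)$ follows by a connectedness argument. If $\psi_1,\psi_2$ are two continuous solutions on $[0,T)$, the set $A := \{t\in[0,T) : \psi_1 = \psi_2 \text{ on }[0,t]\}$ is non-empty (local uniqueness at $0$), closed (continuity), and open in $[0,T)$ (applying local uniqueness to the shifted equation at the right endpoint of any $[0,t]\subset A$). Hence $A = [0,T)$, and the solutions for different $T<T^*$ glue together to a unique solution on the whole maximal interval.

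The main obstacle is the weakly singular kernel: iterating $\mathcal T$ directly on a fixed interval is not immediately a contraction because the factor $\delta^\alpha/\alpha$ grows too fast, and $H$ is only locally Lipschitz so one must simultaneously keep the candidate solution in a region where the Lipschitz constant controls things. The step-by-step continuation above handles this cleanly; an alternative is to use a weighted sup-norm $\|\psi\|_\beta := \sup_{t\le T} e^{-\beta t}|\psi(t)|$, which makes $\mathcal T$ a contraction on $[0,T]$ for all $\beta$ large enough provided the iterate stays in the region where $H$ is Lipschitz. Either way, the theorem follows, and the formulation in terms of the maximal open interval $[0,T^*)$ is precisely what is needed in Section~\ref{se:expl vie} to speak of an explosion time.
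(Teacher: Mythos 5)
Your proposal is correct, but it is worth noting that the paper does not actually carry out this argument: its ``proof'' consists of a citation to Theorem~3.1.4 and p.~107 of Brunner's monograph (and to Gripenberg et al.) for both the local well-posedness and the continuation to a maximal right-open interval. What you have written is, in essence, a self-contained reconstruction of that cited argument: Picard/Banach on a short interval where the weakly singular but integrable kernel contributes the small factor $\delta^\alpha/\alpha$, restarting from $\tau$ with the updated forcing term $\tilde g$ (whose continuity on $[\tau,\infty)$ you correctly flag and which follows by dominated convergence, since $(t-s)^{\alpha-1}\le(\tau-s)^{\alpha-1}$ for $t>\tau$), and a connectedness argument for uniqueness on the maximal interval. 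Two small points of care: the self-map condition on the ball around the constant $g(0)$ also needs $\sup_{t\le\delta}|g(t)-g(0)|$ small, which continuity of $g$ provides; and Banach gives uniqueness only within the ball $B$, so to conclude uniqueness among \emph{all} continuous solutions on $[0,\delta]$ you should add that any continuous solution satisfies $\psi(0)=g(0)$ and hence remains in $B$ for small time --- your connectedness argument then propagates this, so the gap is only one of phrasing. Your closing remark about why a single global contraction fails (the local Lipschitz hypothesis prevents the usual Mittag--Leffler/iterated-kernel trick from applying directly) is accurate and is precisely the reason the stepwise continuation, or a weighted norm combined with a priori localization, is the right tool here.
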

\begin{proof}
For existence and uniqueness on a sufficiently small interval $[0,T_0]$ with $T_0>0$ see,
e.g., Theorem~3.1.4
in Brunner's recent monograph~\cite{Br17}.
The continuation to a maximal right-open interval is discussed there as well (p.~107;
see also Section~12 of Gripenberg et al.~\cite{GrLoSt90}).
\end{proof}
In~\cite{ElRo16}, the fractional Riccati equation was established
 for $u\in i\mathbb R$, whereas we are interested
in $u\in\mathbb R$. The justification of~\eqref{eqIII:Ha1}--\eqref{eqIII:Ha2},
and thus of~\eqref{eq:vie}, for $u\in\mathbb R$ hinges on a result from~\cite{ElRo17}
and the analytic dependence of $f(u,t)$ on~$u$. See Sections~\ref{se:val}
and~\ref{se:complex} for details.
%We write $T^\ast_{\alpha}(u) \in (0,\infty]$ for the explosion time of the solution
%of~\eqref{eq:vie}, i.e., the number~$T^*$ from Theorem~\ref{thm:vie ex}.
%By Theorem~\ref{thm:cons}, this notation is consistent with~\eqref{eq:def exp time} below.
%
We write
\begin{equation}\label{eq:def exp time}
  T^\ast_{\alpha}(u):=\sup\{t\geq 0\colon \mathbb{E}[S_t^u]<\infty\}, \quad u\in\mathbb R,
\end{equation}
for the moment explosion time in the rough Heston model.
In the classical case ($\alpha=1$), the explicit expression
\begin{align}
  T^\ast_{1}(u) &=
    \begin{cases}
    \int_0^\infty \frac1{R(u,w)}\,dw, & \text{$R(u,\cdot)$ has no roots on $[0,\infty)$,} \\
    \infty, & \text{otherwise,}
  \end{cases} \label{eq:He et1} \\ 
  &=\begin{cases}
    \frac1{\sqrt{|e_1(u)|}}\left(\frac\pi2-\arctan\left(\frac{e_0(u)}{\sqrt{|e_1(u)|}}\right)\right), & e_1(u) < 0, \\
    \frac1{2\sqrt{e_1(u)}}\log\left(\frac{e_0(u)+\sqrt{e_1(u)}}{e_0(u)-\sqrt{e_1(u)}}\right), & e_1(u)\geq 0,e_0(u)>0, \\
    \infty, & e_1(u)\geq 0, e_0(u)<0, \label{eq:He et}
  \end{cases}
\end{align}
has been found by Andersen and Piterbarg~\cite{AnPi07} (see also~\cite{KR11}).
It is a consequence of the explicit characteristic function, which is not available
for the \emph{rough} Heston model.
We distinguish between the following cases for $u\in\mathbb{R}$:
\begin{enumerate}[label=(\Alph*)]
\item\label{itemIII: A} $c_1(u)>0$, $e_0(u)\geq 0,$
\item\label{itemIII: B} $c_1(u)>0$, $e_0(u)<0$ and $e_1(u)<0,$
\item\label{itemIII: C} $c_1(u)>0$, $e_0(u)<0$ and $e_1(u)\geq 0,$
\item\label{itemIII: D} $c_1(u)\leq 0.$
\end{enumerate}
As several of our results deal with $\rho<0$ and $u$ satisfying case~(A), we explicitly note
\[
  \text{case (A) for $\rho<0$:}\qquad u\leq \frac{\lambda}{\rho\xi}.
\]
Note that cases~\ref{itemIII: A} and~\ref{itemIII: B} combined are exactly the cases in which the moment explosion time $T^\ast_{1}(u)$ in the classical Heston model is finite, by~\eqref{eq:He et}.
We can now state our first main result.
\begin{theorem}\label{thm:finite}
  For $u\in\mathbb R$, the moment explosion time $T_\alpha^\ast(u)$ of the rough
  Heston model is finite if and only if~$u$ satisfies (A) or (B). This is equivalent
  to $T_1^\ast(u)$ (explosion time of the classical  Heston model) being finite.
\end{theorem}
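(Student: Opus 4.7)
The plan is to reduce the theorem to an analysis of the blow-up behaviour of the nonlinear VIE \eqref{eq:vie}, using the mgf representation \eqref{eq:mgf} to translate between explosion of the moment and blow-up of $f(u,\cdot)$. Note first that the second equivalence (case (A) or (B) iff $T_1^*(u)<\infty$) is immediate from the explicit formula \eqref{eq:He et} for classical Heston, so all the work lies in showing that $T_\alpha^*(u)$ is finite exactly in cases (A) and (B).

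\textbf{Cases (A), (B): explosion.} In both cases $G(u,\cdot)$ admits a strict quadratic lower bound on $[0,\infty)$: in (A), since $e_0(u)\geq 0$ and $c_1(u)>0$, one has $G(u,w)=w^2+2e_0(u)w+c_3c_1(u)\geq w^2+c_3c_1(u)$ for $w\geq 0$; in (B), $e_1(u)<0$ together with $e_0(u)^2<c_3c_1(u)$ gives $G(u,w)\geq\varepsilon(w^2+1)$ for some $\varepsilon>0$ and every $w\in\mathbb{R}$. From such a lower bound I would construct a subsolution of \eqref{eq:vie} that blows up in finite time; the power-counting ansatz $A(T-t)^{-\alpha}$ balances the convolution against $w^2$ and suggests the correct blow-up rate, and a standard comparison argument (e.g., Roberts \cite{Ro98} or Brunner \cite{Br17}) promotes this to $f(u,t)\to\infty$ at some finite $T_\alpha^{\mathrm{VIE}}(u)<\infty$. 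The upper bound $T_\alpha^*(u)\leq T_\alpha^{\mathrm{VIE}}(u)$ then follows from \eqref{eq:mgf}, combined with the machinery of Sections \ref{se:val}--\ref{se:complex} that justifies \eqref{eqIII:Ha1}--\eqref{eqIII:Ha2} and forces $\mathbb{E}[e^{uX_t}]=\infty$ once the representing $\psi$ ceases to exist.

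\textbf{Cases (C), (D): non-explosion.} Case (D) is painless: $c_1(u)\leq 0$ means $u\in[0,1]$, and since $S$ is a non-negative local martingale, hence a supermartingale, Jensen's inequality gives $\mathbb{E}[S_t^u]\leq(\mathbb{E}[S_t])^u\leq S_0^u<\infty$ for every $t\geq 0$. So the essential work is case (C). Here $G(u,\cdot)$ has two positive roots $0<w^-_G<w^+_G$ with $G(u,\cdot)\geq 0$ on $[0,w^-_G]$, and the target is $0\leq f(u,t)<w^-_G$ for every $t\geq 0$, which by Theorem \ref{thm:vie ex} forces global existence of $f$ and therefore $T_\alpha^*(u)=\infty$ via \eqref{eq:mgf}. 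Since $G(u,\cdot)$ is not monotone on $[0,w^-_G]$, a direct comparison principle is unavailable, so I would instead linearise from above: by convexity of $G(u,\cdot)$, the secant $w\mapsto c_3c_1(u)-(c_3c_1(u)/w^-_G)w$ dominates $G(u,\cdot)$ on $[0,w^-_G]$. The associated linear VIE has an explicit solution in terms of the Mittag-Leffler function $E_\alpha(-b t^\alpha)$ which is bounded above by $w^-_G$ for all $t$, and a bootstrap/Picard-type iteration, using that the linearisation is dominant whenever $f$ remains in $[0,w^-_G]$, yields the desired global bound.

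\textbf{Main obstacle.} The most delicate point is not the VIE dynamics itself but the bridge between $f$ and the mgf: the fractional Riccati equation is derived in \cite{ElRo16} only for $u\in i\mathbb{R}$, and its validity for real $u$ (especially those in cases (A), (B) where $\psi$ will actually blow up) needs to be certified to turn finite blow-up of $f$ into finite explosion of $\mathbb{E}[S_t^u]$. This is why the authors postpone the argument to Sections \ref{se:val}--\ref{se:complex}, where analytic continuation in $u$ and the results from \cite{ElRo17} are combined. In case (C) the same bridge is used in the reverse direction: global existence of $f$ must be shown to imply actual finiteness of the mgf, not just of the candidate representation, which again requires the complex-$u$ machinery.
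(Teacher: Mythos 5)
Your overall architecture coincides with the paper's: reduce to the blow-up analysis of the VIE \eqref{eq:vie} in the four cases, then invoke the bridge between $f$ and the mgf (Sections~\ref{se:val}--\ref{se:complex}), which you correctly identify as the delicate part and may take as given here. Two of your case arguments diverge from the paper in ways worth noting. For case (D), your supermartingale-plus-Jensen argument is correct and more elementary than the paper's VIE analysis (Proposition~\ref{prop:case4}); it settles finiteness of the moment directly, without the integral equation. For cases (A) and (B), the paper does not build an explicit blow-up profile but applies the Osgood-type criterion \eqref{eqIII:blowup-int} of Brunner--Yang (Proposition~\ref{propIII:Brunner}), which is satisfied because $G(u,w)\sim w^2$. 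Your ansatz $A(T-t)^{-\alpha}$ gives the right rate near the blow-up time, but verifying that it is an admissible comparison function on all of $[0,T]$ and matching it to $f$, which starts at $0$, is precisely the hard part that the Osgood criterion sidesteps; moreover in case (B) one must first replace $G$ by an increasing minorant (the paper's $\bar G_a$ in \eqref{eqIII:G_a}) before any monotone comparison lemma applies. These are repairable, but the details are heavier than your sketch suggests.

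The genuine gap is case (C). A small point first: $G(u,\cdot)$ \emph{is} monotone on $[0,a]$ with $a=-e_0(u)-\sqrt{e_1(u)}$ the smaller root (which lies to the left of the vertex $-e_0(u)$), but it is \emph{decreasing}, which is indeed why increasing-nonlinearity comparison fails. Your proposed fix --- dominate $G$ by its secant on $[0,a]$ and compare with the linear Mittag--Leffler solution --- runs into exactly the same obstruction: the secant is decreasing in $w$, and the comparison ``$f$ dominated by the solution of the linear VIE with decreasing right-hand side'' is not delivered by the standard resolvent/positivity arguments for Volterra equations (the resolvent associated with $-b_1 k$ is not sign-definite in the required direction), so the bootstrap does not close as stated. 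The missing idea is the paper's truncation-and-barrier argument (Lemma~\ref{lemIII:Case3}): set $\bar G:=G$ on $[0,a]$ and $\bar G:=0$ above $a$; if the solution of the truncated equation ever exceeded $a$, splitting the increment $f(t)-f(t_0)$ at the last crossing time $t_0$ of the level $a$ and using $\bar G\equiv 0$ above $a$ together with the monotone decrease of the kernel $k$ yields a contradiction, so the truncated solution stays in $[0,a]$; uniqueness then identifies it with $f$. This uses only non-negativity of $\bar G$ and monotonicity of $k$, no comparison principle at all.
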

The proof of Theorem~\ref{thm:finite} consists of two main parts.
First, Propositions~\ref{prop:case1}, \ref{prop:case2},
\ref{prop:case3}, and~\ref{prop:case4} discuss the blow-up behavior of the solution of~\eqref{eq:vie}
in cases (A)-(D), and Lemma~\ref{le:blowup} shows that blow-up of~$f$ leads (unsurprisingly)
to blow-up of the right-hand side of~\eqref{eq:mgf}. Second, we show in Section~\ref{se:val} that
the explosion time of~$f(u,\cdot)$ (the solution of~\eqref{eq:vie}) agrees with~$T_\alpha^*(u)$
(the explosion time of the rough Heston model) for all $u\in\mathbb R$.
As mentioned after Theorem~\ref{thm:vie ex},
this is not obvious from the results in the existing literature.

\section{Explosion time of the Volterra integral equation}\label{se:expl vie}

We begin by citing a result from Brunner and Yang~\cite{BrYa12} which characterizes the blow-up behavior of non-linear Volterra integral equations defined by positive and increasing functions.
We note that some arguments in our subsequent proofs (from Proposition~\ref{prop:case1} onwards)
are similar to arguments used in~\cite{BrYa12}. Alternatively, it should be possible
to extend the arguments in Appendix~A of~\cite{GaKe18}; there, $u$ is in~$[0,1]$.

%-----------------------------------------------------------------------------------------------------

\begin{proposition}\label{propIII:Brunner}
  Assume that $G\colon[0,\infty)\to[0,\infty)$ is continuous and the following conditions hold:
  \begin{itemize}
  \item[(G1)] $G(0)=0$ and $G$ is strictly increasing,
  \item[(G2)] $\lim_{w\to\infty} G(w)/w = \infty$,
  \item[(P)] $\phi\colon[0,\infty)\to(0,\infty)$ is a positive, non-decreasing, continuous function,
  \item[(K)] $k\colon(0,\infty)\to[0,\infty)$ is locally integrable and $K(t):=\int_0^t k(z)\,dz>0$ is a non-decreasing function.
  \end{itemize}
  Furthermore, assume $\lim_{t\to\infty}\phi(t)=\infty$ and $k(z)=cz^{\alpha-1}$ with $\alpha>0$ and $c>0$. Then the solution $h$ of the Volterra integral equation
  \[
    h(t) = \phi(t) + \int_0^t k(t-s)G(h(s))\,ds
  \]
  blows up in finite time if and only if
  \begin{equation}\label{eqIII:blowup-int}
    \int_U^\infty \left(\frac{w}{G(w)}\right)^{1/\alpha}\frac{dw}{w}<\infty
  \end{equation}
  for all $U>0$.
\end{proposition}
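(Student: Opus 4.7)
The plan is to follow the strategy of Brunner and Yang, reducing the VIE to a comparison with an explicitly solvable scalar fractional ODE. Under (G1), (P), (K), I would first check that the solution $h$ is continuous, strictly positive, and non-decreasing on its maximal interval of existence $[0,T^\ast)$. Picard iteration starting from $h_0\equiv\phi$ produces a non-decreasing sequence of non-decreasing functions whose pointwise limit is $h$; positivity comes from $\phi>0$, and monotonicity of $h$ follows from monotonicity of $G$, non-decreasingness of $\phi$, and the convolution structure of the kernel.

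For the implication ``convergent integral $\Rightarrow$ finite-time blow-up'', I would exploit monotonicity of $G\circ h$ to obtain the local lower bound
\[
  h(t) \;\geq\; \int_{t/2}^t c(t-s)^{\alpha-1} G(h(s))\,ds \;\geq\; \frac{c}{\alpha\,2^\alpha}\,t^\alpha\,G\bigl(h(t/2)\bigr),
\]
and then compare $h$ with the solution $v$ of the scalar fractional ODE $D^\alpha_t v = c\,\Gamma(\alpha)\,G(v)$ with matching initial data. By Theorem~\ref{thm:frac vie} this ODE is equivalent to a VIE with the same kernel~$k$, and its blow-up time is computable by separation of variables: after rescaling, it equals a constant multiple of $\int_{v(0)}^\infty (w/G(w))^{1/\alpha}\,dw/w$. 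Convergence of this integral therefore forces $v$, and hence (via a monotone comparison principle for weakly singular VIEs, as in Section~12 of~\cite{GrLoSt90}) also $h$, to blow up in finite time. The growth condition (G2) and $\phi(t)\to\infty$ are needed to guarantee that $h$ eventually reaches the regime where the ODE-type behaviour dominates. For the converse implication, the same comparison in the opposite direction furnishes a global majorant of $h$ whenever the integral diverges.

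The main obstacle is the quantitative matching between the VIE and the fractional ODE: the factor $t^\alpha$ in the lower bound shrinks the effective time-increment at each doubling, so one must either iterate the local inequality along a carefully chosen sequence $t_n$ and verify that $\sum_n(t_{n+1}-t_n)$ is finite precisely when the stated integral converges, or equivalently justify a change of variable of the form $dt \asymp (w/G(w))^{1/\alpha}\,dw/w$. Pinning down the constants so that the blow-up criterion is exactly $\int^\infty (w/G(w))^{1/\alpha}\,dw/w<\infty$, and not some version with an extra $\alpha$- or $c$-dependent exponent, is where the Brunner--Yang argument requires the most delicate bookkeeping.
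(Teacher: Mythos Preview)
The paper does not actually prove this proposition: its entire proof is the single sentence ``This is a special case of Corollary~2.22 in Brunner and Yang~\cite{BrYa12}, with $G$ not depending on time.'' So there is nothing to compare your argument against except the Brunner--Yang paper itself.

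Your sketch, however, has a genuine gap in its primary route. You propose to reduce the VIE to the ``scalar fractional ODE'' $D_t^\alpha v = c\,\Gamma(\alpha)\,G(v)$ and then to compute the blow-up time of~$v$ ``by separation of variables.'' This does not work: via Theorem~\ref{thm:frac vie}, that fractional ODE is \emph{exactly the same VIE} you started with (up to the forcing term~$\phi$), so the reduction is circular. More importantly, separation of variables is a technique specific to first-order ODEs; for $\alpha\in(0,1)$ the memory kernel $(t-s)^{\alpha-1}$ couples all past values of~$v$, and there is no change of variable that turns $D_t^\alpha v = G(v)$ into an integrable one-form $dw/G(w)^{1/\alpha}$. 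The formula $\int^\infty (w/G(w))^{1/\alpha}\,dw/w$ is the \emph{answer}, not the output of a separation-of-variables computation.

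The approach that actually produces this criterion is the one you relegate to your final paragraph: choose a geometric sequence of levels $(cr^n)^\alpha$, let $t_n$ be the hitting times of~$h$ at these levels, estimate $t_n-t_{n-1}$ above and below using monotonicity of~$G$ and the explicit form of $K(t)=\int_0^t k$, and show that $\sum_n(t_n-t_{n-1})$ is finite iff the stated integral converges. This is precisely how Brunner--Yang argue, and it is also the method the present paper adapts later in Theorems~\ref{thm:lb} and~\ref{thmIII:T*-upper} to extract explicit bounds. Your instinct to mention the level-set iteration is correct; the fractional-ODE-comparison detour should be dropped.
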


%-----------------------------------------------------------------------------------------------------

\begin{proof}
  This is a special case of Corollary 2.22 in Brunner and Yang~\cite{BrYa12}, with $G$ not depending on time.
\end{proof}

%-----------------------------------------------------------------------------------------------------

In case~\ref{itemIII: A}, all assumptions of Proposition~\ref{propIII:Brunner} are satisfied and only the integrability condition~\eqref{eqIII:blowup-int} has to be checked to determine whether the solution $f$ of \eqref{eq:vie} blows up in finite time.

%-----------------------------------------------------------------------------------------------------

\begin{proposition}\label{prop:case1}
  In case~\ref{itemIII: A}, the solution $f$ of~\eqref{eq:vie} starts at $0$, is positive thereafter and blows up in finite time.
\end{proposition}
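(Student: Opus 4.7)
My plan is to apply Proposition~\ref{propIII:Brunner} to~\eqref{eq:vie} after a harmless rewriting. Since $G(u,\cdot)$ is a quadratic polynomial, hence locally Lipschitz, Theorem~\ref{thm:vie ex} already yields a unique continuous solution $f(u,\cdot)$ on a maximal right-open interval $[0,T^\ast)$, and setting $t=0$ in~\eqref{eq:vie} gives $f(u,0)=0$.

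For positivity on $(0,T^\ast)$, I would exploit the two features specific to case~\ref{itemIII: A}: first, $G(u,0)=e_0(u)^2-e_1(u)=c_3c_1(u)>0$; second, since $e_0(u)\geq 0$, $G(u,\cdot)$ is strictly increasing on $[-e_0(u),\infty)\supset[0,\infty)$. Near $t=0$, continuity of $f$ combined with $G(u,0)>0$ gives $G(u,f(u,s))\geq G(u,0)/2$ on a small interval, so~\eqref{eq:vie} yields $f(u,t)\geq \tfrac{G(u,0)}{2\Gamma(\alpha+1)}t^\alpha>0$ for small $t>0$. Positivity then extends to all of $(0,T^\ast)$ by a standard continuity argument: letting $T_+$ denote the supremum of times up to which $f(u,\cdot)>0$, if $T_+<T^\ast$ then continuity forces $f(u,T_+)=0$, but monotonicity of $G(u,\cdot)$ on $[0,\infty)$ together with~\eqref{eq:vie} forces $f(u,T_+)\geq \tfrac{G(u,0)}{\Gamma(\alpha+1)}T_+^\alpha>0$, a contradiction.

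For the blow-up, I would split off the constant term of $G(u,\cdot)$: writing $\phi(t) := \tfrac{c_3c_1(u)}{\Gamma(\alpha+1)}t^\alpha$ and $\tilde G(w) := G(u,w)-G(u,0) = w^2 + 2e_0(u)w$, equation~\eqref{eq:vie} takes the form
\begin{equation*}
f(u,t) = \phi(t) + \int_0^t \frac{(t-s)^{\alpha-1}}{\Gamma(\alpha)}\,\tilde G(f(u,s))\,ds.
\end{equation*}
Because $e_0(u)\geq 0$, $\tilde G\colon[0,\infty)\to[0,\infty)$ is continuous, vanishes at zero, is strictly increasing, and satisfies $\tilde G(w)/w\to\infty$; the kernel $z^{\alpha-1}/\Gamma(\alpha)$ is of the form required by (K); and $\phi$ is continuous, non-decreasing, with $\phi(t)\to\infty$. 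The integrability criterion~\eqref{eqIII:blowup-int} reduces to
\begin{equation*}
\int_U^\infty \frac{dw}{w\,(w+2e_0(u))^{1/\alpha}},
\end{equation*}
which is finite since $1/\alpha>1$ forces the integrand to decay like $w^{-1-1/\alpha}$. Proposition~\ref{propIII:Brunner} then delivers blow-up in finite time.

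The one point that needs care is that hypothesis (P) of Proposition~\ref{propIII:Brunner} demands $\phi$ take values in $(0,\infty)$, whereas ours vanishes at $t=0$. I expect this to be the only technical irritant, and the natural fix is to restart the equation at any $t_0\in(0,T^\ast)$: positivity just established gives $f(u,t_0)>0$, so the free term of the shifted VIE is strictly positive from time~$0$ onwards, and Proposition~\ref{propIII:Brunner} applies without modification to the tail.
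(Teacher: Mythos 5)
Your proof is correct and essentially identical to the paper's: the same splitting $G(u,\cdot)=c_3c_1(u)+\tilde G$ with $\tilde G(w)=w^2+2e_0(u)w$ (note $e_0^2-e_1=c_3c_1$), followed by Proposition~\ref{propIII:Brunner} and the bound $\int_U^\infty w^{-1-1/\alpha}\,dw<\infty$; the only cosmetic difference is that you prove positivity by hand where the paper cites Lemma~2.4 of~\cite{BrYa12}. One caution about your proposed restart at $t_0$: the shifted free term is $\phi(t_0+t)+\int_0^{t_0}k(t_0+t-s)\tilde G(f(s))\,ds$, whose second summand is \emph{decreasing} in $t$ (with derivative tending to $-\infty$ as $t\downarrow0$), so the non-decreasingness required by (P) is not automatic for the restarted equation; the paper simply applies Proposition~\ref{propIII:Brunner} with $\phi(0)=0$ without comment, reflecting that the blow-up criterion does not really hinge on strict positivity at $t=0$.
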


%-----------------------------------------------------------------------------------------------------

\begin{proof}
  Fix $u\in\mathbb{R}$ such that $c_1(u)>0$ and $e_0(u)\geq 0$. Note that $e_0^2-e_1>0$ in this case.
   (Here and in the following, we will often suppress~$u$ in the notation.)
   If we write the Volterra integral equation~\eqref{eq:vie} in the form
  \[
    f(t) = \phi(t) + \int_0^t k(t-s) \bar G(f(s))\,ds
  \]
  with non-linearity $\bar G(w) = w^2 + 2e_0 w$ and $\phi(t) = \frac{e_0^2-e_1}{\Gamma(1+\alpha)} t^\alpha$, using~\eqref{eqIII:G1} and~\eqref{eqIII:frac1a}, then the conditions $c_1>0$ and $e_0\geq 0$ guarantee that $\phi$ and $\bar G$ are positive and strictly increasing on $(0,\infty)$ with $\phi(0)=\bar G(0)=0$. Hence, the solution~$f$ is positive for positive values, and $f(0)=0$. 
  (Positivity follows from Lemma~2.4 in~\cite{BrYa12}, or from Lemma~3.2.11 in~\cite{Br17}.)
  It is easy to check that all the assumptions (G1), (G2), (P) and (K) of Proposition~\ref{propIII:Brunner} are satisfied. Moreover, $\lim_{t\to\infty}\phi(t)=\infty$ and
  \begin{align*}
    \int^\infty_U \left(\frac{w}{\bar G(w)}\right)^{1/\alpha}\frac{dw}{w} \leq\int^\infty_U w^{-1-1/\alpha}\,dw <\infty
  \end{align*}
  for all $U>0$. By Proposition~\ref{propIII:Brunner}, the solution~$f$ blows up in finite time.
\end{proof}

%-----------------------------------------------------------------------------------------------------

In case~\ref{itemIII: B}, Proposition~\ref{propIII:Brunner} cannot be applied directly to the solution~$f$ of~\eqref{eq:vie}. Hence, the Volterra integral equation has to be modified in order to satisfy the assumptions of Proposition~\ref{propIII:Brunner} in a way that $f$ is still a subsolution of the modified equation, i.e.\ $f$ satisfies~\eqref{eq:vie} with ``$\geq$'' instead of ``$=$''. First, we provide a comparison lemma for solutions and subsolutions.

%-----------------------------------------------------------------------------------------------------

\begin{lemma}\label{eqIII:inequality}
  Let $G\colon[0,\infty)\to(0,\infty)$ be a strictly increasing, continuous function and $T>0$. Suppose that $g$ is the unique continuous solution of the Volterra integral equation
  \[
    g(t) = \int_0^t k(t-s) G(g(s))\,ds, \quad t\in[0,T],
  \]
  where~$k$ satisfies condition~(K) from Proposition~\ref{propIII:Brunner}.
  If $f$ is a continuous subsolution,
  \[
    f(t) \geq \int_0^t k(t-s)G(f(s))\,ds, \quad t\in[0,T],
  \]
  then $f(t)\geq g(t)$ holds for all $t\in[0,T]$.
\end{lemma}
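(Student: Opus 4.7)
The plan is to prove the inequality by a monotone successive-approximation (Picard) argument that exploits the monotonicity of $G$ and the non-negativity of $k$. Define $g_0 \equiv 0$ and, inductively,
\[
  g_{n+1}(t) := \int_0^t k(t-s)\,G(g_n(s))\,ds, \qquad t\in[0,T].
\]
Since $G(0)>0$, we have $g_1(t)>0=g_0(t)$ for $t>0$; the monotonicity of $G$ and the sign of $k$ then yield $g_{n+1}\ge g_n$ by an easy induction, so the sequence $(g_n)$ is pointwise non-decreasing, and each $g_n$ is continuous (being the convolution of a bounded measurable function with the locally integrable kernel $k$ from~(K)).

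The next step is to show by induction that $g_n(t)\le f(t)$ for every $n$ and every $t\in[0,T]$. The base case $g_0=0\le f$ is the (implicit) non-negativity of $f$ forced by $G$ being defined only on $[0,\infty)$. For the inductive step, if $g_n\le f$ on $[0,T]$, then $G(g_n)\le G(f)$ by monotonicity of $G$, and combining the subsolution inequality for $f$ with $k\ge 0$ gives
\[
  f(t)\;\ge\;\int_0^t k(t-s)\,G(f(s))\,ds\;\ge\;\int_0^t k(t-s)\,G(g_n(s))\,ds\;=\;g_{n+1}(t).
\]
Therefore $0\le g_n\le f$ for every $n$, and the pointwise limit $g_\infty(t):=\sup_n g_n(t)$ exists and is bounded by $\|f\|_{C[0,T]}$.

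Monotone convergence and the continuity of $G$ let us pass to the limit in the recursion, so $g_\infty$ satisfies
\[
  g_\infty(t)=\int_0^t k(t-s)\,G(g_\infty(s))\,ds,\qquad t\in[0,T].
\]
Since $s\mapsto G(g_\infty(s))$ is bounded on $[0,T]$ by $G(\|f\|_{C[0,T]})$ and $k$ is locally integrable, the right-hand side is continuous in $t$, so $g_\infty\in C[0,T]$. The hypothesis that $g$ is the unique continuous solution then forces $g_\infty=g$, and the chain $g=g_\infty\le f$ finishes the proof.

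The main technical point is this last continuity argument for the limit, which hinges on the fact that convolving the singular kernel $k$ with a bounded measurable function yields a continuous function; everything else is a direct consequence of monotonicity and the monotone convergence theorem. A direct contradiction approach, setting $t^*:=\inf\{t\in[0,T]:f(t)<g(t)\}$ and estimating on $[t^*,t^*+\varepsilon]$, would also work, but it seems to require either local Lipschitz continuity of $G$ or a fractional Gronwall inequality, so the monotone iteration route is cleaner under the bare continuity hypothesis.
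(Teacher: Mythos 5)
Your proof is correct, but it takes a genuinely different route from the paper. The paper shifts the subsolution in time, setting $f_c(t):=f(t+c)$ so that $f_c(0)=f(c)>0=g(0)$, and then runs a first-crossing-point contradiction: at the first $t$ with $g(t)=f_c(t)$ one would have $0\geq\int_0^t k(t-s)\bigl(G(f_c(s))-G(g(s))\bigr)\,ds>0$ by the \emph{strict} monotonicity of $G$; letting $c\downarrow0$ then gives $g\leq f$. You instead build the solution from below by monotone Picard iteration $g_{n+1}=\int_0^\cdot k(\cdot-s)G(g_n(s))\,ds$ starting from $g_0\equiv0$, trap the whole sequence under $f$ via the subsolution inequality, pass to the limit by monotone convergence, and invoke the assumed uniqueness of the continuous solution to identify the limit with $g$. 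The trade-off is instructive: the paper's argument never uses uniqueness of $g$ (it would apply to any continuous solution) but needs $G$ strictly increasing, whereas yours leans on uniqueness but would survive with $G$ merely non-decreasing; both hypotheses are available in the statement, so either choice is legitimate. You also correctly isolate the one real technical point of your route, namely that convolving the locally integrable kernel $k$ with a bounded measurable function yields a continuous function (continuity of translation in $L^1$ applied to $k$), which is what certifies that the monotone limit is a \emph{continuous} solution and hence equal to $g$. Your closing remark is apt as well: a naive crossing argument stalls at $t=0$ where $f(0)=g(0)=0$ is possible, and the paper's $c$-shift is precisely the device that restores strict separation at the initial time.
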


%-----------------------------------------------------------------------------------------------------

\begin{proof}
  %The idea of the proof is based on the proof of Lemma~2.4 in~\cite{BrYa12}.  
  For any $c\in(0,T)$ define $f_c(t):=f(t+c)$ for $t\in[0,T-c]$. From the positivity of $G$, it follows that $f_c(0)=f(c)>0$ and 
  \[
    f_c(t)\geq\int_0^t k(t-s)G(f_c(s))\,ds, \quad t\in[0,T-c].
  \]
  Since $g(0)=0$, it follows that $g(0)<f_c(0)$. We want to show $g<f_c$ on the whole interval $[0,T-c]$. Therefore, suppose that $t\in(0,T-c]$ exists such that $0\leq g(s)<f_c(s)$ for all $s\in[0,t)$ and $g(t)=f_c(t)$. However, because $G$ is strictly increasing, we have
  \[
    0 = f_c(t)-g(t) \geq \int_0^t k(t-s)\big(G(f_c(s))-G(g(s))\big)\,ds > 0,
  \]
  which is a contradiction. Hence, the inequality $g(t)<f_c(t) = f(t+c)$ holds for all $t\in[0,T-c]$. Since $c\in(0,T)$ was arbitrary, the result follows easily.
\end{proof}

%-----------------------------------------------------------------------------------------------------

\begin{proposition}\label{prop:case2}
  In case~\ref{itemIII: B}, the solution $f$ of \eqref{eq:vie} starts at $0$, is positive thereafter and blows up in finite time.
\end{proposition}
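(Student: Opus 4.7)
The plan is to mimic the strategy of Proposition~\ref{prop:case1}: replace the non-linearity~$G$ in \eqref{eq:vie} by a simpler minorant $\tilde G$ that \emph{is} strictly increasing, apply Proposition~\ref{propIII:Brunner} to an auxiliary equation, and use Lemma~\ref{eqIII:inequality} as a comparison principle. The obstruction compared with case~(A) is that $e_0<0$ here, so the function $w\mapsto w^2+2e_0 w$ obtained by peeling off the constant $e_0^2-e_1$ is no longer monotone on $[0,\infty)$ and cannot serve as~$\tilde G$ directly.

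The positivity of $f$ on $(0,\infty)$, and $f(0)=0$, come for free: in case~(B) the quadratic $G(w)=(w+e_0)^2-e_1$ satisfies $G(w)\ge -e_1=|e_1|>0$ for every $w\in\mathbb R$, so \eqref{eq:vie} gives $f(t)\ge |e_1|t^\alpha/\Gamma(1+\alpha)>0$ for $t>0$.

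For blow-up, I would first find constants $c',\epsilon>0$ with $G(w)\ge c'w^2+\epsilon$ for every $w\ge 0$. Completing the square in $G(w)-c'w^2=(1-c')w^2+2e_0w+(e_0^2-e_1)$ shows that its minimum over~$\mathbb R$ equals $|e_1|-c'e_0^2/(1-c')$, which is strictly positive as soon as $c'<|e_1|/(e_0^2+|e_1|)$; any such $c'$, together with $\epsilon$ the resulting minimum value, does the job. Setting $\tilde G(w):=c'w^2+\epsilon$, the function $\tilde G:[0,\infty)\to(0,\infty)$ is continuous and strictly increasing, so Lemma~\ref{eqIII:inequality} applies with $k(z)=z^{\alpha-1}/\Gamma(\alpha)$. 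Since $G(f(s))\ge\tilde G(f(s))$ pointwise, $f$ is a continuous subsolution of $g(t)=\int_0^tk(t-s)\tilde G(g(s))\,ds$, whence $f\ge g$ on the common interval of existence.

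It remains to show that $g$ blows up. Rewriting the $g$-equation as $g(t)=\epsilon t^\alpha/\Gamma(1+\alpha)+\int_0^t k(t-s)\,c'g(s)^2\,ds$ puts it in the form required by Proposition~\ref{propIII:Brunner}, with forcing $\phi(t)=\epsilon t^\alpha/\Gamma(1+\alpha)$ and non-linearity $\hat G(w)=c'w^2$; assumptions~(G1), (G2), (P), (K) are immediate, $\phi(t)\to\infty$, and the integrability criterion~\eqref{eqIII:blowup-int} reduces to $\int_U^\infty w^{-1-2/\alpha}\,dw<\infty$. Hence $g$, and therefore $f$, blows up in finite time. The main hurdle is really the first step — producing the minorant $\tilde G$ that is simultaneously dominated by $G$, strictly monotone on $[0,\infty)$, and still superlinear — and it is precisely the strict negativity of $e_1$ in case~(B) that supplies the positive constant $\epsilon$ needed to make Lemma~\ref{eqIII:inequality} applicable.
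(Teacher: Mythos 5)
Your proof is correct and takes essentially the same route as the paper --- minorize $G$ by a positive, strictly increasing non-linearity, apply Proposition~\ref{propIII:Brunner} to the auxiliary equation, and transfer the blow-up to $f$ via the comparison Lemma~\ref{eqIII:inequality} --- the only difference being that you use the global quadratic minorant $c'w^2+\epsilon$, whereas the paper glues a linear segment onto $G$ itself (the $\bar G_a$ of~\eqref{eqIII:G_a}, a choice that is later reused with $a\uparrow -e_1$ to sharpen the upper bound in Theorem~\ref{thmIII:T*-upper}). One harmless arithmetic slip: for $\hat G(w)=c'w^2$ the criterion~\eqref{eqIII:blowup-int} reduces to $\int_U^\infty w^{-1-1/\alpha}\,dw<\infty$, not $\int_U^\infty w^{-1-2/\alpha}\,dw<\infty$; the integral converges either way, so the conclusion stands.
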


%-----------------------------------------------------------------------------------------------------

\begin{proof}
  Fix $u\in\mathbb{R}$ such that $c_1(u)>0$, $e_0(u)<0$ and $e_1(u)<0$. Note that in this case, the non-linearity $G$ is obviously positive by \eqref{eqIII:G1}. However, $G$ is strictly \emph{decreasing} on $[0,-e_1]$. To deal with this problem, let $0<a<-e_1$ and define the modified non-linearity $\bar G_a$ as
  \begin{equation}\label{eqIII:G_a}
    \bar G_a(w)=
    \begin{cases}
      w\frac{a+e_1}{e_0} + a, & 0\leq w <-e_0, \\
      G(w), & w \geq -e_0.
    \end{cases}
  \end{equation}
  Then $\bar G_a$ is a positive, strictly increasing, Lipschitz continuous function that starts at $a$ and $\bar G_a\leq G$. Let $\bar f$ be the unique continuous solution (recall Theorem~\ref{thm:vie ex})
  of the Volterra integral equation
  \begin{equation}\label{eq:vie bar}
    \bar f(t) = \int_0^t k(t-s) \bar G_a(\bar f(s))\,ds = \phi(t) + \int_0^t k(t-s) \bar G(\bar f(s))\,ds
  \end{equation}
  with $\bar G = \bar G_a - a$ and $\phi(t)=\frac{a}{\Gamma(1+\alpha)}t^\alpha$.
  Note that the second equality in~\eqref{eq:vie bar} follows from~\eqref{eqIII:frac1a}. Due to the positivity of $\phi$ and $\bar G$ on $(0,\infty)$, the solution $\bar f$ is positive on $(0,\infty)$ as well. The functions $\phi$, $\bar G$ and $k$ satisfy the assumptions (G1), (G2), (P) and (K) in Proposition~\ref{propIII:Brunner}. Furthermore, $\lim_{t\to\infty}\phi(t)=\infty$ and $\bar G$ satisfies \eqref{eqIII:blowup-int}. By Proposition~\ref{propIII:Brunner}, $\bar f$ blows up in finite time. Because $f$ satisfies \eqref{eq:vie} and $\bar G_a\leq G$, it follows that $f$ is a subsolution of the modified Volterra integral equation, i.e.,
  \[
    f(t) \geq \int_0^t k(t-s)\bar G_a(f(s))\,ds.
  \]
  Now, Lemma~\ref{eqIII:inequality} implies $f\geq \bar f$. Consequently, $f$ blows up as well.
\end{proof}

%-----------------------------------------------------------------------------------------------------

Cases~\ref{itemIII: C} and~\ref{itemIII: D} are the cases where the solution $f$ of \eqref{eq:vie} does not blow up in finite time. In fact, $f$ does not blow up at all, as we will see. The following lemma provides the key argument for both cases.  

%-----------------------------------------------------------------------------------------------------

\begin{lemma}\label{lemIII:Case3}
  Let $G\colon[0,\infty)\to[0,\infty)$ be a Lipschitz continuous function that is positive on $[0,a)$ and $G\equiv 0$ on $[a,\infty)$ for an $a>0$. Then the unique continuous solution $f$
  of the Volterra integral equation
  \[
    f(t) = \int_0^t k(t-s) G(f(s))\,ds
  \]
  satisfies $0\leq f(t)\leq a$ for all $t\geq 0$.
\end{lemma}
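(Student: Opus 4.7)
The plan is to separate the three issues (global existence, lower bound, upper bound), with the latter being the nontrivial one. For existence, observe that $G$ is bounded on $[0,\infty)$ (it is Lipschitz and vanishes on $[a,\infty)$, hence $M := \sup G < \infty$), so Theorem~\ref{thm:vie ex} yields a unique continuous solution on some maximal $[0,T^*)$, and the crude bound $|f(t)|\le M\int_0^t k(t-s)\,ds$ rules out blow-up, forcing $T^*=\infty$. Clearly $f(0)=0$. The lower bound $f\ge 0$ then follows from the standard positivity result for VIEs with non-negative kernel and non-negative non-linearity (Lemma~2.4 in~\cite{BrYa12}, as already used in the proof of Proposition~\ref{prop:case1}).

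For the upper bound I would argue by contradiction. Suppose $f(t_1)>a$ for some $t_1>0$. Since $f(0)=0<a$ and $f$ is continuous,
\[
  t_0 := \sup\{\, t\in[0,t_1] : f(t)\le a\,\}
\]
lies in $(0,t_1)$, with $f(t_0)=a$ and $f(t)>a$ for all $t\in(t_0,t_1]$. The decisive observation is that on $(t_0,t)$ for any such $t$, one has $f(s)>a$, hence $G(f(s))=0$ by hypothesis, so the VIE collapses to
\[
  f(t) = \int_0^{t_0} k(t-s)\,G(f(s))\,ds.
\]

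Since $k(z)=z^{\alpha-1}/\Gamma(\alpha)$ with $\alpha\in(0,1)$ is strictly decreasing, and $t-s>t_0-s\ge 0$ for all $s\in[0,t_0]$ and $t>t_0$, we get $k(t-s)\le k(t_0-s)$; combined with $G\circ f\ge 0$,
\[
  f(t) \le \int_0^{t_0} k(t_0-s)\,G(f(s))\,ds = f(t_0) = a,
\]
contradicting $f(t)>a$. Thus no such $t_1$ exists, proving $f(t)\le a$. The part that requires some thought is spotting the correct argument: once $f$ would exceed $a$ the non-linearity switches off on that sub-interval, so the integral defining $f(t)$ is only over the ``good'' range $[0,t_0]$, and then monotonicity of the fractional kernel is exactly what is needed to reduce $f(t)$ back to $f(t_0)$ and obtain the contradiction.
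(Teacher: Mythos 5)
Your argument is correct and is essentially the paper's own proof: the same contradiction setup with $t_0$ the last time $f$ equals $a$, the observation that $G\circ f$ vanishes on $(t_0,t)$, and the monotonicity of the fractional kernel to bound $\int_0^{t_0}k(t-s)G(f(s))\,ds$ by $\int_0^{t_0}k(t_0-s)G(f(s))\,ds=f(t_0)=a$. The only (harmless) addition is your explicit remark on global existence via boundedness of $G$, which the paper leaves implicit.
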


%-----------------------------------------------------------------------------------------------------

\begin{proof}
  The non-negativity of $G$ implies $f \geq 0$. Suppose $t>0$ exists such that $f(t)>a$. By the continuity of $f$, there exists $0<t_0<t$ that satisfies $f(t_0) = a$ and $f(s)>a$ for all $s\in(t_0,t)$. From $G\equiv 0$ on $[a,\infty)$, we have
  \[
    \int_{t_0}^t k(t-s) G(f(s))\,ds = 0.
  \]
  Since $G$ is non-negative and $k$ is decreasing,
  \begin{align*}
    0&< f(t)-f(t_0) \\
    &= \int_{t_0}^t k(t-s) G(f(s))\,ds + \int_0^{t_0}\big(k(t-s) - k(t_0-s)\big) G(f(s))\,ds \\
    &= \int_0^{t_0}\big(k(t-s) - k(t_0-s)\big) G(f(s))\,ds \leq 0,
  \end{align*}
  which is a contradiction. Therefore, $f$ satisfies $0\leq f(t)\leq a$ for all $t\geq 0$. 
\end{proof}

%-----------------------------------------------------------------------------------------------------

\begin{proposition}\label{prop:case3}
  In case~\ref{itemIII: C}, the solution $f$ of \eqref{eq:vie} is non-negative and bounded, and exists globally.
\end{proposition}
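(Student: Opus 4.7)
The strategy is to reduce case~\ref{itemIII: C} to the setting of Lemma~\ref{lemIII:Case3} by truncating the nonlinearity $G(u,\cdot)$ at its smaller root, and then to identify the truncated solution with $f(u,\cdot)$ via uniqueness.

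First I would describe the shape of $G(u,\cdot)$ in case~\ref{itemIII: C}. Since $e_1(u)\geq 0$, both roots $-e_0(u)\pm\sqrt{e_1(u)}$ of $G(u,\cdot)$ are real. Because $e_0(u)<0$ and $e_1(u)=e_0(u)^2-c_1(u)c_3\leq e_0(u)^2$, the smaller root
\[
  a:=-e_0(u)-\sqrt{e_1(u)}
\]
is non-negative. Moreover $G(u,0)=c_1(u)c_3>0$ by~\eqref{eqIII:G1}, and the parabola $G(u,\cdot)$ has its minimum at $w=-e_0(u)\geq a$, so $G(u,\cdot)$ is positive on $[0,a)$ and vanishes at $a$.

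Next I would introduce the truncated nonlinearity
\[
  \tilde G(w):=\begin{cases} G(u,w), & w\in[0,a], \\ 0, & w>a, \end{cases}
\]
which is non-negative, positive on $[0,a)$, identically zero on $[a,\infty)$, and globally Lipschitz (the polynomial piece is smooth on the compact interval $[0,a]$ and matches the zero extension continuously at $w=a$ because $G(u,a)=0$). Hence the hypotheses of Lemma~\ref{lemIII:Case3} are met, and that lemma produces a continuous solution $\tilde f$ on $[0,\infty)$ of
\[
  \tilde f(t)=\int_0^t k(t-s)\tilde G(\tilde f(s))\,ds
\]
with $0\leq \tilde f(t)\leq a$ for all $t\geq 0$.

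Finally I would identify $\tilde f$ with $f(u,\cdot)$. Since the range of $\tilde f$ lies in $[0,a]$, where $\tilde G$ coincides with $G(u,\cdot)$, the function $\tilde f$ is also a continuous solution of the original equation~\eqref{eq:vie} on $[0,\infty)$. The uniqueness assertion in Theorem~\ref{thm:vie ex} then forces $\tilde f$ to agree with $f(u,\cdot)$ on the maximal right-open interval $[0,T^\ast)$ of existence of $f(u,\cdot)$, and the fact that $\tilde f$ already extends continuously to $[0,\infty)$ contradicts maximality unless $T^\ast=\infty$. Thus $f(u,\cdot)=\tilde f$ globally, yielding non-negativity, the bound $f(u,\cdot)\leq a$, and global existence all at once.

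I do not anticipate a serious obstacle; the only technical point is the Lipschitz continuity of $\tilde G$ across $w=a$, which is immediate from $G(u,a)=0$. Everything else is a direct invocation of Lemma~\ref{lemIII:Case3} together with the uniqueness and maximal-continuation part of Theorem~\ref{thm:vie ex}.
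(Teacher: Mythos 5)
Your proposal is correct and follows essentially the same route as the paper: truncate $G(u,\cdot)$ to zero beyond its smallest positive root $a=-e_0(u)-\sqrt{e_1(u)}$, apply Lemma~\ref{lemIII:Case3} to the truncated equation, and identify the bounded solution with $f(u,\cdot)$ by uniqueness. The only cosmetic difference is that the paper notes $e_1<e_0^2$ (strict, since $c_1(u)>0$) to get $a>0$ directly, whereas you deduce it from $G(u,0)>0$; both are fine.
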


%-----------------------------------------------------------------------------------------------------

\begin{proof}
  Fix $u\in\mathbb{R}$ such that $c_1(u)>0$, $e_0(u)<0$ and $e_1(u)\geq 0$.
  Note that the inequality $0\leq e_1=e_0^2-c_1c_3<e_0^2$ implies $a:=-e_0-\sqrt{e_1}>0$. Moreover, from \eqref{eqIII:G1}, it follows that $a$ is the smallest positive root of $G$. Define the non-linearity~$\bar G$ as
  \[
    \bar G(w):=
    \begin{cases}
      G(w), & 0\leq w \leq a, \\
      0, & w > a.
    \end{cases}
  \]
  Then $\bar G$ is a non-negative, Lipschitz continuous function that starts at $e_0^2-e_1>0$. Therefore, Lemma~\ref{lemIII:Case3} yields that the unique continuous solution $\bar f$ of
  \[
    \bar f(t) = \int_0^t k(t-s)\bar G(\bar f(s))\,ds
  \]
  is bounded with $0\leq\bar f(t)\leq a$ for all $t\geq 0$. Since $\bar G = G$ on $[0,a]$, the function $\bar f$ solves the original Volterra integral equation
  \[
    \bar f(t) = \int_0^t k(t-s) G(\bar f(s))\,ds
  \]
  and from the uniqueness of the solution we obtain $f=\bar f$.
\end{proof}

%-----------------------------------------------------------------------------------------------------

\begin{proposition}\label{prop:case4}
  In case~\ref{itemIII: D}, the solution~$f$ of~\eqref{eq:vie} is non-positive and bounded, and exists globally.
\end{proposition}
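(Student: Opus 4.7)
The plan is to mirror the strategy of Proposition~\ref{prop:case3}, with the twist that in case~\ref{itemIII: D} the non-linearity $G(u,\cdot)$ starts out \emph{non-positive} at $w=0$, so the solution wants to go downward rather than upward. I will therefore reduce to the previous setting by flipping signs.

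First I would gather the geometric facts about $G(u,\cdot)$. Since $c_1(u)\leq 0$, one has $e_1(u)=e_0(u)^2-c_3c_1(u)\geq e_0(u)^2\geq 0$, so $\sqrt{e_1}\geq|e_0|$ and the roots $-e_0\pm\sqrt{e_1}$ of $G(u,\cdot)$ satisfy $b:=-e_0-\sqrt{e_1}\leq 0\leq a:=-e_0+\sqrt{e_1}$, with $G(u,0)=c_3c_1(u)\leq 0$ and $G(u,w)\leq 0$ throughout $[b,a]$. The degenerate case $c_1(u)=0$ gives $G(u,0)=0$, so $f\equiv 0$ solves~\eqref{eq:vie} and is the unique continuous solution by Theorem~\ref{thm:vie ex}; this case is handled in one line. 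So I may assume $c_1(u)<0$, which gives $b<0<a$.

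Next, introduce $g:=-f$. A direct substitution into~\eqref{eq:vie} shows that $g$ satisfies
\begin{equation*}
  g(t)=\int_0^t k(t-s)\,\widetilde G(g(s))\,ds,\qquad \widetilde G(w):=-G(u,-w)=e_1-(w-e_0)^2.
\end{equation*}
Then $\widetilde G$ is a downward parabola with roots $e_0\pm\sqrt{e_1}$, so $\widetilde G\geq 0$ on $[e_0-\sqrt{e_1},\,e_0+\sqrt{e_1}]$; in particular $\widetilde G(0)=-c_3c_1(u)>0$ and $\widetilde G\geq 0$ on $[0,\tilde a]$ with $\tilde a:=e_0+\sqrt{e_1}=-b>0$. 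Now I truncate exactly as in the proof of Proposition~\ref{prop:case3}: set
\begin{equation*}
  \bar G(w):=\begin{cases}\widetilde G(w), & 0\leq w\leq \tilde a,\\ 0, & w>\tilde a,\end{cases}
\end{equation*}
which is non-negative, Lipschitz continuous on $[0,\infty)$, strictly positive on $[0,\tilde a)$ (by $\widetilde G(0)>0$ and the shape of $\widetilde G$), and vanishes on $[\tilde a,\infty)$. Lemma~\ref{lemIII:Case3} then yields a unique global continuous solution $\bar f$ of the truncated VIE satisfying $0\leq\bar f(t)\leq\tilde a$ for all $t\geq 0$.

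The concluding step is to identify $\bar f$ with $g$. Since $\bar G\equiv\widetilde G$ on $[0,\tilde a]$ and $\bar f$ stays in that interval, $\bar f$ also solves the VIE driven by $\widetilde G$. By the uniqueness part of Theorem~\ref{thm:vie ex}, $g$ coincides with $\bar f$ on the maximal right-open interval of existence of $g$; since $\bar f$ is globally defined and bounded in $[0,\tilde a]$, so is $g$, and consequently $f=-g$ is non-positive, bounded (with values in $[b,0]$), and exists globally. I expect the main point requiring care to be this last identification: one must observe that $\widetilde G$ is only non-negative on $[0,\tilde a]$, so the reduction to Lemma~\ref{lemIII:Case3} works only after the truncation, and the identification of $g$ with $\bar f$ must be justified via the local Lipschitz property of $\widetilde G$ and the uniqueness statement of Theorem~\ref{thm:vie ex}, rather than by applying Lemma~\ref{lemIII:Case3} directly to $\widetilde G$.
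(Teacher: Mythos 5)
Your proof is correct and follows essentially the same route as the paper: reflect $f$ to $g=-f$, truncate the reflected non-linearity at its first positive zero, invoke Lemma~\ref{lemIII:Case3}, and identify the truncated solution with $g$ by uniqueness. If anything, you are slightly more careful than the printed proof, which truncates at $\sqrt{e_1}-e_0$ (the positive root of $G$) rather than at $e_0+\sqrt{e_1}$ (the positive root of $w\mapsto -G(-w)$, which is what continuity and non-negativity of the truncated function actually require), and you also treat the degenerate case $c_1(u)=0$ explicitly.
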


%-----------------------------------------------------------------------------------------------------

\begin{proof}
  Fix $u\in\mathbb{R}$ such that $c_1(u)\leq 0$, which is equivalent to $u\in[0,1]$.
  Note that $e_1=e_0^2-c_1c_3>e_0^2>0$ implies $a:=\sqrt{e_1}-e_0>0$. Moreover, from \eqref{eqIII:G1}, it follows that~$a$ is the smallest positive root of~$G$. Define $f_-:=-f$, which satisfies
  \begin{equation}\label{eqIII:h-}
    f_-(t) = -\int_0^t k(t-s)G(-f_-(s))\,ds.
  \end{equation}
  If we define the non-linearity $\bar G$ as
  \[
    \bar G(w):=
    \begin{cases}
      -G(-w), & 0\leq w \leq a, \\
      0, & w > a,
    \end{cases}
  \]
  then $\bar G$ is a non-negative, Lipschitz continuous function that starts at $e_1-\nolinebreak e_0^2>\nolinebreak 0$. With Lemma~\ref{lemIII:Case3} we obtain that the unique continuous solution $\bar f$ of
  \[
    \bar f(t) = \int_0^t k(t-s)\bar G(\bar f(s))\,ds
  \]
  is bounded with $0\leq\bar f(t)\leq a$ for all $t\geq 0$. Furthermore, $\bar f$ solves \eqref{eqIII:h-} because $\bar G(w) = -G(-w)$ for all $w\in[0,a]$. The uniqueness of the solution yields
  \[
    \bar f = f_-= -f.
  \]
  Hence, the solution $f$ is bounded with $-a\leq f(t)\leq 0$.
\end{proof}

%-----------------------------------------------------------------------------------------------------

We have shown that (A) and (B) are exactly the cases in which the solution $f$ of the Volterra integral equation~\eqref{eq:vie}, and thus the solution~$\psi$ of the fractional Riccati differential equation~\eqref{eqIII:Ha1} with initial value~\eqref{eqIII:Ha2}, blows up in finite time. 
The following lemma shows that blow-up of~$\psi$ is equivalent to
blow-up of the right-hand side of~\eqref{eq:mgf}. 
% which completes the proof of Theorem~\ref{thm:finite}.

%-----------------------------------------------------------------------------------------------------

\begin{lemma}\label{le:blowup}
  If $f$ is a non-negative, continuous function that blows up in finite time with explosion time $\hat T$, then $I^\alpha_tf$ blows up in finite time as well, with the same explosion time $\hat T$.  
  If $f$ is a bounded continuous function, then $I^\alpha_t f$ does not blow up in finite time. 
\end{lemma}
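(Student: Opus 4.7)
The plan is to handle the two assertions of the lemma separately. For the second (boundedness), the direct estimate
\[
|I^\alpha_t f(t)| \;\leq\; \frac{\sup_{s\in[0,T]}|f(s)|\cdot t^\alpha}{\Gamma(\alpha+1)},\qquad t\in[0,T],
\]
shows that $I^\alpha f$ stays finite on every compact interval and therefore cannot blow up in finite time.

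For the blow-up transfer, I would first observe that $I^\alpha_t f(t)$ is finite for every $t<\hat T$: since $f$ is continuous and hence bounded on $[0,t]$, and the kernel $(t-s)^{\alpha-1}$ is integrable, the defining integral converges. The substantive step is to prove $I^\alpha_t f(t)\to\infty$ as $t\uparrow\hat T$. The natural device is a localized lower bound: since $f(s)\to\infty$, for each $M>0$ one can choose $\delta=\delta(M)>0$ with $f(s)\geq M$ on $[\hat T-\delta,\hat T)$, and restricting the integration to this window gives
\[
I^\alpha_t f(t) \;\geq\; \frac{M}{\Gamma(\alpha+1)}(t-\hat T+\delta)^{\alpha},\qquad t\in(\hat T-\delta,\hat T).
\]
Equivalently, applying Fatou's lemma to the non-negative integrands $(t-s)^{\alpha-1}f(s)\mathbf{1}_{\{s\leq t\}}$ as $t\uparrow\hat T$ yields
\[
\Gamma(\alpha)\,\liminf_{t\to\hat T^-} I^\alpha_t f(t) \;\geq\; \int_0^{\hat T}(\hat T-s)^{\alpha-1}f(s)\,ds,
\]
reducing the claim to divergence of the right-hand side.

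The main obstacle is precisely this last step: arguing that the Fatou integral is infinite (equivalently, that $M\,\delta(M)^\alpha\to\infty$), which requires a quantitative lower bound on the rate at which $f$ blows up near $\hat T$. This is where the Volterra-equation structure from Propositions~\ref{prop:case1}--\ref{prop:case2} would enter: the quadratic nonlinearity $G(w)\sim w^2$ in~\eqref{eqIII:G1}, combined with the fixed-point identity $f=I^\alpha[G(f)]$ from~\eqref{eq:vie}, forces the asymptotic rate $f(t)\sim C(\hat T-t)^{-\alpha}$ at the explosion time (the scaling that balances both sides of~\eqref{eq:vie}). With this rate the integrand above behaves like $(\hat T-s)^{-1}$ near $s=\hat T$, which is non-integrable, and the divergence follows, completing the blow-up transfer.
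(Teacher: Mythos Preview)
Your treatment of the bounded case matches the paper's, and your localized lower bound
\[
I^\alpha_t f(t)\;\ge\;\frac{M}{\Gamma(\alpha+1)}(t-\hat T+\delta)^\alpha
\]
is exactly what an honest computation of $\int_{\hat T-\delta}^{t}k(t-s)f(s)\,ds$ gives. The paper, however, writes at this step
\[
\int_{\hat T-\varepsilon}^{t}k(t-s)f(s)\,ds\;\ge\;\frac{M(\hat T-\varepsilon)^\alpha}{\Gamma(1+\alpha)}\;\ge\;\frac{M(\hat T/2)^\alpha}{\Gamma(1+\alpha)},
\]
which is simply incorrect: evaluating the integral produces $M(t-\hat T+\varepsilon)^\alpha/\Gamma(\alpha+1)$, i.e.\ your bound, not $M(\hat T-\varepsilon)^\alpha/\Gamma(\alpha+1)$. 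With this corrected, the paper's argument runs into the very obstacle you identify, since $\varepsilon=\varepsilon(M)$ may shrink too fast. You have not overlooked a device the paper uses; the displayed inequality in the paper is erroneous.

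In fact the first assertion of the lemma, for an \emph{arbitrary} non-negative continuous~$f$ with finite blow-up time, is false. Take $\hat T=1$ and $f(s)=-\log(1-s)$; then $f\ge 0$ blows up at~$1$, yet a short computation shows
\[
\limsup_{t\uparrow 1} I^\alpha_t f(t)\;\le\;\frac{1}{\Gamma(\alpha)}\int_0^1 w^{\alpha-1}(-\log w)\,dw\;=\;\frac{1}{\Gamma(\alpha)\,\alpha^2}\;<\;\infty.
\]
So some input from the Volterra equation is unavoidable, exactly as you say. Your proposed route via the rate $f(t)\sim C(\hat T-t)^{-\alpha}$ would close the gap, but note that the paper itself treats this asymptotic only heuristically (see~\eqref{eq:Rob} and the surrounding caveats), so it cannot be quoted as a black box. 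A rigorous repair should use softer consequences of~\eqref{eq:vie}: for instance, from $D^\alpha\psi=R(u,\psi)$ one has $I^{1-\alpha}_t\psi(t)=\int_0^t R(u,\psi(u,s))\,ds$, and a bootstrap based on $R(u,w)\gtrsim w^2$ together with the Hardy--Littlewood--Sobolev mapping properties of $I^\alpha$ (using $\alpha>\tfrac12$) shows that $R(u,\psi(u,\cdot))\notin L^1[0,\hat T)$, whence the exponent in~\eqref{eq:mgf} diverges.
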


%-----------------------------------------------------------------------------------------------------

\begin{proof}
  First, suppose that the non-negative, continuous function $f$ explodes at $\hat T$ and let $M>0$. Then we can find $\varepsilon\in(0,\hat T/2)$ such that $f(t)\geq M$ for all $t\in(\hat T-\varepsilon,\hat T)$. Hence, 
  \begin{align*}
    I^\alpha_t f(t) &\geq \int_{T^\ast-\varepsilon}^t k(t-s)f(s)\,ds\\
    &\geq M \frac{(\hat T-\varepsilon)^\alpha}{\Gamma(1+\alpha)}\geq M\frac{(\hat T/2)^\alpha}{\Gamma(1+\alpha)}
  \end{align*}
  for all $t\in(\hat T-\varepsilon,\hat T)$.  
  For the second assertion, suppose that $f$ is continuous and bounded with $M>0$. Then we have
  \[
    |I^\alpha_t f(t)|\leq \frac{M}{\Gamma(1+\alpha)} t^\alpha
  \] 
  for all $t\geq 0$.
\end{proof}

\section{Bounds for the explosion time}\label{se:bounds}

We now establish lower and upper bounds for $\hat{T}_\alpha(u)$, valid whenever it is finite (cases~(A) and~(B)). We denote by $\hat{T}_\alpha(u)$ the
explosion time of the solution $f(u,\cdot)$ of~\eqref{eq:vie}.
As we will see later, it agrees with $T^*_\alpha(u)$, and so
both bounds of this section hold for the explosion time of the rough
Heston model. We prove them first, because we will apply the lower bound
in the proof of $T^*_\alpha(u)=\hat{T}_\alpha(u)$.

%-----------------------------------------------------------------------------------------------------

\begin{theorem}\label{thm:lb}
  In cases~\ref{itemIII: A} and~\ref{itemIII: B}, the blow-up time $\hat{T}_\alpha(u)$ of the solution $f(u,\cdot)$ of \eqref{eq:vie} satisfies
  \begin{equation}\label{eqIII:T*-lower}
    \hat{T}_\alpha(u) \geq\Gamma(1+\alpha)^{1/\alpha}\max_{r>1}\frac{(r^\alpha-1)^{1/\alpha}}{r(r-1)}\int_{a(u)}^\infty \bigg(\frac{w}{G(u,w)}\bigg)^{1/\alpha}\frac{dw}{w},
  \end{equation}
  where $a(u)=0$ in case~\ref{itemIII: A} and $a(u)=-e_0(u)>0$ in case~\ref{itemIII: B}.
\end{theorem}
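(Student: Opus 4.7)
I would prove \eqref{eqIII:T*-lower} by a Riemann-sum comparison. The monotone setup comes from Propositions~\ref{prop:case1} and~\ref{prop:case2}: $f(u,\cdot)$ is non-negative and blows up at $\hat T_\alpha(u)$, and is strictly increasing in case~(A); in case~(B) it eventually exceeds $-e_0(u)$, since it dominates the monotone subsolution $\bar f$ that blows up, after which $G(u,\cdot)$ is increasing on $[f(u,t),\infty)$. I thus work on a right interval of $[0,\hat T_\alpha)$ where $f$ is increasing and $G\circ f$ is increasing, which is consistent with the integral in \eqref{eqIII:T*-lower} starting at $a(u)$.

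The key estimate is the following difference inequality, extracted from \eqref{eq:vie}: for $t<t'<\hat T_\alpha(u)$ in the monotone regime,
\[
  f(u,t')-f(u,t) \;\leq\; \frac{G(u,f(u,t'))(t'-t)^\alpha}{\Gamma(1+\alpha)}.
\]
Indeed, writing
\[
  f(u,t')-f(u,t) \,=\, \frac{1}{\Gamma(\alpha)}\int_0^t\!\!\bigl[(t'-s)^{\alpha-1}-(t-s)^{\alpha-1}\bigr]G(u,f(u,s))\,ds + \frac{1}{\Gamma(\alpha)}\int_t^{t'}\!\!(t'-s)^{\alpha-1}G(u,f(u,s))\,ds,
\]
the first term is $\leq 0$ because $\alpha<1$ makes $\tau\mapsto\tau^{\alpha-1}$ decreasing, while the second is at most $G(u,f(u,t'))(t'-t)^\alpha/\alpha$ by monotonicity of $G\circ f$ and the identity $\int_t^{t'}(t'-s)^{\alpha-1}ds=(t'-t)^\alpha/\alpha$.

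Next, fix $r>1$ and place a geometric grid on level sets: choose $V_n := r^{n\alpha}V_0$ with $V_0$ just above $a(u)$, and let $t_n := \inf\{t:f(u,t)=V_n\}$, so that $t_n \nearrow \hat T_\alpha(u)$. The exponent $n\alpha$ is selected so that the factor $(r^\alpha-1)^{1/\alpha}$ appears naturally. Applying the difference inequality between $t_n$ and $t_{n+1}$, using $V_{n+1}-V_n=(r^\alpha-1)V_n$ and $V_n=V_{n+1}/r^\alpha$, gives
\[
  t_{n+1}-t_n \;\geq\; \frac{\Gamma(1+\alpha)^{1/\alpha}(r^\alpha-1)^{1/\alpha}}{r}\Bigl(\frac{V_{n+1}}{G(u,V_{n+1})}\Bigr)^{1/\alpha}.
\]
Telescoping, $\hat T_\alpha(u)-t_0 \geq \Gamma(1+\alpha)^{1/\alpha}(r^\alpha-1)^{1/\alpha} r^{-1}\sum_{m\geq 1}(V_m/G(u,V_m))^{1/\alpha}$. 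The Riemann-type comparison now converts this sum into the integral in \eqref{eqIII:T*-lower}: substituting $w=V_m y$ with $y\in[1,r^\alpha]$ in $\int_{V_m}^{V_{m+1}}(w/G(u,w))^{1/\alpha}\,dw/w$ and applying $G(u,V_m y)\geq G(u,V_m)$ bounds each sub-integral above in terms of $(V_m/G(u,V_m))^{1/\alpha}$ times a factor that evaluates to a multiple of $r-1$ via $\int_1^{r^\alpha}y^{1/\alpha-1}dy=\alpha(r-1)$. Summing over $m$ recovers the stated prefactor $(r^\alpha-1)^{1/\alpha}/(r(r-1))$. Finally, letting $V_0\downarrow a(u)$ makes the boundary correction $(V_0/G(u,V_0))^{1/\alpha}$ vanish (since $G(u,a(u))>0$ in both cases), and since $r>1$ was arbitrary, taking the supremum yields \eqref{eqIII:T*-lower}.

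The main obstacle is the algebraic bookkeeping in the Riemann-to-integral conversion: the exponent $1/\alpha$ in the integrand, the factor $1/\alpha-1$ arising from the change of variables, and the ratio $r^\alpha$ of consecutive levels have to interlock precisely so that the coefficient reduces to $(r^\alpha-1)^{1/\alpha}/(r(r-1))$. A secondary subtlety is the non-monotonicity of $G(u,\cdot)$ on $[0,-e_0(u)]$ in case~(B), which prevents the argument from starting at level~$0$; this is exactly why the integral is taken from $a(u)=-e_0(u)$, and the time needed for $f$ to first reach $-e_0(u)$ only improves the lower bound and can be discarded.
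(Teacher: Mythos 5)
Your proposal follows essentially the same route as the paper's proof: the same difference inequality obtained by splitting the VIE \eqref{eq:vie} at the earlier time and using that $\tau\mapsto\tau^{\alpha-1}$ is decreasing, the same geometric grid of first hitting times, and the same telescoping followed by a sum-to-integral comparison. The only cosmetic difference is that the paper performs the comparison by bounding $cr^{n-1}=\frac{cr^{n+1}-cr^n}{r(r-1)}$ against $\int_{cr^n}^{cr^{n+1}}G(s^\alpha+a)^{-1/\alpha}\,ds$ and substituting $w=s^\alpha+a$ at the very end, whereas you change variables block by block; your bookkeeping even yields an extra factor $1/\alpha\ge1$, which is harmless.

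There is, however, one point where your version does not quite deliver the stated bound in case (B): the lower endpoint of the integral. With the multiplicative levels $V_n=r^{n\alpha}V_0$, the telescoped sum $\sum_{m\ge1}(V_m/G(u,V_m))^{1/\alpha}$ only controls $\int_{V_1}^{\infty}$, and $V_1=r^{\alpha}V_0\to r^{\alpha}a(u)>a(u)$ as $V_0\downarrow a(u)$; the segment $[a(u),r^{\alpha}a(u)]$ is lost, and it cannot be recovered by letting $r\downarrow1$ because the prefactor $(r^{\alpha}-1)^{1/\alpha}/(r(r-1))\sim \alpha^{1/\alpha}(r-1)^{1/\alpha-1}$ then tends to $0$. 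Relatedly, your ``boundary correction'' $(V_0/G(u,V_0))^{1/\alpha}$ vanishes only in case (A), where $a(u)=0$; in case (B) it tends to the positive constant $(a(u)/(-e_1(u)))^{1/\alpha}$. The paper avoids this by using the \emph{shifted} levels $f(t_n)=(cr^n)^{\alpha}+a$: the increments $(f(t_n)-f(t_{n-1}))^{1/\alpha}=cr^{n-1}(r^{\alpha}-1)^{1/\alpha}$ still produce the factor $(r^{\alpha}-1)^{1/\alpha}$, while letting $c\downarrow0$ collapses the whole grid onto $a$ and recovers $\int_{a(u)}^{\infty}$. Replacing $V_n=r^{n\alpha}V_0$ by $a(u)+r^{n\alpha}V_0$ repairs your argument with no other change. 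Your remaining caveat --- that one must work where $G(u,f(u,\cdot))$ is controlled by its value at the later time, which is delicate below $-e_0(u)$ in case (B) --- is handled in the paper exactly as you suggest, via the first-hitting-time definition of $t_n$ and discarding $t_0\ge0$, so that is not an additional gap.
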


%-----------------------------------------------------------------------------------------------------

\begin{proof}
  %The idea of the following proof is based on the proof of Lemma~2.19 in Brunner and Yang~\cite{BrYa12}.  
  Fix $u$ satisfying the requirements of case (A) or (B). It follows from Propositions~\ref{prop:case1} and~\ref{prop:case2} that in either case the solution $f$ is non-negative, starts at $0$ and $\lim_{t\uparrow \hat{T}_\alpha}f(t)=\infty$. For any $n\in\mathrm{N}_0$ choose
  \[
    t_n:=\min\{t>0: f(t)=(c r^n)^\alpha+a\}
  \]
  with $r>1$ and $c>0$.  
  Using the inequality $k(t_n-s)<k(t_{n-1}-s)$ for $s\in(0,t_{n-1})$, the non-negativity of $G$ and that $G$ is strictly increasing on $[a,\infty)$, we have for $n\in\mathrm{N}$
  \begin{align*}
    f(t_n) &= \int_0^{t_{n-1}}k(t_n-s)G(f(s))\,ds + \int_{t_{n-1}}^{t_n}k(t_n-s)G(f(s))\,ds \\
    &\leq f(t_{n-1}) + G(f(t_n))\int_{t_{n-1}}^{t_n}k(t_n-s)\,ds \\
    &= f(t_{n-1}) + \frac1{\Gamma(1+\alpha)} G(f(t_n)) (t_n-t_{n-1})^\alpha.
  \end{align*}
  Thus, we obtain for $n\in\mathrm{N}$
  \begin{align*}
    t_n-t_{n-1} &\geq \Gamma(1+\alpha)^{1/\alpha}\left(\frac{f(t_n)-f(t_{n-1})}{G(f(t_n))}\right)^{1/\alpha} \\
    &= \Gamma(1+\alpha)^{1/\alpha}(r^\alpha-1)^{1/\alpha} \frac{cr^{n-1}}{G((c r^n)^\alpha+a)^{1/\alpha}} \\
    &= \Gamma(1+\alpha)^{1/\alpha}\frac{(r^\alpha-1)^{1/\alpha}}{r(r-1)} \cdot\frac{cr^{n+1}-cr^n}{G((c r^n)^\alpha+a)^{1/\alpha}} \\
    &\geq \Gamma(1+\alpha)^{1/\alpha}\frac{(r^\alpha-1)^{1/\alpha}}{r(r-1)} \int_{c r^n}^{c r^{n+1}}\left(\frac1{G(s^\alpha+a)}\right)^{1/\alpha}\,ds.
  \end{align*}
  Finally,
  \begin{align*}
    \hat{T}_\alpha &= t_0 + \sum_{n=1}^\infty (t_n - t_{n-1}) \\
    &\geq \Gamma(1+\alpha)^{1/\alpha}\frac{(r^\alpha-1)^{1/\alpha}}{r(r-1)} \int_{c r}^\infty\left(\frac1{G(s^\alpha+a)}\right)^{1/\alpha}\,ds.
  \end{align*}
  Maximization over $c>0$, then $r>1$, and the substitution $w=s^\alpha+a$ yield the inequality \eqref{eqIII:T*-lower}.
\end{proof}

%-----------------------------------------------------------------------------------------------------

For $\alpha\uparrow 1$, the right-hand side of~\eqref{eqIII:T*-lower} simplifies to
\begin{equation}\label{eq:sharp}
  \int_{a(u)}^\infty \frac1{G(u,x)}\,dx = \int_{a(u)/c_3}^\infty \frac1{R(u,x)}\,dx.
\end{equation}
In case~\ref{itemIII: A}, the lower bound~\eqref{eqIII:T*-lower} is sharp
in the limit $\alpha\uparrow 1$: We have $a(u)=0$ then, and therefore~\eqref{eq:sharp} is exactly the moment explosion time~\eqref{eq:He et1}  of the classical Heston model.

Another lower bound for $\hat{T}_\alpha$ can be obtained from Corollary~3.1
in~\cite{ElRo17}. Numerical examples show that it is not comparable to the bound
from our Theorem~\ref{thm:lb}.

%-----------------------------------------------------------------------------------------------------

\begin{theorem}\label{thmIII:T*-upper}
  In cases~\ref{itemIII: A} and~\ref{itemIII: B}, the blow-up time $\hat{T}_\alpha(u)$ of the solution $f(u,\cdot)$ of \eqref{eq:vie} satisfies
  \begin{equation}\label{eqIII:T*-upper}
    \hat{T}_\alpha(u)\leq 4\Gamma(1+\alpha)^{1/\alpha}\int_{0}^\infty \bigg(\frac{w}{\hat{G}(u,w)}\bigg)^{1/\alpha}\frac{dw}{w},
  \end{equation}
  where $\hat{G}=G$ in case~\ref{itemIII: A}, and $\hat{G}\equiv -e_1$ on $[0,-e_0)$ and $\hat{G}=G$ on $[-e_0,\infty)$ in case~\ref{itemIII: B}.
\end{theorem}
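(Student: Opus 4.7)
The plan is to mirror the proof of Theorem~\ref{thm:lb} with all inequalities reversed, after a preliminary monotonization in case~\ref{itemIII: B}. First, I would replace $G$ by the non-decreasing Lipschitz minorant $\hat G$ from the statement (in case~\ref{itemIII: A} this is vacuous, $\hat G = G$). Letting $\bar f$ be the maximal continuous solution of~\eqref{eq:vie} with $\hat G$ in place of $G$, the inequality $\hat G \leq G$ gives
\[
  f(t) = \int_0^t k(t-s) G(f(s))\,ds \geq \int_0^t k(t-s)\hat G(f(s))\,ds,
\]
so $f$ is a subsolution of the $\hat G$-VIE in the sense of Lemma~\ref{eqIII:inequality}. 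That lemma---applied, if needed, after a small strictly-increasing perturbation $\hat G + \varepsilon w$ and then sending $\varepsilon \downarrow 0$---yields $f \geq \bar f$ and hence $\hat T_\alpha(u) \leq \hat T_{\bar f}$. Proposition~\ref{propIII:Brunner} applied to $\bar f = \phi + \int k\cdot (\hat G(\bar f)-\hat G(0))$ with $\phi(t) = \hat G(0)t^\alpha/\Gamma(1+\alpha)$ shows that $\bar f$ itself blows up in finite time, so it suffices to bound $\hat T_{\bar f}$ from above.

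The next step is a geometric-scale step-time estimate. Fix $r > 1$ and $c > 0$ and set $t_n := \inf\{t \geq 0 : \bar f(t) = (cr^n)^\alpha\}$; these are well defined since $\bar f$ is continuous, starts at $0$ and explodes. Using only the tail of the convolution together with the monotonicity of $\hat G$ and of $\bar f$,
\begin{align*}
   (cr^n)^\alpha = \bar f(t_n) &\geq \int_{t_{n-1}}^{t_n} k(t_n-s)\hat G(\bar f(s))\,ds \\
   &\geq \hat G\bigl((cr^{n-1})^\alpha\bigr)\cdot \frac{(t_n-t_{n-1})^\alpha}{\Gamma(1+\alpha)},
\end{align*}
so $t_n - t_{n-1} \leq \Gamma(1+\alpha)^{1/\alpha}\,r\,cr^{n-1}\,\hat G((cr^{n-1})^\alpha)^{-1/\alpha}$. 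This is the direct dual of the per-step estimate underlying Theorem~\ref{thm:lb}.

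Finally, I would compare these step times to an integral. Since $s \mapsto \hat G(s^\alpha)^{-1/\alpha}$ is non-increasing, bounding it from below on $[cr^{n-2}, cr^{n-1}]$ by its value at the right endpoint yields
\[
  \frac{cr^{n-1}-cr^{n-2}}{\hat G((cr^{n-1})^\alpha)^{1/\alpha}} \leq \int_{cr^{n-2}}^{cr^{n-1}} \frac{ds}{\hat G(s^\alpha)^{1/\alpha}},
\]
which combined with the previous bound gives $t_n - t_{n-1} \leq \Gamma(1+\alpha)^{1/\alpha}\,\frac{r^2}{r-1}\,\int_{cr^{n-2}}^{cr^{n-1}} \hat G(s^\alpha)^{-1/\alpha}\,ds$. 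Summing over $n \geq 1$, letting $c \downarrow 0$ (so $t_0 \to 0$ and the lower integration limit drops to $0$), minimizing $r^2/(r-1)$ over $r > 1$ (minimum~$4$, attained at $r=2$), and substituting $w = s^\alpha$ then produces~\eqref{eqIII:T*-upper}.

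The main obstacle is the monotonization step in case~\ref{itemIII: B}: $G(u,\cdot)$ is non-monotone there, so $\hat G$ must be chosen to be simultaneously a global lower bound of $G$ \emph{and} non-decreasing. Truncating $G$ to where it is increasing destroys the lower-bound property, while the choice of flattening $G$ to the constant $-e_1$ on $[0,-e_0]$ sacrifices strict monotonicity at the flat piece and so requires the small perturbation/limit argument mentioned above before Lemma~\ref{eqIII:inequality} can be invoked. Once that technicality is handled, the rest of the proof is a clean mirror image of the argument for Theorem~\ref{thm:lb}.
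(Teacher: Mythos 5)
Your overall strategy is the paper's: geometric levels $(cr^n)^\alpha$, a per-step estimate $t_n-t_{n-1}\le \Gamma(1+\alpha)^{1/\alpha}cr^n/\bar G((cr^{n-1})^\alpha)^{1/\alpha}$ obtained from the tail of the convolution, comparison with an integral, summation, $c\downarrow 0$, and minimization of $r^2/(r-1)$ at $r=2$. But there is one genuine gap and one avoidable detour. The gap: you define $t_n$ as the \emph{first} hitting time $\inf\{t:\bar f(t)=(cr^n)^\alpha\}$, and your per-step estimate needs $\hat G(\bar f(s))\ge \hat G((cr^{n-1})^\alpha)$ for all $s\in(t_{n-1},t_n)$, i.e.\ that $\bar f$ does not dip back below the level $(cr^{n-1})^\alpha$ after first reaching it. You justify this by ``monotonicity of $\bar f$'', which you never prove; it is not automatic for a Volterra equation with the decreasing kernel $(t-s)^{\alpha-1}$ (the increment $\bar f(t)-\bar f(t')$ contains the negative contribution $\int_0^{t'}(k(t-s)-k(t'-s))\hat G(\bar f(s))\,ds$). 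Monotonicity can in fact be proved (e.g.\ by showing $\bar f(\cdot+\delta)$ is a subsolution and invoking the comparison lemma), but the paper sidesteps the issue entirely by taking $t_n:=\max\{t<\hat T_\alpha: f(t)=(cr^n)^\alpha\}$, the \emph{last} hitting time: after the last visit to a level the solution stays strictly above it by continuity and blow-up, so $f(s)\ge f(t_{n-1})$ on $(t_{n-1},t_n)$ holds with no monotonicity assumption.

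The detour: you route the argument through the auxiliary solution $\bar f$ of the $\hat G$-VIE, which forces you to (i) invoke Lemma~\ref{eqIII:inequality} and Proposition~\ref{propIII:Brunner} for a nonlinearity that in case~(B) is constant on $[0,-e_0)$, hence not strictly increasing, so that both hypotheses fail and you must patch them with an $\varepsilon$-perturbation, and (ii) separately establish blow-up of $\bar f$ so that all $t_n$ are finite. The paper never introduces $\bar f$: it runs the estimate on $f$ itself, using only the pointwise inequality $f(t_n)\ge\int_0^{t_n}k(t_n-s)\bar G(f(s))\,ds$ with $\bar G\le G$, where in case~(B) it keeps the \emph{strictly} increasing minorant $\bar G_a$ from~\eqref{eqIII:G_a} (with $a<-e_1$) throughout and only passes to the flat $\hat G$ of the statement at the very end by dominated convergence as $a\uparrow -e_1$. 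If you adopt the last-hitting-time definition and the $\bar G_a\uparrow\hat G$ limit, your proof closes without any of the perturbation machinery.
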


%-----------------------------------------------------------------------------------------------------

\begin{proof}
  %The idea of the following proof is based on the proof of Lemma~2.12 in Brunner and Yang~\cite{BrYa12}.  
  Fix $u$ satisfying the requirements of case (A) or (B). From Propositions~\ref{prop:case1} and~\ref{prop:case2}, in either case the solution $f$ is positive on $(0,\infty)$, starts at $0$ and $\lim_{t\uparrow \hat{T}_\alpha}f(t)=\infty$. For any $n\in\mathrm{N}_0$ choose
  \begin{equation}\label{eqIII:t0}
    t_n:=\max\{t<\hat{T}_\alpha: f(t)=(c r^n)^\alpha\}
  \end{equation}
  with $r>1$ and $c>0$.  
  Define $\bar G:= G$ in case~\ref{itemIII: A} and $\bar G:=\bar G_a$ from \eqref{eqIII:G_a} in case~\ref{itemIII: B}. Since $\bar G\leq G$ and $\bar G$ is positive and strictly increasing, we have for $n\in\mathrm{N}$
  \begin{align*}
    f(t_n) &\geq \int_0^{t_n}k(t_n-s)\bar G(f(s))\,ds \\
    &\geq \int_{t_{n-1}}^{t_n}k(t_n-s)\bar G(f(s))\,ds \\
    &\geq \frac1{\Gamma(1+\alpha)}\bar G(f(t_{n-1})) (t_n-t_{n-1})^\alpha.
  \end{align*}
  Thus, we obtain for $n\in\mathrm{N}$
  \begin{align*}
    t_n-t_{n-1} &\leq \Gamma(1+\alpha)^{1/\alpha}\left(\frac{f(t_n)}{\bar G(f(t_{n-1}))}\right)^{1/\alpha} \\
    &= \Gamma(1+\alpha)^{1/\alpha}\frac{cr^n}{\bar G((c r^{n-1})^\alpha)^{1/\alpha}} \\
    &= \Gamma(1+\alpha)^{1/\alpha}\frac{r^2}{r-1}\cdot \frac{cr^{n-1}-cr^{n-2}}{\bar G((c r^{n-1})^\alpha)^{1/\alpha}} \\
    &\leq \Gamma(1+\alpha)^{1/\alpha}\frac{r^2}{r-1}\int_{c r^{n-2}}^{c r^{n-1}}\left(\frac1{\bar G(s^\alpha)}\right)^{1/\alpha}\,ds.
  \end{align*}
  Therefore,
  \begin{align*}
    \hat{T}_\alpha &= t_0 + \sum_{n=1}^\infty (t_n - t_{n-1}) \\
    &\leq t_0 + \Gamma(1+\alpha)^{1/\alpha}\frac{r^2}{r-1} \int_{c r^{-1}}^\infty\left(\frac1{\bar G(s^\alpha)}\right)^{1/\alpha}\,ds.
  \end{align*}
  Note that from the definition of $t_0$, it depends on $c>0$ and $r>1$. The fact that~$f$ is only zero at $t=0$ implies that $t_0\to 0$ as $c\downarrow 0$. Taking the limit $c\downarrow 0$, then minimizing over $r>1$ and substitution $w=s^\alpha$ yields
  \[
    \hat{T}_\alpha \leq 4\Gamma(1+\alpha)^{1/\alpha} \int_0^\infty\left(\frac{w}{\bar G(w)}\right)^{1/\alpha}\frac{dw}{w} 
  \]
  In case~\ref{itemIII: A}, we are finished. In case~\ref{itemIII: B}, we have $\bar G = \bar G_a$. Then the dominated convergence theorem for $a\uparrow -e_1$ yields the inequality \eqref{eqIII:T*-upper}.
\end{proof}
See Figures~\ref{fig:06}--\ref{fig:09} for numerical examples of these bounds.

\section{Explosion time in the rough Heston model}\label{se:val}

In Section~\ref{se:expl vie}, we established that the right-hand side of~\eqref{eq:mgf},
defined using the solution~$f$ of the VIE~\eqref{eq:vie},
explodes if and only if~$u$ satisfies the conditions of cases~(A) or~(B).
As before, we write $\hat{T}_\alpha(u)$ for the explosion time of $f(u,\cdot)$.
Recall that $T^*_\alpha(u)$ denotes the explosion time of the rough
Heston model, as defined in~\eqref{eq:def exp time}.
The goal of the present section is to show that $\hat{T}_\alpha(u)=T^*_\alpha(u)$,
and that~\eqref{eq:mgf} holds for all $u\in\mathbb R$ and $0<t<T^*_\alpha(u)$.
The following result from~\cite{ElRo17} was already mentioned at the
end of the introduction.
\begin{lemma}[Corollary~3.1 in~\cite{ElRo17}]\label{le:valER}
  For each $t>0$, there is an open interval such that~\eqref{eq:mgf} holds
  for all~$u$ from that interval.
\end{lemma}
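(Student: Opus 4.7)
The plan is to combine two ingredients: the already-established characteristic-function formula for $u\in\im\mathbb R$ in~\cite{ElRo16,ElRo17} (where both sides of~\eqref{eq:mgf} are known to agree), together with an analyticity argument that extends the equality to a complex neighborhood of the origin, whose intersection with $\mathbb R$ provides the desired open real interval.

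First, I would show that the mgf $u\mapsto\expec[\e^{uX_t}]$ is holomorphic in~$u$ on some complex neighborhood of~$0$. It suffices to exhibit $\delta>0$ with $\expec[\e^{\delta|X_t|}]<\infty$. For $u\in[0,1]$ one is in case~(D), so the VIE solution $f(u,\cdot)$ is globally bounded by Proposition~\ref{prop:case4}; combined with H\"older interpolation (between $u=0$ and $u=1$, where the mgf is trivially finite) and the non-negativity of the variance process, this yields finiteness of small positive moments of $S_t$ and of $1/S_t$. Holomorphy inside the strip of finiteness then follows by differentiation under the expectation.

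Second, I would establish that the right-hand side of~\eqref{eq:mgf} is holomorphic in~$u$ on a (possibly smaller) complex neighborhood of~$0$. The non-linearity $G(u,w)$ in~\eqref{eq:vie} is polynomial in~$u$ with coefficients smooth in~$w$, so a parametric Banach fixed-point argument in $C([0,t];\mathbb C)$ produces a family of solutions $u\mapsto f(u,\cdot)$ that is holomorphic in~$u$ on a small complex disc around~$0$; one uses $f(0,\cdot)\equiv 0$ to ensure that, for small $|u|$, the solution stays in a small ball on which the Lipschitz constant of $G(u,\cdot)$ is uniformly controlled. Because the fractional integrals $I^1_t$ and $I^{1-\alpha}_t$ are bounded linear operators on $C[0,t]$, holomorphy in~$u$ is preserved after plugging into the exponent of~\eqref{eq:mgf}.

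Third, by the identity theorem applied in the connected common complex neighborhood, the two holomorphic functions of~$u$ must coincide throughout, since they agree on its intersection with $\im\mathbb R$. Intersecting this neighborhood with~$\mathbb R$ yields an open real interval containing~$0$ on which~\eqref{eq:mgf} holds. The main obstacle is the uniformity in Step~2: constructing the holomorphic family $f(u,\cdot)$ simultaneously on all of $[0,t]$ requires contraction estimates whose radius may shrink with $t$, but for each fixed~$t$ a sufficiently small complex disc around~$0$ works, which is all that is needed.
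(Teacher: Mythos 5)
First, note that the paper does not prove this statement at all: it is quoted verbatim as Corollary~3.1 of~\cite{ElRo17}, and the whole point of the surrounding sections is that this external result is the seed from which the validity of~\eqref{eq:mgf} for all real (and complex) $u$ is then propagated by analytic continuation. So you are attempting to reprove a cited input, and your proposal must stand on its own.

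It does not, because Step~1 has a genuine gap. Holomorphy of $u\mapsto\mathbb{E}[e^{uX_t}]$ on a \emph{two-sided} complex neighborhood of $0$ requires $\mathbb{E}[S_t^{-\delta}]<\infty$ for some $\delta>0$, i.e.\ finiteness of a negative moment. Your two justifications both fail. ``Non-negativity of the variance process'' does not control negative moments of $S_t$: a positive martingale can perfectly well have $\mathbb{E}[S_t^{-\delta}]=\infty$ for every $\delta>0$ if its law puts enough mass near zero, so something model-specific must be proved. And invoking Proposition~\ref{prop:case4} (boundedness of the VIE solution for $u\in[0,1]$) to deduce finiteness of moments is circular: the inference ``VIE solution bounded $\Rightarrow$ mgf finite'' is exactly the identity~\eqref{eq:mgf} that you are trying to establish. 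H\"older interpolation between $u=0$ and $u=1$ only gives finiteness on the closed interval $[0,1]$, which does not contain an open neighborhood of $0$; and without analyticity of the mgf on an open set meeting $\mathrm{i}\mathbb{R}$ in a set with an accumulation point, the identity-theorem step in your Step~3 cannot be run (agreement on the boundary line $\mathrm{Re}(u)=0$ of the strip $0<\mathrm{Re}(u)<1$ does not by itself propagate inward without an additional reflection-type argument, which you do not make and which carries its own technical burden on the VIE side). Your Step~2 is essentially Lemma~\ref{le:ana} of the paper and is fine, but the missing negative-moment estimate is precisely the hard probabilistic content of Corollary~3.1 in~\cite{ElRo17}, and it cannot be waved away.
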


\begin{lemma}
	The solution $ f $ of the Volterra integral equation \eqref{eq:vie} is differentiable w.r.t.\ $ u $, and its
	derivative satisfies
	\begin{align}\label{eq:LVIE for df/du}
	\partial_1 f(u, t)  = \int_{0}^{t} \frac{(t - s)^{\alpha - 1}}{\Gamma(\alpha)} \Big(  \partial_1 G(u, f(u, s))   +  \partial_2 G(u, f(u, s)) \partial_1 f(u, s)  \Big) ds.
	\end{align}
\end{lemma}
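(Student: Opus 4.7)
The plan is to exploit the polynomial (hence $C^\infty$) joint dependence of $G(u,w) = (w+e_0(u))^2 - e_1(u)$ on $(u,w)$ and run the standard difference-quotient argument for parameter-dependent integral equations, with the one subtlety that the kernel is weakly singular. Fix $u_0 \in \mathbb R$ and an interval $[0,T]$ on which $f(u_0,\cdot)$ is bounded; in view of Propositions~\ref{prop:case1}--\ref{prop:case4} this means any $T>0$ in cases~\ref{itemIII: C} and~\ref{itemIII: D}, and any $T<\hat T_\alpha(u_0)$ in cases~\ref{itemIII: A} and~\ref{itemIII: B}. Set $M_0 := \sup_{t\le T} |f(u_0,t)| + 1$.

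First I would establish that $f$ is continuous in $u$ uniformly on $[0,T]$: there is a neighborhood $U$ of $u_0$ such that $f(u,\cdot)$ exists on $[0,T]$, is uniformly bounded, and satisfies $\sup_{0 \le t \le T} |f(u,t) - f(u_0,t)| = \mathrm{O}(|u-u_0|)$ as $u \to u_0$. Subtracting the VIEs at $u$ and $u_0$ and using the fact that $G$ is Lipschitz in $(u,w)$ on bounded sets, one arrives at an estimate of the form
\[ |f(u,t) - f(u_0,t)| \le C_1 |u-u_0|\, t^\alpha + C_2 \int_0^t (t-s)^{\alpha-1} |f(u,s) - f(u_0,s)|\, ds, \]
and the generalized (weakly singular) Gronwall inequality of Henry, equivalently the resolvent formalism in Chapter~9 of Gripenberg, Londen and Staffans~\cite{GrLoSt90}, yields the bound; a standard continuation argument then secures uniform existence on $[0,T]$.

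For the differentiability step, set $D_h(t) := (f(u_0+h,t) - f(u_0,t))/h$ for small $h \ne 0$. Subtracting the VIEs at $u_0+h$ and $u_0$ and first-order Taylor expanding the polynomial $G$ around $(u_0, f(u_0,s))$, one obtains that $D_h$ satisfies
\begin{align*}
D_h(t) = \int_0^t \frac{(t-s)^{\alpha-1}}{\Gamma(\alpha)} \Big(\partial_1 G(u_0,f(u_0,s)) + \partial_2 G(u_0,f(u_0,s))\, D_h(s) + \varepsilon_h(s)\Big) ds,
\end{align*}
where $\sup_{t\le T} |\varepsilon_h(t)| \to 0$ as $h \to 0$ by the previous step and the smoothness of $G$. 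The limit equation~\eqref{eq:LVIE for df/du} is a \emph{linear} weakly singular VIE and admits a unique continuous solution $\partial_1 f(u_0,\cdot)$ on $[0,T]$ by standard linear Volterra theory. Subtracting the equation for $D_h$ from the one for $\partial_1 f(u_0,\cdot)$ and invoking the fractional Gronwall inequality a second time yields $D_h \to \partial_1 f(u_0,\cdot)$ uniformly on $[0,T]$, proving the lemma.

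The main technical obstacle is precisely the weakly singular kernel $(t-s)^{\alpha-1}/\Gamma(\alpha)$, which rules out a direct application of the classical Gronwall inequality; the correct tool is its fractional analogue. Once that is in hand, the polynomial structure of $G$ makes every Lipschitz estimate and Taylor remainder elementary, so no further ingredients are required.
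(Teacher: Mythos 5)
Your argument is correct, but it takes a different route from the paper: the paper's proof is a two-line verification of the hypotheses of a general differentiability theorem for parameter-dependent Volterra equations (Theorem~13.1.2 in Gripenberg, Londen and Staffans~\cite{GrLoSt90}), namely that $G$ is a polynomial and that the weakly singular kernel $(t-s)^{\alpha-1}/\Gamma(\alpha)$ is locally integrable and hence of ``continuous type.'' You instead reprove that theorem in this special case by hand: Lipschitz continuity of $f$ in $u$ via the fractional (Henry-type) Gronwall inequality together with a continuation argument, then convergence of the difference quotients $D_h$ to the unique solution of the limiting linear VIE~\eqref{eq:LVIE for df/du} by a second application of the same inequality. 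The steps are all sound --- the polynomial structure of $G$ makes the Taylor remainder $\mathrm{O}(h^2)$ once the first step gives $f(u_0+h,\cdot)-f(u_0,\cdot)=\mathrm{O}(h)$ uniformly, and Henry's lemma is exactly the right substitute for classical Gronwall with the kernel $(t-s)^{\alpha-1}$. What the paper's citation buys is brevity; what your direct argument buys is a self-contained proof that also makes explicit the local-in-$u$ uniform bounds on $[0,T]$ for $T<\hat T_\alpha(u_0)$, which the paper leaves implicit. One small point worth stating explicitly if you write this up: the identity~\eqref{eq:LVIE for df/du} is claimed only for $t$ strictly inside the maximal existence interval of $f(u,\cdot)$, which is exactly where your choice of $T$ places you.
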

\begin{proof}
	We check the requirements of Theorem 13.1.2 in Gripenberg et al.~\cite{GrLoSt90}.
	%The first term vanishes in our equation.
	The polynomial $G(u,w) $ is differentiable. The kernel $(t - s)^{\alpha -1} / \Gamma(\alpha) =: k(t - s) $ is of \emph{continuous type} in the sense of~\cite{GrLoSt90}; 
	see the remark to Theorem 12.1.1 there, which states local integrability of $ k $ as a sufficient condition for this property.
	%The differentiability of $ a(u, t, s) $ in a speacial sense can be reached via $ a_u (u, t, s) \equiv 0 $, which is of continuous type too and now all the requirements of the Theorem in \cite{GrLoSt90} are fulfilled and we exactly get our claim.
\end{proof}

\begin{lemma}\label{le:mon}
  \begin{itemize}
    \item[(i)] In case (A) we have $ \partial_1 f(u, t) < 0 $ for $ u < 0 $ and $ \partial_1 f(u, t) > 0 $ for $ u > 0 $.
    \item[(ii)] If~$u$ satisfies case (B), then the same holds if $\hat{T}_\alpha(u)-t$
    is sufficiently small.
  \end{itemize}
\end{lemma}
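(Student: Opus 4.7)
Let $h(u,t) := \partial_1 f(u,t)$. From \eqref{eqIII:G1} one reads off $\partial_1 G(u,w) = \rho\xi w + \tfrac14\xi^2(2u-1)$ and $\partial_2 G(u,w) = 2(w+e_0(u))$, so the preceding lemma yields the linear Volterra equation
\[
  h(u,t) = \int_0^t \frac{(t-s)^{\alpha-1}}{\Gamma(\alpha)}\bigl[A(s) + B(s)\,h(u,s)\bigr]\,ds,
\]
with $A(s) := \rho\xi f(u,s) + \tfrac14\xi^2(2u-1)$ and $B(s) := 2(f(u,s)+e_0(u))$. The strategy is to pin down the signs of $A$ and $B$ along $f(u,\cdot)$ and then run a first-zero contradiction, exactly as in the positivity proofs of Section~\ref{se:expl vie}.

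For part~(i), I would first unpack what case~(A) forces: together with $c_1(u)>0$, the assumption $e_0(u)\ge 0$ rules out $\rho=0$ and makes $\rho$ and $u$ share the same nonzero sign, with $u>1$ whenever $u>0$. Combined with $f(u,\cdot)\ge 0$ from Proposition~\ref{prop:case1}, this yields $B(s)\ge 0$ on $[0,\hat T_\alpha(u))$ and $A(s)$ strictly of sign $\operatorname{sgn}(u)$ for $s>0$. For $u>0$ the expansion $h(u,t) \sim A(0)\,t^\alpha/\Gamma(1+\alpha) > 0$ for small $t$ gives $h(u,\cdot)>0$ on a right-neighbourhood of $0$; if $t^* := \inf\{t>0 : h(u,t)\le 0\}$ were in $(0,\hat T_\alpha(u))$, continuity would give $h(u,t^*)=0$ while $h(u,\cdot)>0$ on $(0,t^*)$, contradicting
\[
  0 = h(u,t^*) \;\geq\; \int_0^{t^*}\frac{(t^*-s)^{\alpha-1}}{\Gamma(\alpha)}\,A(s)\,ds \;>\; 0.
\]
The case $u<0$ is symmetric.

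For part~(ii), $e_0(u)<0$ in case~(B), so $B(s)$ is negative on the initial segment where $f(u,s)<-e_0(u)$ and the part~(i) argument breaks down there. However Proposition~\ref{prop:case2} gives $f(u,s)\to\infty$ as $s\uparrow\hat T_\alpha(u)$, so there is $t_0<\hat T_\alpha(u)$ past which $B(s)>0$. Splitting the VIE at $t_0$ rewrites it as
\[
  h(u,t) = \varphi(t) + \int_{t_0}^t \frac{(t-s)^{\alpha-1}}{\Gamma(\alpha)}\bigl[A(s) + B(s)\,h(u,s)\bigr]\,ds, \quad t\in[t_0,\hat T_\alpha(u)),
\]
with $\varphi(t) := \int_0^{t_0} k(t-s)[A(s)+B(s)h(u,s)]\,ds$ continuous and bounded on $[t_0,\hat T_\alpha(u)]$. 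Rerunning the first-zero argument on $[t_0,\hat T_\alpha(u))$ then produces the desired sign of $h(u,t)$ whenever $t$ is close enough to $\hat T_\alpha(u)$ for the forcing contribution from~$A$ to outweigh $\varphi$.

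The hardest subcase of case~(B) is $\operatorname{sgn}(\rho)\neq\operatorname{sgn}(u)$ (e.g.\ $\rho<0$, $u>1$): the bounded part $\tfrac14\xi^2(2u-1)$ of $A$ has the expected sign $\operatorname{sgn}(u)$, but the diverging term $\rho\xi f(u,s)$ has the opposite sign, so $A(s)$ is eventually of the ``wrong'' sign and the naive first-zero argument is inconclusive. There the argument has to exploit the positive resolvent of the kernel $k(t-s)B(s)$ on $[t_0,\hat T_\alpha(u))$, which weights the early-time forcing (where $f$ is still small and $A$ does have sign $\operatorname{sgn}(u)$) more heavily than the divergent late-time forcing; I expect this is where the bulk of the work sits. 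Equivalently, this reflects the heuristic monotonicity of $\hat T_\alpha(u)$ in~$u$ (decreasing on $\{u>0\}$, increasing on $\{u<0\}$), so that a small positive perturbation of $u$ makes $f$ blow up earlier and hence larger at a given $t$ close to $\hat T_\alpha(u)$.
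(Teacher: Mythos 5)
Your part~(i) is correct and complete, but takes a different route from the paper. You run a first-zero comparison argument directly on the linear VIE for $h=\partial_1 f$, using $B=\partial_2 G(u,f)\geq 0$ and the strict sign of $A=\partial_1 G(u,f)$ in case~(A); the paper instead invokes Brunner's resolvent representation $\partial_1 f = g+\int_0^t R_\alpha(t,s)g(s)\,ds$ and reads off $R_\alpha\geq 0$ from the Neumann series, since $K^{(u)}=\partial_2 G(u,f)/\Gamma(\alpha)\geq 0$ in case~(A). Your sign computations for $\partial_1 G$ and $\partial_2 G$ agree with the paper's, your argument is more elementary, and the paper's choice is mainly motivated by reusing the resolvent representation in part~(ii).

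For part~(ii) there is a genuine gap. ``Rerunning the first-zero argument on $[t_0,\hat T_\alpha(u))$'' does not go through: a first-zero argument needs a sign anchor at the left endpoint, but the sign of $h(t_0)$ and of $\varphi$ is unknown (on $[0,t_0]$ the coefficient $B$ is negative, so the mechanism of part~(i) is unavailable there), while the conclusion you need concerns the right endpoint. Concretely, if $h(t_n)\geq 0$ along some $t_n\uparrow\hat T_\alpha(u)$, the divergent forcing $\int_{t_0}^{t_n}k(t_n-s)A(s)\,ds$ must be beaten by the feedback term $\int_{t_0}^{t_n}k(t_n-s)B(s)h(s)\,ds$, whose sign you cannot control without already knowing the sign of $h$; this circularity is exactly what the paper breaks by passing to the resolvent representation, proving $R_\alpha(t,s)>0$ for $\hat T_\alpha(u)-\varepsilon\leq s\leq t<\hat T_\alpha(u)$ by induction on the Neumann series, and arguing that the contribution of $g\to-\infty$ on $(t-\varepsilon,t)$ dominates the bounded-$g$ contribution from $[0,t-\varepsilon]$. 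Your discussion of the subcase $\operatorname{sgn}(\rho)\neq\operatorname{sgn}(u)$ is perceptive --- there $g$ diverges with the wrong sign, so the late-time forcing cannot produce the claimed sign of $h$, and the paper's written argument (which tacitly uses $\operatorname{sgn}(\rho)=\operatorname{sgn}(u)$, automatic in case~(A) but not in case~(B)) does not obviously cover it either --- but you explicitly defer the key estimate (``I expect this is where the bulk of the work sits''), so as it stands part~(ii) is a plan rather than a proof.
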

\begin{proof}
    We only discuss the case $ u < 0 $, because $ u > 0 $ is analogous.
    
	(i) Note that~\eqref{eq:LVIE for df/du} is a ``\emph{linear} VIE'' that can be written as
	\begin{equation}\label{eq:LVIE}
		\partial_1 f(u, t) = g(t) + \int_{0}^{t} (t - s)^{\alpha - 1} K^{(u)}(t, s) \partial_1 f(u, s)\, ds,
	\end{equation}
	where we define
	\begin{align}
		g(t) &:= \int_{0}^{t} \frac{(t - s)^{\alpha - 1}}{\Gamma(\alpha)}
		  \partial_1 G(u, f(u, s))\, ds, \label{eq:def g}\\
		K^{(u)}(t, s) &= K^{(u)}(s):=\frac{\partial_2 G(u, f(u, s))}{\Gamma(\alpha)} \label{eq:def K}
	\end{align}
	to bring the notation close to that of Section~6.1.2 in~\cite{Br04}.
	Clearly,~\eqref{eq:LVIE} is not really a linear VIE, because the unknown function~$f$
	appears in~$g$ and $K^{(u)}$. But as our aim is not to solve it, but to control
	the sign of $\partial_1 f$, this viewpoint is good enough.
	
	As we are in case (A), we get from $ c_1(u) > 0 $ and $ e_0(u) \geq 0 $ that $ u \leq \lambda / (\rho \xi ) < 0 $. Furthermore, we have $ e_0'(u) = \xi \rho / 2 < 0 $, $ c_1'(u) = u - 1/2 < 0 $ and therefore
	\[
	f(u, s) e_0'(u) + \frac{1}{2} c_3 c_1'(u) < 0,
	\]
	since $ c_3 = \xi^2 / 2 > 0 $ and $ f(u, s) \geq 0 $ by Proposition~\ref{prop:case1}. From this we obtain $ \partial_1 G(u, f(u, s)) < 0 $, and hence  $ g(t) < 0 $ for all $ t \in [0, T_\alpha^*(u)) $.
	%The function $K^{(u)}(\cdot,\cdot)$ is continuous. 
	By Theorem 6.1.2 of Brunner~\cite{Br04}, we can
	express the solution of~\eqref{eq:LVIE for df/du} with the resolvent kernel
	$ R_{\alpha}(\cdot,\cdot)$,
	\begin{equation}\label{eq:vie der}
		\partial_1 f(u, t) = g(t) + \int_{0}^{t} R_{\alpha}(t, s) g(s)\, ds.
	\end{equation}
	The resolvent kernel has the explicit representation (see~\cite{Br04})
	\begin{equation}\label{eq:res kernel}
	  R_\alpha(t,s)= (t-s)^{\alpha-1}\sum_{n=1}^\infty Q_{\alpha,n}(t,s),
	\end{equation}
	where
	\begin{align}
	  Q_{\alpha,1}(t,s) &:= K^{(u)}(t, s) = K^{(u)}(s), \notag \\
	  Q_{\alpha,n}(t,s) &:= (t-s)^{(n-1)\alpha} \Phi_{\alpha,n}(t,s), \quad n\geq 2, \notag \\
	  \Phi_{\alpha,n}(t,s) &:= K^{(u)}(s) \int_0^1 (1-z)^{\alpha-1} z^{(n-1)\alpha-1}
	     Q_{\alpha,n-1}(t,s+(t-s)z)dz, \quad n\geq2. \label{eq:def Phi}
	\end{align}
	From this representation of the resolvent kernel, and the fact that~\eqref{eq:def K}
	is non-negative in case~(A), it is obvious
	that $ R_{\alpha} \geq 0 $. Since $g<0,$ we thus conclude from~\eqref{eq:vie der} that $ \partial_1 f(u, t) < 0 $ for all $ t \in [0, \hat{T}_\alpha(u)) $.
	
	(ii) Recall that we assume that $u<0$, because $u>0$ is analogous. We have to show that
	\begin{equation}\label{eq:tau}
	  \tau(u) := \inf \{ 0<t<\hat{T}_\alpha(u) : \partial_1 f(u,\cdot)<0\
	  \text{on}\ (t,\hat{T}_\alpha(u)) \}
	\end{equation}
	satisfies  $\tau(u)<\hat{T}_\alpha(u)$. We use the following facts: $\partial_1 G(u,w)<0$
	for~$w$ large, $\partial_2 G(u,w)>0$ for~$w$ large, and $f(u,t)$ explodes as
	$t\uparrow \hat{T}_\alpha(u)$. Thus, $g$ from~\eqref{eq:def g} satisfies
	\begin{equation}\label{eq:g blowup}
    	\lim_{t\uparrow \hat{T}_\alpha(u)} g(t) = -\infty,
	\end{equation}
	and~$K^{(u)}$ satisfies $\lim_{t\uparrow \hat{T}_\alpha(u)} K^{(u)}(t) = +\infty$.
	We can therefore pick $\varepsilon>0$ such that
	\begin{equation*}
	  g(t)<0,\ K^{(u)}(t) > 0 \quad \text{for} \quad
	    \hat{T}_\alpha(u)-\varepsilon \leq t<\hat{T}_\alpha(u).
	\end{equation*}
	For $z\in[0,1]$ and any~$s,t<\hat{T}_\alpha(u)$ satisfying
	$\hat{T}_\alpha(u)-\varepsilon \leq s\leq t$, we have
	\[
	  s+(t-s)z \geq s \geq \hat{T}_\alpha(u)-\varepsilon.
	\]
	Using this observation in~\eqref{eq:def Phi}, we see from a straightforward induction proof
	that
	\[
	  Q_{\alpha,n}(t,s) > 0, \quad n\geq 1,\ \hat{T}_\alpha(u)-\varepsilon \leq s\leq t
	    < \hat{T}_\alpha(u).
	\]
	The same then holds for the resolvent kernel~\eqref{eq:res kernel},
	\begin{equation}\label{eq:res pos}
	  R_{\alpha}(t,s) > 0, \quad  \hat{T}_\alpha(u)-\varepsilon \leq s\leq t
	    < \hat{T}_\alpha(u).
	\end{equation}
	By~\eqref{eq:vie der}, we obtain
	\begin{equation}\label{eq:der f}
	  \partial_1 f(u,t) = g(t) + \int_{0}^{t-\varepsilon} R_{\alpha}(t,s)g(s)ds
	    +  \int_{t-\varepsilon}^t R_{\alpha}(t,s)g(s)ds.
	\end{equation}
	Now note that
	\begin{equation}\label{eq:2 int}
	  \Big| \int_{0}^{t-\varepsilon} R_{\alpha}(t,s)g(s)ds \Big| \ll
	    - \int_{t-\varepsilon}^t R_{\alpha}(t,s)g(s)ds \quad
	    \text{as}\ t\uparrow \hat{T}_\alpha(u),
	\end{equation}
	where the right-hand side is positive. Indeed, \eqref{eq:2 int}	follows from~\eqref{eq:g blowup} and~\eqref{eq:res pos}, as $g(s)$
	on the left-hand side of~\eqref{eq:2 int} is~$O(1)$.
	Thus, letting $t\uparrow \hat{T}_\alpha(u)$, we find that the negative terms
	$g(t)+\int_{t-\varepsilon}^t$ on the right-hand side of~\eqref{eq:der f} dominate.
	This completes the proof.
\end{proof}
\begin{lemma}\label{le:ana}
  Let $u\in\mathbb R$ and $0<t<\hat{T}_{\alpha}(u)$. Then $f(\cdot,t)$
  is analytic at~$u$.
\end{lemma}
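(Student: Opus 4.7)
The plan is to prove analyticity via the analytic implicit function theorem in a complex Banach space of continuous functions. Fix $u_0\in\mathbb{R}$ and $t$ with $0<t<\hat{T}_\alpha(u_0)$, choose $T\in(t,\hat{T}_\alpha(u_0))$, and set $X:=C([0,T],\mathbb{C})$ with the sup norm. Define
\[
  F\colon\mathbb{C}\times X\to X,\qquad
  F(v,g)(s):=g(s)-\frac{1}{\Gamma(\alpha)}\int_0^s(s-r)^{\alpha-1}G(v,g(r))\,dr.
\]
Because $G(v,w)=(w+e_0(v))^2-e_1(v)$ is a polynomial in $(v,w)$, the map $F$ is jointly analytic between complex Banach spaces, and $F(u_0,f(u_0,\cdot))=0$ by~\eqref{eq:vie}.

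The key step is to verify that the partial Fr\'echet derivative
\[
  \partial_g F(u_0,f(u_0,\cdot))=I-K,\qquad
  (Kh)(s)=\frac{1}{\Gamma(\alpha)}\int_0^s(s-r)^{\alpha-1}\,\partial_2 G(u_0,f(u_0,r))\,h(r)\,dr,
\]
is boundedly invertible on $X$. The coefficient $r\mapsto\partial_2 G(u_0,f(u_0,r))=2(f(u_0,r)+e_0(u_0))$ is bounded on $[0,T]$ by some $C>0$ since $f(u_0,\cdot)$ is continuous there, and a standard induction using the beta-function identity gives $|(K^n h)(s)|\leq C^n\|h\|_\infty s^{n\alpha}/\Gamma(1+n\alpha)$, so the spectral radius of $K$ on $X$ is zero. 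Hence the analytic implicit function theorem for holomorphic maps between complex Banach spaces provides a complex neighborhood $U$ of $u_0$ and a unique analytic map $u\mapsto\widetilde f(u,\cdot)\in X$ with $\widetilde f(u_0,\cdot)=f(u_0,\cdot)$ and $F(u,\widetilde f(u,\cdot))=0$ on $U$. For real $u\in U$ sufficiently close to $u_0$, continuous dependence of solutions to~\eqref{eq:vie} on the parameter (either from the same theorem in real form or from the Volterra theory in~\cite{GrLoSt90}) forces $\hat T_\alpha(u)>T$ and $\widetilde f(u,\cdot)=f(u,\cdot)$, so the analytic extension coincides with the real solution. Evaluating at $t$ yields the claim.

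The main obstacle is obtaining this spectral-radius estimate: the quadratic growth of $G$ in $w$ rules out a global Lipschitz bound, which is precisely why one must first fix $T<\hat T_\alpha(u_0)$ so that $f(u_0,\cdot)$ and therefore the linearization $K$ have bounded coefficients on $[0,T]$. Once this is in place, the Mittag--Leffler-type bound on $\|K^n\|$ is routine, and the analytic IFT delivers the conclusion. An alternative route would be a direct Picard iteration for complex $u$ in a small disk around $u_0$ with a weighted (Bielecki-type) norm adapted to the weakly singular kernel; each iterate is then manifestly polynomial in $u$, and uniform convergence in a complex neighborhood transfers analyticity to the limit, which by uniqueness of the continuous VIE solution must agree with $f(u,\cdot)$.
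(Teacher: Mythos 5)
Your proof is correct, and it takes a genuinely different route from the paper. The paper proves the lemma by Picard iteration: it defines the iterates $f_{n+1}(v,s)=\frac{1}{\Gamma(\alpha)}\int_0^s(s-\tau)^{\alpha-1}G(v,f_n(v,\tau))\,d\tau$, shows each $f_n(\cdot,t)$ is analytic in a complex neighborhood of $u$ via a standard theorem on parameter integrals (after establishing a crude uniform bound of the form $(\theta\eta)^{2^n-1}$), and then invokes the fact that a locally uniform limit of analytic functions is analytic; since the iteration only converges on a short time interval, the paper must also appeal to the continuation procedure in Brunner's monograph to reach an arbitrary $t<\hat{T}_\alpha(u)$ --- essentially your ``alternative route.'' Your main argument via the holomorphic implicit function theorem in $C([0,T],\mathbb{C})$ avoids that piecewise continuation entirely: the quasi-nilpotence of the linearized Volterra operator $K$ (spectral radius zero via the beta-function/Mittag--Leffler bound) makes $I-K$ invertible on the \emph{whole} interval $[0,T]$ in one step, which is the cleaner and arguably more robust argument. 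The only place where you should be slightly more careful is the identification $\widetilde f(u,\cdot)=f(u,\cdot)$ and the conclusion $\hat T_\alpha(u)>T$ for real $u$ near $u_0$: rather than ``continuous dependence,'' the clean justification is uniqueness of the continuous solution on the common interval of existence together with the fact (Theorem~12.1.1 in Gripenberg et al., which the paper cites for exactly this purpose elsewhere) that the solution must blow up at the endpoint of a finite maximal interval --- since $\widetilde f(u,\cdot)$ is bounded on $[0,T]$, this rules out $\hat T_\alpha(u)\le T$. With that point tightened, your proof is complete and delivers the same (indeed, directly the complex-analytic) conclusion the paper needs in Lemma~\ref{le:cons} and Theorem~\ref{thm:compl val}.
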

\begin{proof}
    According to Section 3.1.1 in~\cite{Br17}, the solution can be constructed
    by successive iteration and continuation. We just show that the first iteration
    step leads to an analytic function, because the finitely many further steps needed
    to arrive at arbitrary $t<\hat{T}_{\alpha}(u)$ can be dealt with analogously.
    Define the iterates $f_0=0$
    and
    \[
      f_{n+1}(v,s) := \frac{1}{\Gamma(\alpha)}\int_0^s (s-\tau)^{\alpha-1}
        G(v,f_n(v,\tau))d\tau,\quad n\geq 0.
    \]
    On a sufficiently small time interval, $f_{n}(v,\cdot)$ converges uniformly to $f(v,\cdot)$,
    and the solution can then be continued by solving an updated integral equation and so
    on (see~\cite{Br17}), until we hit $\hat{T}_\alpha(v)$. Now fix~$u$ and~$t$ as in the statement
    of the lemma. For a sufficiently small open complex neighborhood $U\ni u$, it is easy to see
    that $t<\hat{T}_\alpha(v)$ holds for $v\in U$. Define
    \[
      \gamma := 1 \vee \sup_{v\in U} |v| \quad \text{and} \quad
        \eta := 1 \vee \frac{t^\alpha}{\Gamma(\alpha+1)}.
    \]
    Then there is $c\geq 1$ such that, for arbitrary $v\in U$ and $w\in \mathbb C$,
    \begin{align*}
       |G(v,w)| &\leq c\big( (|w|\vee 1)^2 \vee \gamma(|w|\vee 1) \vee \gamma^2\big)\\
       %&\leq c \big(\gamma(|w|\vee 1)^2  \vee \gamma^2\big) \\
       &\leq c \gamma^2 (|w|\vee 1)^2 =: \theta (|w|\vee 1)^2.
    \end{align*}
    By the definition of~$f_n$, a trivial inductive proof then shows that
    \begin{equation}\label{eq:maj}
      \sup_{\substack{v\in U\\s\in[0,t]}} |f_n(v,s)| \leq (\theta \eta)^{2^n-1},\quad n\geq 0.
    \end{equation}
    By a standard result on parameter integrals (Theorem IV.5.8 in~\cite{El09}),
    the bound~\eqref{eq:maj} implies that each function $f_n(\cdot,t)$
    is analytic in~$U$. From the bounds in Section 3.1.1 of~\cite{Br17}, it is very
    easy to see that the convergence $f_n(v,t)\to f(v,t)$ is locally uniform w.r.t.~$v$
    for fixed~$t$. It is well known (see Theorem~3.5.1 in~\cite{GrKr06}) that this implies that the limit function $f(\cdot,t)$
    is analytic. 
\end{proof}
\begin{lemma}\label{le:mon T}
    The function $u\mapsto \hat{T}_\alpha(u)$ increases for $u\leq0$ and decreases
    for $u\geq1$.
\end{lemma}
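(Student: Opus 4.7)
The plan is to prove monotonicity for $u \leq 0$ (the decreasing behavior for $u \geq 1$ is analogous). Fix $u_1 < u_2 \leq 0$. If $\hat{T}_\alpha(u_2) = \infty$ there is nothing to show, so assume $T_2 := \hat{T}_\alpha(u_2) < \infty$, which by Propositions~\ref{prop:case1}--\ref{prop:case4} places $u_2$ in case~\ref{itemIII: A} or~\ref{itemIII: B}. Inspecting the formulas~\eqref{eqIII:e_0}--\eqref{eqIII:e_1} (case-by-case on the sign of $\rho$), the set $\{u \leq 0 : u \in \text{(A)} \cup \text{(B)}\}$ is a half-line $(-\infty, u^-)$ with $u^- < 0$, so every $v \in [u_1, u_2]$ lies in case~\ref{itemIII: A} or~\ref{itemIII: B}, and in particular $\hat{T}_\alpha(v) < \infty$ throughout.

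The heart of the argument is a \emph{local monotonicity}: for each $v \in [u_1, u_2]$ there exists $\eta > 0$ such that $\hat{T}_\alpha(u) \leq \hat{T}_\alpha(v)$ whenever $u \in [v - \eta, v]$. By Lemma~\ref{le:mon}, $\partial_1 f(v, t) < 0$ on some left-neighborhood $(T_v - \delta, T_v)$ of $T_v := \hat{T}_\alpha(v)$. Using the analyticity of $f(\cdot, t)$ in $u$ from Lemma~\ref{le:ana} together with joint continuity, one extends this to $\partial_1 f(u, t) < 0$ on a strip $[v - \eta, v] \times (T_v - \delta, T_v)$ (restricted to $t < \hat{T}_\alpha(u)$). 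Integrating in $u$ yields $f(u, t) > f(v, t)$ on that region; if some $u \in [v - \eta, v)$ had $\hat{T}_\alpha(u) > T_v$, then letting $t \uparrow T_v$ would force $f(u, t) \to \infty$ for $u$ strictly below its own explosion time, a contradiction.

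The global statement then follows by a connectedness bootstrap. Define
\[
A := \{ v \in [u_1, u_2] : \hat{T}_\alpha(u) \leq T_2 \text{ for all } u \in [v, u_2] \}.
\]
Then $u_2 \in A$, $A$ is closed, and $A$ is upward-closed in $[u_1, u_2]$, so $A = [v_\ast, u_2]$ for some $v_\ast \in [u_1, u_2]$. If $v_\ast > u_1$, then $v_\ast \in A$ gives $\hat{T}_\alpha(v_\ast) \leq T_2 < \infty$, so the local monotonicity at $v_\ast$ produces an $\eta > 0$ with $\hat{T}_\alpha(u) \leq \hat{T}_\alpha(v_\ast) \leq T_2$ for all $u \in [v_\ast - \eta, v_\ast]$; this forces $[v_\ast - \eta, u_2] \subset A$, contradicting $v_\ast = \inf A$. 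Hence $v_\ast = u_1$, giving $\hat{T}_\alpha(u_1) \leq T_2 = \hat{T}_\alpha(u_2)$.

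The main obstacle is the propagation of $\partial_1 f < 0$ to a two-dimensional strip near $(v, T_v)$ in case~\ref{itemIII: B}: Lemma~\ref{le:mon}(ii) supplies the sign only on a left-neighborhood of $T_v$, and the radius of $u$-analyticity of $f(\cdot, t)$ can shrink as $t \uparrow T_v$. A cleaner workaround is a first-passage-time argument: set $\tau(u, M) := \inf\{t \geq 0 : f(u, t) = M\}$, apply the implicit function theorem together with Lemma~\ref{le:mon} (using $\partial_t f > 0$ at first passage for large $M$) to show $\tau(\cdot, M)$ is non-decreasing in $u$, and pass to the limit $M \to \infty$ to recover $\hat{T}_\alpha$.
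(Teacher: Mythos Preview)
Your proposal uses exactly the same ingredients as the paper's proof---the sign of $\partial_1 f$ from Lemma~\ref{le:mon} and the regularity from Lemma~\ref{le:ana}---and the decisive step (if $\partial_1 f<0$ on a suitable rectangle, then $f(u_1,t)\geq f(u_0,t)\to\infty$ as $t\uparrow\hat T_\alpha(u_0)$, contradicting $\hat T_\alpha(u_1)>\hat T_\alpha(u_0)$) is identical. The only structural difference is that the paper argues by a single contradiction at a ``bad'' point~$u_0$ rather than by local monotonicity plus a connectedness bootstrap. This is precisely how the paper addresses the obstacle you flag in your last paragraph: the hypothesis being contradicted already supplies a nearby $u_1<u_0$ with $\hat T_\alpha(u_1)>\hat T_\alpha(u_0)$ (so the domain at~$u_1$ extends past $\hat T_\alpha(u_0)$), and the sign condition on the full rectangle $[u_1,u_0]\times[t_1,\hat T_\alpha(u_0))$ is obtained by invoking continuity of the threshold~$\tau$ defined in~\eqref{eq:tau}. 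Your first-passage workaround is therefore unnecessary; it would in any case require justifying $\partial_t f>0$ at first passage for large~$M$, which is not established.
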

\begin{proof}
  Recall that $\hat{T}_\alpha(u)=\infty$ in cases~(C) and~(D), which include $u\in[0,1]$.
  For case~(A), the assertion follows
  from part~(i) of Lemma~\ref{le:mon}.  
  So let~$u$ satisfy case~(B), where again we assume
  w.l.o.g.\ that $u<0$. Suppose that $\hat{T}_\alpha(\cdot)$ does not increase. Then we can
  pick $u_0<0$ such that any left neighborhood of~$u_0$ contains a point~$u$
  with $\hat{T}_\alpha(u)>\hat{T}_\alpha(u_0)$. From the continuity of $\partial_1 f$
  (see Lemma~\ref{le:ana}), part~(ii) of Lemma~\ref{le:mon}, and the continuity of~$\tau$
  from~\eqref{eq:tau},
  there are $u_1<u_0$ satisfying $\hat{T}_\alpha(u_1)>\hat{T}_\alpha(u_0)$
  and $t_1<\hat{T}_\alpha(u_0)$ such that $\partial_1 f(u,t)<0$ in the
  rectangle
  \[
    \{ (u,t) : u_1 \leq u \leq u_0, t_1 \leq t < \hat{T}_\alpha(u_1) \}.
  \]
  Then, $\lim_{t\uparrow \hat{T}_\alpha(u_0)} f(u_0,t)=\infty$ implies that
  \begin{equation}\label{eq:u1 expl}
    \lim_{t\uparrow \hat{T}_\alpha(u_0)} f(u_1,t)=\infty,
  \end{equation}
  because the inequality $\partial_1 f(u,t)<0$
  shows that
  $f(u_1,\cdot)$ must explode at least as fast as $f(u_0,\cdot)$.
  But~\eqref{eq:u1 expl} is a contradiction to $\hat{T}_\alpha(u_1)>\hat{T}_\alpha(u_0)$.
\end{proof}
\begin{lemma}\label{le:cons}
  Let $u\in\mathbb R$ and $0<t<\hat{T}_{\alpha}(u)$. Then~\eqref{eq:mgf} holds,
  where (as above) $f=c_3 \psi$ and~$f(u,\cdot)$ is the solution of~\eqref{eq:vie}.
\end{lemma}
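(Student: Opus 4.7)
The strategy is analytic continuation, starting from the open interval of agreement provided by Lemma~\ref{le:valER} and extending to all $u$ with $t<\hat T_\alpha(u)$.

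Fix $t>0$ and set $I_t:=\{u\in\mathbb R:t<\hat T_\alpha(u)\}$. By Lemma~\ref{le:mon T} combined with $\hat T_\alpha\equiv\infty$ on $[0,1]$ (cases~(C) and~(D)), $I_t$ is an open real interval containing $[0,1]$. Denote the right-hand side of~\eqref{eq:mgf} by $N(u,t)$ and the mgf by $M(u,t):=\mathbb E[e^{uX_t}]$. First I would verify that $N(\cdot,t)$ is real-analytic on $I_t$. Lemma~\ref{le:ana} provides analyticity of $\psi(\cdot,s)=f(\cdot,s)/c_3$ in a complex neighborhood of every real $u\in I_t$ for each $s\in[0,t]$. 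Together with the uniform majorant~\eqref{eq:maj} from the proof of Lemma~\ref{le:ana}, the standard differentiation-under-the-integral result (Theorem~IV.5.8 of~\cite{El09}) then gives analyticity in $u$ of both fractional integrals $I^1_t\psi(u,t)$ and $I^{1-\alpha}_t\psi(u,t)$, hence of $N(\cdot,t)$ on $I_t$.

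Next, Lemma~\ref{le:valER} furnishes an open real interval $J\subset I_t$ on which $M=N$. The mgf $M(\cdot,t)$ is real-analytic on the interior of its (convex) strip of finiteness, in which $J$ certainly lies; the identity theorem for real-analytic functions therefore propagates the equality $M=N$ to the entire connected component $C\subset I_t$ of $\{u\in I_t:M(u,t)<\infty\}$ containing $J$.

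The main obstacle is to conclude that $C=I_t$. Let $b:=\sup C$ and suppose for contradiction that $b<\sup I_t$ (the left endpoint is symmetric). As $u\uparrow b$, $M(u,t)=N(u,t)\to N(b,t)<\infty$ by continuity of $N$ on $I_t$, and Fatou's lemma gives $M(b,t)\le N(b,t)<\infty$. The remaining step, which I expect to be the genuine difficulty, is to extend finiteness of $M(\cdot,t)$ strictly past $b$ so as to contradict the maximality of $C$. I would handle this via a localization argument: stop the variance process at $\tau_n:=\inf\{s:V_s>n\}$, so that all exponential moments on the stopped model are trivially finite; apply the already-established equality on the stopped model; and pass to $n\to\infty$ using $N(u,t)$ as a locally uniform dominating integrand on a right neighborhood of $b$. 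Combining with the symmetric argument for the left endpoint yields $C=I_t$, and hence~\eqref{eq:mgf} holds throughout $I_t$, as claimed.
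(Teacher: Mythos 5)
Your setup---real-analyticity of the right-hand side $N(\cdot,t)$ of~\eqref{eq:mgf} on $I_t$ via Lemma~\ref{le:ana}, the seed interval from Lemma~\ref{le:valER}, and propagation of $M=N$ through the component $C$ of finiteness of the mgf---matches the paper's strategy. The gap is exactly where you flag it: extending finiteness of $M(\cdot,t)$ strictly past $b=\sup C$, and the localization argument you sketch does not close it. There is no ``already-established equality on the stopped model'': Lemma~\ref{le:valER} and the Riccati/VIE representation are statements about the rough Heston model itself, and the process stopped at $\tau_n=\inf\{s:V_s>n\}$ is not a rough Heston model, so no analogue of~\eqref{eq:mgf} is available for it. Monotone convergence does give $\mathbb{E}[e^{uX_t}\mathbf{1}_{\{\tau_n>t\}}]\uparrow\mathbb{E}[e^{uX_t}]$, but the uniform bound $\mathbb{E}[e^{uX_t}\mathbf{1}_{\{\tau_n>t\}}]\le N(u,t)$ that you would need for $u$ slightly beyond $b$ is precisely the unproven assertion; $N(u,t)$ is a number produced by the VIE, not a dominating integrand, and establishing such a verification/supersolution inequality for the stopped process is a substantial piece of work that the proposal does not supply.

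The paper closes the gap with a device that avoids proving finiteness past $b$ altogether: by Widder's theorems (Theorems~II.5a--b in~\cite{Wi41}), the mgf $\tilde M(\cdot,t)$ of a positive random variable is analytic on a vertical strip $w^-<\mathrm{Re}(v)<w^+$ and necessarily has a \emph{singularity} at the real endpoint $v=w^-$ (resp.\ $w^+$). If $w^-$ lay in the interior of $I_t$, then $N(\cdot,t)$ would be analytic at $w^-$ by Lemma~\ref{le:ana} and would agree with $\tilde M(\cdot,t)$ on an interval to the right of $w^-$, hence would analytically continue $\tilde M(\cdot,t)$ across $w^-$, contradicting the singularity. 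So the abscissa of convergence cannot sit inside $I_t$, which is what forces $C=I_t$. (The paper phrases this for a fixed pair $(u,t)$ at positive distance from the graph of the monotone function $\hat{T}_\alpha$, continuing in $u$ toward the seed interval, but the content is the same.) To repair your write-up, replace the stopping-time step by this Landau--Pringsheim-type singularity argument for Laplace transforms of positive measures.
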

\begin{proof}
  We assume that $u<0$, as $u\geq0$ is handled analogously.
  By Lemma~\ref{le:mon T}, $u\mapsto \hat{T}_{\alpha}(u)$ increases.
  In this proof, we write $M(u,t)$ for the right-hand side of~\eqref{eq:mgf},
  and $\tilde{M}(u,t)=\mathbb{E}[e^{uX_t}]$ for the mgf.
  Now fix $u< 0$ and $0<t<\hat{T}_{\alpha}(u)$ such that $(u,t)$ has positive
  distance from the graph of the increasing function $\hat{T}_\alpha(\cdot)$.
  Clearly, it suffices to consider pairs $(u,t)$ with this property.
  By Lemma~\ref{le:valER}, there are~$v^-<v^+$ such that
  \begin{equation}\label{eq:MM}
    M(v,t)=\tilde{M}(v,t),\quad v^-< v <v^+.
  \end{equation}
  We now show that~\eqref{eq:MM} extends to $u\leq v \leq v^+$ by analytic continuation.
  From general results on characteristic functions (Theorems~II.5a and~II.5b in~\cite{Wi41}),
  $v\mapsto \tilde{M}(v,t)$ is analytic in a vertical strip $w^-<\mathrm{Re}(v)<w^+$ of the complex plane,
  and has a singularity at $v=w^-$. If we suppose that $w^->u$, then Lemma~\ref{le:ana}
  leads to a contradiction: The left-hand side of~\eqref{eq:MM} would then be analytic at $v=w^-$,
  and the right-hand side singular. This shows that~\eqref{eq:MM} can be extended to the
  left up to~$u$ by analytic continuation.
\end{proof}
%
%It is now clear that the explosion time $\hat T$ of the VIE~\eqref{eq:vie}
%and the explosion time $T^*$ of the rough Heston model agree.
The following theorem
completes the proof of Theorem~\ref{thm:finite}.
\begin{theorem}\label{thm:validity}
  Let $u\in\mathbb R$. Then $\hat{T}_\alpha(u)= T^*_\alpha(u)$,
  and~\eqref{eq:mgf} holds for $0<t<T^*_\alpha(u)$.
\end{theorem}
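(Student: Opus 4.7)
The plan is to prove the equality $\hat{T}_\alpha(u) = T^*_\alpha(u)$ via two inequalities; once this is in hand, the validity of~\eqref{eq:mgf} on $(0, T^*_\alpha(u))$ is immediate from Lemma~\ref{le:cons}.

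For the easy direction $\hat{T}_\alpha(u) \leq T^*_\alpha(u)$, fix $0 < t < \hat{T}_\alpha(u)$. Lemma~\ref{le:cons} gives \eqref{eq:mgf}, and since $\psi(u,\cdot) = f(u,\cdot)/c_3$ is continuous on $[0,t]$, the fractional integrals $I^1_t \psi(u,\cdot)$ and $I^{1-\alpha}_t \psi(u,\cdot)$ are finite. Hence $\mathbb{E}[e^{uX_t}] < \infty$, so $t \leq T^*_\alpha(u)$.

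For the reverse direction $T^*_\alpha(u) \leq \hat{T}_\alpha(u)$, I split on $u$. If $u$ satisfies (C) or (D), the inequality is trivial since $\hat{T}_\alpha(u) = \infty$ by Propositions~\ref{prop:case3}--\ref{prop:case4}. If $u$ satisfies (A) or (B), Propositions~\ref{prop:case1}--\ref{prop:case2} give $\psi(u, \cdot) \geq 0$ on $[0, \hat{T}_\alpha(u))$ and $\psi(u, t) \to +\infty$ as $t \uparrow \hat{T}_\alpha(u)$. Applying Lemma~\ref{le:blowup} (whose argument carries over to fractional integrals of any positive order; here I use it with order $1-\alpha \in (0, 1/2)$), $I^{1-\alpha}_t \psi(u, t) \to +\infty$ as $t \uparrow \hat{T}_\alpha(u)$. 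Together with $v_0 > 0$ and $I^1_t \psi \geq 0$, the exponent in \eqref{eq:mgf} diverges, so Lemma~\ref{le:cons} forces $\mathbb{E}[e^{uX_t}] \to +\infty$ as $t \uparrow \hat{T}_\alpha(u)$. To extend this limit to $\mathbb{E}[e^{uX_t}] = +\infty$ for all $t \geq \hat{T}_\alpha(u)$, I invoke monotonicity: since $c_1(u) > 0$ in cases (A)--(B) forces $u \in (-\infty, 0) \cup (1, \infty)$, the map $x \mapsto x^u$ is convex on $(0,\infty)$; assuming $S$ is a true non-negative martingale (standard in rough Heston, e.g.\ via the affine Volterra framework of~\cite{AbLaPu17}), Jensen's inequality makes $S^u$ a non-negative submartingale, so $t \mapsto \mathbb{E}[S_t^u] = S_0^u\, \mathbb{E}[e^{uX_t}]$ is non-decreasing. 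Hence $\mathbb{E}[e^{uX_t}] = +\infty$ for $t \geq \hat{T}_\alpha(u)$, yielding $T^*_\alpha(u) \leq \hat{T}_\alpha(u)$.

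The main obstacle is this monotonicity step, which relies on the true-martingale property of $S$; although standard, it is not established in the excerpt and requires a reference or a separate verification. An alternative that avoids Jensen would be to exploit the analyticity of $u \mapsto f(u,t)$ (Lemma~\ref{le:ana}) jointly with a Wintner-type strip argument (as in the proof of Lemma~\ref{le:cons}) to rule out finiteness of the mgf past $\hat{T}_\alpha(u)$ by analytic continuation in~$t$.
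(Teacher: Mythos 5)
Your proof is correct and follows the same skeleton as the paper's: Lemma~\ref{le:cons} carries all the weight, the inequality $\hat{T}_\alpha(u)\le T^*_\alpha(u)$ is immediate from finiteness of the right-hand side of~\eqref{eq:mgf} for $t<\hat{T}_\alpha(u)$, and the reverse inequality comes from its blow-up. The only substantive difference is how you close the reverse direction: the paper argues that $t\mapsto\mathbb{E}[e^{uX_t}]$ is \emph{continuous} on $(0,T^*_\alpha(u))$ (via continuity of $X$, Doob's submartingale inequality, and dominated convergence), so it cannot blow up at an interior point $\hat{T}_\alpha(u)<T^*_\alpha(u)$; you instead use \emph{monotonicity} of $t\mapsto\mathbb{E}[S_t^u]$ from the conditional Jensen inequality to push the infinity forward past $\hat{T}_\alpha(u)$. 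Your version is arguably more elementary (no dominating function is needed), and the martingale property you flag as an unverified hypothesis is not a gap relative to the paper, which relies on it in exactly the same way --- in Lemma~\ref{le:gen}(ii), implicitly in its appeal to Doob's inequality here, and explicitly (``As $S=e^X$ is a martingale\dots'') in its alternative proof of the same inequality. Your application of Lemma~\ref{le:blowup} with fractional order $1-\alpha$ is also legitimate, since its proof uses only positivity and monotonicity of the kernel; alternatively, the term $\bar v\lambda I^1_t\psi$ already diverges by the lemma as stated, so the divergence of the exponent does not even require the order-$(1-\alpha)$ extension.
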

\begin{proof}
  In the light of Lemma~\ref{le:cons}, it only remains to show that
  $\hat{T}_\alpha(u)\geq T^*_\alpha(u)$. (Obviously, Lemma~\ref{le:cons}
  implies that $\hat{T}_\alpha(u)\leq T^*_\alpha(u)$.)
  But this is clear from the continuity of the map
  $t\mapsto \tilde{M}(u,t)=\mathbb{E}[e^{uX_t}]$ on the interval $(0,T^*_\alpha(u))$.
  This continuity follows from the continuity of~$t\mapsto X_t$, Doob's submartingale
  inequality, and dominated convergence.
\end{proof}

For later use (Section~\ref{se:comp mom}), we give the following alternative argument:
\begin{proof}[Another proof that $\hat{T}_\alpha(u)\geq T^*_\alpha(u)$]
  Let us suppose that
  there is~$u_0$ with $\hat{T}_\alpha(u_0) < T^*_\alpha(u_0)$.
  From Theorem~\ref{thm:lb}, it is easy to see that the continuous
  function $u\mapsto \hat{T}_\alpha(u)$ tends to $+\infty$
  as $u<0$ approaches the region where $\hat{T}_\alpha(u)=\infty$.
  Thus, there is $u_1>u_0$ with $\hat{T}_\alpha(u_0)<\hat{T}_\alpha(u_1)<T^*_\alpha(u_0)$.
  
  We have seen in Lemma~\ref{le:ana} that $u\mapsto f(u,t)$ is analytic
  for any fixed~$t$. But it is also analytic w.r.t.~$t$ for fixed~$u$:
  From Theorem~1 in Lubich~\cite{Lu83}, itself based on earlier work by Miller and
  Feldstein~\cite{MiFe71}, it follows that~$f(u,\cdot)$ is analytic on
  the whole interval $(0,\hat{T}_\alpha(u))$. By Hartogs's theorem
  (Theorem~1.2.5 in~\cite{Kr92}), we conclude that the bivariate function $f(\cdot,\cdot)$ is continuous. Thus, the blow-up of $f(u_1,\cdot)$ at $\hat{T}_\alpha(u_1)$
  implies that
  \begin{equation}\label{eq:blowup u}
    \lim_{u\downarrow u_1}M(u,\hat{T}_\alpha(u_1))=\infty.
  \end{equation}
  (Again, we write~$M$ for the right-hand side of~\eqref{eq:mgf}
  and~$\tilde{M}$  for the mgf.) By Lemma~\ref{le:cons}, $\tilde{M}(\cdot,\hat{T}_\alpha(u_1))$ also blows up there, and thus has a singularity at~$u_1$.
  Since $u_0<u_1$, we conclude from Corollary~II.1b in~\cite{Wi41}
  that $\tilde{M}(u_0,\hat{T}_\alpha(u_1))=\infty$. As $S=e^X$
  is a martingale, this implies that $\tilde{M}(u_0,t)=\infty$ for all
  $t\geq\hat{T}_\alpha(u_1)$. In particular, it contradicts
  $\hat{T}_\alpha(u_1)<T^*_\alpha(u_0)$.
\end{proof}

\section{Validity of the fractional Riccati equation for complex~$u$}\label{se:complex}

Although the focus of this paper is on \emph{real}~$u$, the mgf needs to be evaluated
at complex arguments when used for option pricing. The following result fully justifies
using the fractional Riccati equation~\eqref{eqIII:Ha1}, respectively the VIE~\eqref{eq:vie},
to do so. As above, we write $T^*_\alpha(u)$ for the moment explosion time of~$S$,
and $\hat{T}_\alpha(u)$ for the explosion time of the VIE~\eqref{eq:vie}.

\begin{theorem}\label{thm:compl val}
  Let $u\in\mathbb C$. Then $T^*_\alpha(u)=T^*_\alpha(\mathrm{Re}(u))$, and~\eqref{eq:mgf}
  holds for $0<t<T^*_\alpha(u)$.
\end{theorem}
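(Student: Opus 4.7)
The first assertion is immediate from $|e^{uX_t}|=e^{\mathrm{Re}(u)X_t}$: the complex expectation $\mathbb E[e^{uX_t}]$ is absolutely integrable iff $\mathbb E[e^{\mathrm{Re}(u)X_t}]<\infty$, so the natural extension of~\eqref{eq:def exp time} to complex argument yields $T^*_\alpha(u)=T^*_\alpha(\mathrm{Re}(u))$. For the formula~\eqref{eq:mgf} I plan an analytic continuation in $u$. Fix $t>0$ and consider the open vertical strip
\[
  D_t := \{v\in\mathbb C : t<T^*_\alpha(\mathrm{Re}(v))\},
\]
which is connected because, by Theorem~\ref{thm:finite}, its real trace $D_t\cap\mathbb R$ is a non-degenerate open interval. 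Write $\tilde M(v,t):=\mathbb E[e^{vX_t}]$ and $M(v,t)$ for the right-hand side of~\eqref{eq:mgf}. Classical Laplace-transform theory (Theorems~II.5a--b in~\cite{Wi41}) gives that $\tilde M(\cdot,t)$ is holomorphic on $D_t$, and Theorem~\ref{thm:validity} already yields $\tilde M=M$ on $D_t\cap\mathbb R$. If $M(\cdot,t)$ can be shown holomorphic on all of $D_t$, the identity theorem on this connected domain then forces $M\equiv\tilde M$ on $D_t$.

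The substantive step is to extend the VIE solution $f(v,\cdot)$ of~\eqref{eq:vie} to complex $v\in D_t$ with holomorphic dependence on $v$ for every $s\in[0,t]$. The Picard iteration in the proof of Lemma~\ref{le:ana} is algebraically identical in the complex case, and the majorant bound~\eqref{eq:maj} continues to hold on any complex open neighborhood of a base point in $D_t$. Centring at a real $r\in D_t\cap\mathbb R$ yields holomorphy of $f(\cdot,s)$ on a small complex disk about $r$, uniformly for $s\in[0,t]$; to reach an arbitrary $v_0=r+i\sigma_0\in D_t$ I continue along the vertical segment $\sigma\mapsto r+i\sigma$, showing that the set
\[
  E := \{\sigma\in[0,\sigma_0] : f(\cdot,s)\text{ extends holomorphically near }r+i\sigma\text{ for every }s\in[0,t]\}
\]
is non-empty, open (by re-centring the iteration at each already-reached point), and closed, and therefore equal to $[0,\sigma_0]$. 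Holomorphy of the integrals $I^1_t f$ and $I^{1-\alpha}_t f$ in $v$ then follows from a standard parameter-integral theorem, making $M(\cdot,t)$ holomorphic on $D_t$ as required.

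I expect the main obstacle to be the closedness of $E$: as the base point is moved off the real axis, the constants $\gamma,\eta$ in~\eqref{eq:maj} grow and the local-in-$t$ existence interval produced by one round of the iteration shrinks, so one must check that the resulting local solutions still patch to cover the full $[0,t]$ at every accumulation point of $E$. The plan is to combine continuous dependence of $\gamma,\eta$ on the base point (uniform on compact subsegments) with an a priori bound on $\sup_{s\in[0,t]}|f(v,s)|$ along compact pieces of the segment, derived from the real-axis control in Propositions~\ref{prop:case1}--\ref{prop:case2} and the strict inequality $\hat T_\alpha(\mathrm{Re}(v))>t$ available on such compact sets by Theorem~\ref{thm:lb}; these uniformly bound the number of iteration--continuation steps needed, closing $E$ and completing the argument.
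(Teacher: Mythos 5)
Your overall strategy---fix $t$, work on the vertical strip $D_t$, use analyticity of both sides of~\eqref{eq:mgf} in $v$ plus the identity theorem, with Theorem~\ref{thm:validity} supplying agreement on the real trace---is exactly the paper's. The divergence, and the gap, is in how you establish that $M(\cdot,t)$ is holomorphic on all of $D_t$, which requires that the VIE solution $f(v,\cdot)$ exist (without blow-up) on $[0,t]$ for every complex $v$ in the strip, i.e.\ that $\hat T_\alpha(v)\geq T^*_\alpha(\mathrm{Re}(v))$. You propose to prove this by an open--closed continuation along vertical segments, closing the set $E$ via ``an a priori bound on $\sup_{s\in[0,t]}|f(v,s)|$ \ldots derived from the real-axis control in Propositions~\ref{prop:case1}--\ref{prop:case2}.'' No such bound follows from real-axis control: for complex $v$, \eqref{eq:vie} is a genuinely two-dimensional real system with quadratic nonlinearity, and there is no comparison or domination principle giving $|f(v,s)|\lesssim f(\mathrm{Re}(v),s)$. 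Taking absolute values inside the VIE only produces a quadratic majorant equation whose own blow-up time may well be \emph{smaller} than $\hat T_\alpha(\mathrm{Re}(v))$, so the uniform bound you need to close $E$ is precisely the statement you are trying to prove, and your argument is circular at that point.

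The paper resolves this with a probabilistic input rather than VIE analysis: Lemma~\ref{le:compl hat} shows $\hat T_\alpha(v)\geq T^*_\alpha(v)$ for complex $v$ by combining the blow-up alternative for the two-dimensional real VIE (Theorem~12.1.1 in~\cite{GrLoSt90}: a finite maximal existence time forces explosion) with the continuity of $t\mapsto\mathbb{E}[e^{vX_t}]$ on $[0,T^*_\alpha(v))$, established as in the proof of Theorem~\ref{thm:validity}. With that lemma in hand, the set $\{v:\hat T_\alpha(v)\geq t\}$ contains the whole strip, the Picard-iteration argument of Lemma~\ref{le:ana} (which, as you note, carries over verbatim to complex $v$ and yields~\eqref{eq:maj}) gives holomorphy of $M(\cdot,t)$ there, and the identity theorem finishes the proof. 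Your first assertion ($T^*_\alpha(u)=T^*_\alpha(\mathrm{Re}(u))$) and the analytic-continuation frame are fine; to repair the proof you must replace the purported real-axis a priori bound by an argument of the type of Lemma~\ref{le:compl hat}, or some other genuinely new control on the complex VIE.
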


\begin{lemma}\label{le:compl hat}
  Let $u\in \mathbb C$. Then $\hat{T}_\alpha(u) \geq T^*_\alpha(u)$.
\end{lemma}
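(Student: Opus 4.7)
Let $u \in \mathbb{C}$ and write $u_r := \mathrm{Re}(u)$. Since $|\e^{uX_t}| = \e^{u_r X_t}$, one has $T_\alpha^*(u) = T_\alpha^*(u_r)$, and by Theorem~\ref{thm:validity} this equals $\hat{T}_\alpha(u_r)$. The task therefore reduces to showing $\hat{T}_\alpha(u) \geq \hat{T}_\alpha(u_r)$; we may assume this is positive and fix any $T_1 < \hat{T}_\alpha(u_r)$. It suffices to prove $\hat{T}_\alpha(u) > T_1$. The plan is a continuation argument along the vertical segment $v_\sigma := u_r + i\sigma\,\mathrm{Im}(u)$, $\sigma \in [0,1]$, anchored at $\sigma = 0$ on the real point $u_r$, where Theorem~\ref{thm:validity} already gives the conclusion.

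First I would extend Lemma~\ref{le:ana} to complex base parameter: the Picard iteration used there relies only on the local Lipschitz character of the polynomial $G$ in both variables and nowhere on the realness of the parameter. Hence for any $w \in \mathbb{C}$ and $t_0 < \hat{T}_\alpha(w)$, $f(\cdot, t_0)$ is analytic on some complex neighborhood of $w$; this makes the superlevel set $\Omega_{T_1} := \{v \in \mathbb{C} : \hat{T}_\alpha(v) > T_1\}$ open in $\mathbb{C}$. Combining this spatial analyticity with the analyticity of $f(v,\cdot)$ in $t$ provided by Lubich's theorem~\cite{Lu83} (the same ingredient used in the alternative proof of Theorem~\ref{thm:validity}) and Hartogs's theorem, $f$ becomes jointly analytic --- hence jointly continuous --- on $\{(v,t) \in \mathbb{C}^2 : t < \hat{T}_\alpha(v)\}$.

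Let $\sigma^* := \sup\{\sigma \in [0,1] : v_\mu \in \Omega_{T_1}\text{ for all }\mu\in[0,\sigma]\}$; openness of $\Omega_{T_1}$ and $u_r \in \Omega_{T_1}$ give $\sigma^* > 0$. Suppose, for contradiction, $\sigma^* < 1$. For $\sigma < \sigma^*$, Lemma~\ref{le:cons}, analytically continued along the segment (possible because $\{v_\mu : \mu \leq \sigma\}$ lies in $\Omega_{T_1}$ and is connected to a real neighborhood of $u_r$ on which both sides of~\eqref{eq:mgf} are analytic and coincide), identifies the formula value $M(v_\sigma, T_1)$ with the characteristic-function value $\tilde M(v_\sigma, T_1) := \mathbb{E}[\e^{v_\sigma X_{T_1}}]$, which satisfies the a priori bound $|\tilde M(v_\sigma, T_1)| \leq \tilde M(u_r, T_1) < \infty$. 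Invoking a complex analog of Lemma~\ref{le:blowup} --- that blow-up of $\psi$ at a time $\leq T_1$ forces $|M|$ to diverge --- this boundedness rules out $\hat{T}_\alpha(v_{\sigma^*}) \leq T_1$. Therefore $v_{\sigma^*} \in \Omega_{T_1}$, but then openness of $\Omega_{T_1}$ contradicts the maximality of $\sigma^*$. Hence $\sigma^* = 1$, so $u \in \Omega_{T_1}$, i.e., $\hat{T}_\alpha(u) > T_1$; letting $T_1 \uparrow \hat{T}_\alpha(u_r)$ concludes the proof.

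The principal difficulty is the last technical step: a complex analog of Lemma~\ref{le:blowup}, showing that blow-up of $\psi(v, \cdot)$ necessarily drives $|M(v, T_1)|$ to infinity and ruling out cancellations in the integrals $I_{T_1}^1 \psi$ and $I_{T_1}^{1-\alpha} \psi$. Without structural information on the direction of the VIE blow-up, a complex-valued $\psi$ could in principle oscillate so that its integrals remain bounded while $|\psi|$ diverges; pinning this down will probably require a local asymptotic description of $\psi$ near its explosion.
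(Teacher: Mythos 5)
Your strategy does not close, and you point to the hole yourself: the whole argument funnels into the ``complex analog of Lemma~\ref{le:blowup}'', i.e.\ the claim that blow-up of the complex-valued $\psi(v,\cdot)$ forces the right-hand side of~\eqref{eq:mgf} to diverge in modulus. This is a genuine obstruction, not a routine verification: for complex $\psi$ the fractional integrals $I^1_t\psi$ and $I^{1-\alpha}_t\psi$ could a priori remain bounded through cancellation while $|\psi|$ explodes, and even if they diverge, $|\exp(z)|=e^{\mathrm{Re}(z)}$ is blind to a divergent imaginary part. Without this step the contradiction at $\sigma^*$ (boundedness of $M(v_\sigma,T_1)$ for $\sigma<\sigma^*$ versus $\hat{T}_\alpha(v_{\sigma^*})\le T_1$) never materializes, so the continuation-along-the-segment machinery (openness of $\Omega_{T_1}$, joint analyticity via Hartogs, propagating Lemma~\ref{le:cons}) is scaffolding around a missing keystone. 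As written, the proposal is a proof sketch with an acknowledged gap, not a proof.

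The paper takes a different and much shorter route that does not pass through the modulus of $M$ at all. Assuming $\hat{T}_\alpha(u)<T^*_\alpha(u)$, it rewrites~\eqref{eq:vie} as a two-dimensional \emph{real} VIE for $(\mathrm{Re}\,f,\mathrm{Im}\,f)$ and invokes the noncontinuation alternative for Volterra equations (Theorem~12.1.1 in~\cite{GrLoSt90}): a solution with finite maximal existence time must explode, so $|f(u,t)|\to\infty$ as $t\uparrow\hat{T}_\alpha(u)$. This explosion is then played off against the continuity of $t\mapsto\mathbb{E}[e^{uX_t}]$ on $[0,T^*_\alpha(u))\supset[0,\hat{T}_\alpha(u)]$, established as in the proof of Theorem~\ref{thm:validity}. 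In particular, no analytic continuation along a path in the $u$-plane and no complex version of Lemma~\ref{le:ana} are needed at this stage; the extension of the identity $M=\tilde{M}$ to complex $u$ is deferred to Theorem~\ref{thm:compl val}, where the present lemma is an input rather than a consequence. If you want to salvage your route, the local asymptotic description of the complex blow-up that you mention in your last paragraph is precisely the work that remains; the paper's appeal to the VIE continuation theorem is how it sidesteps that analysis.
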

\begin{proof}
  Suppose that $\hat{T}_\alpha(u) < T^*_\alpha(u)$. The VIE~\eqref{eq:vie} translates
  into a two-dimensional real VIE for $(\mathrm{Re}(f),\mathrm{Im}(f))$. As $\hat{T}_\alpha(u)<\infty$,
  we get from Theorem~12.1.1 in~\cite{GrLoSt90} that $(\mathrm{Re}(f),\mathrm{Im}(f))$ explodes
  as $t\uparrow \hat{T}_\alpha(u)$. This contradicts the continuity of $t\mapsto \mathbb{E}[e^{uX_t}]$,
  where the latter is shown as in the proof of Theorem~\ref{thm:validity}.
\end{proof}

\begin{proof}[Proof of Theorem~\ref{thm:compl val}]
  The first statement is clear from $|e^{uX_t}|= e^{\mathrm{Re}(u)X_t}$.
  Now let $t>0$ be arbitrary. As above, we write $\tilde{M}$ for the mgf and~$M$
  for the right-hand side of~\eqref{eq:mgf}. By Theorem~\ref{thm:validity},
  we have $M(v,t)=\tilde{M}(v,t)$ for~$v$ in the real interval
  \[
    I:=\{ v \in \mathbb R : T^*_\alpha(v) \geq t\}.
  \]
  The function $\tilde{M}(\cdot,t)$ is analytic on the strip
  \begin{equation}\label{eq:strip}
    \{ v \in \mathbb C : \mathrm{Re}(v) \in I\}
      = \{ v \in \mathbb C : T^*_\alpha(v) \geq t\}.
  \end{equation}
  By the same argument as in Lemma~\ref{le:ana}, the function $M(\cdot,t)$ is analytic
  on the set $\{ v \in \mathbb C : \hat{T}_\alpha(v) \geq t \}$, which contains
  the strip~\eqref{eq:strip} by Lemma~\ref{le:compl hat}. Therefore, $M(\cdot,t)$
  and $\tilde{M}(\cdot,t)$ agree on~\eqref{eq:strip} by analytic continuation.
  This implies the assertion.
\end{proof}

\section{Computing the explosion time}\label{se:comp exp}

Recall that, for fixed $u\in \mathbb R$, the explosion time $T^*_\alpha(u)$
of the rough Heston model
is the blow-up time of $f(t)=f(u,t)=c_3 \psi(u,t)$, where~$\psi$ solves the fractional
Riccati initial value problem~\eqref{eqIII:Ha1}--\eqref{eqIII:Ha2}.
We know from Theorem~\ref{thm:finite} that $T^*_\alpha(u)<\infty$ exactly
in the cases~(A) and~(B), defined in Section~\ref{se:prelim}.
We now develop a method (Algorithm~\ref{algo:T}) to compute $T^*_\alpha(u)$ for $u$ satisfying
the conditions of case~(A). In case~(B), a lower bound can be computed,
which is sometimes sharper than the explicit bound~\eqref{eqIII:T*-lower}.
The function~$f$ satisfies the fractional Riccati equation
\begin{equation}\label{eq:ric2}
  D^\alpha f = d_1 + d_2 f + f^2,
\end{equation}
where $d_1(u):=c_1(u) c_3$ and $d_2(u):=c_2(u)$, with
initial condition $I^{1-\alpha}f(0)=0$.
(Recall that we often suppress the dependence on~$u$ in the notation.)
We try a fractional power series ansatz
\begin{equation}\label{eq:fps}
  f(t) \stackrel{!}{=} \sum_{n=1}^\infty a_n(u) t^{\alpha n}
\end{equation}
with unknown coefficients $a_n=a_n(u)$.
\begin{lemma} (see e.g.~\cite{KiSrTr06})
  Let $\alpha\in(0,1)$.
   The fractional integral and derivative of power functions are given by
    \begin{align}
      I^\alpha_t t^\nu &= t^{\nu+\alpha}\frac{\Gamma(\nu+1)}{\Gamma(\nu+\alpha+1)}\qquad\text{for~}\nu>-1, \label{eqIII:frac1a} \\
      D^\alpha_t t^\nu &= t^{\nu-\alpha}\frac{\Gamma(\nu+1)}{\Gamma(\nu-\alpha+1)}\qquad\text{for~}\nu>-1+\alpha. \label{eqIII:frac1b}
    \end{align}
\end{lemma}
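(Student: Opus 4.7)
The plan is to prove both identities directly from the definitions of the Riemann--Liouville fractional integral and derivative, using the Beta function; this is completely routine, which is presumably why the authors simply cite~\cite{KiSrTr06}.

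For the first identity, I would start from the definition
\[
  I^\alpha_t t^\nu = \frac{1}{\Gamma(\alpha)}\int_0^t (t-s)^{\alpha-1} s^\nu\, ds,
\]
noting that the hypothesis $\nu>-1$ (together with $\alpha\in(0,1)$) ensures local integrability of the integrand at both endpoints $s=0$ and $s=t$. I would then perform the substitution $s=t\sigma$, which gives
\[
  I^\alpha_t t^\nu = \frac{t^{\nu+\alpha}}{\Gamma(\alpha)}\int_0^1 (1-\sigma)^{\alpha-1}\sigma^\nu\, d\sigma
    = \frac{t^{\nu+\alpha}}{\Gamma(\alpha)}\,B(\alpha,\nu+1),
\]
and then use the standard identity $B(\alpha,\nu+1)=\Gamma(\alpha)\Gamma(\nu+1)/\Gamma(\nu+\alpha+1)$ to obtain~\eqref{eqIII:frac1a}.

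For the second identity, I would apply~\eqref{eqIII:frac1a} with $\alpha$ replaced by $1-\alpha$. The hypothesis $\nu>\alpha-1$ guarantees $\nu>-1$, so that formula is applicable and yields
\[
  I^{1-\alpha}_t t^\nu = t^{\nu+1-\alpha}\,\frac{\Gamma(\nu+1)}{\Gamma(\nu-\alpha+2)}.
\]
Differentiating in~$t$ according to definition~\eqref{eqIII:frac-der} and simplifying via $\Gamma(\nu-\alpha+2)=(\nu-\alpha+1)\Gamma(\nu-\alpha+1)$ gives
\[
  D^\alpha_t t^\nu = (\nu-\alpha+1)\,t^{\nu-\alpha}\,\frac{\Gamma(\nu+1)}{\Gamma(\nu-\alpha+2)}
    = t^{\nu-\alpha}\,\frac{\Gamma(\nu+1)}{\Gamma(\nu-\alpha+1)},
\]
which is~\eqref{eqIII:frac1b}.

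There is no real obstacle here. The only point requiring minimal care is checking that the exponent conditions $\nu>-1$ (resp.\ $\nu>\alpha-1$) are precisely what is needed to justify the Beta-function integral (resp.\ to make $I^{1-\alpha}_t t^\nu$ differentiable in the elementary sense); both are immediate from the computations above.
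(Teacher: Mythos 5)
Your proof is correct and is exactly the standard Beta-function computation that the paper delegates to the citation of~\cite{KiSrTr06} (the paper gives no proof of its own). Both identities, the substitution $s=t\sigma$, the reduction $B(\nu+1,\alpha)=\Gamma(\nu+1)\Gamma(\alpha)/\Gamma(\nu+\alpha+1)$, and the differentiation step with the recurrence $\Gamma(\nu-\alpha+2)=(\nu-\alpha+1)\Gamma(\nu-\alpha+1)$ are all handled correctly, including the role of the exponent restrictions.
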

%
%The fractional integral and derivative of a power look as follows:
%\begin{align}
%  I^\alpha t^\nu &= \frac{\Gamma(\nu+1)}{\Gamma(\nu+\alpha+1)} t^{\nu+\alpha},
%    \quad \nu>-1, \label{eq:int pow}\\
%  D^\alpha t^\nu &= \frac{\Gamma(\nu+1)}{\Gamma(\nu-\alpha+1)} t^{\nu-\alpha},
%    \quad \nu>\alpha-1. \label{eq:der pow}
%\end{align}
By~\eqref{eqIII:frac1a}, the fractional power series~\eqref{eq:fps} (formally) satisfies
the initial condition~\eqref{eqIII:Ha2}. Inserting~\eqref{eq:fps} into~\eqref{eq:ric2}
and using~\eqref{eqIII:frac1b}, we obtain
\begin{align}
  \sum_{n=0}^\infty a_{n+1} v_{n+1} t^{\alpha n} &= d_1 + \sum_{n=1}^\infty d_2 a_n t^{\alpha n}
    + \sum_{n=2}^\infty \Big( \sum_{k=1}^{n-1} a_k a_{n-k} \Big) t^{\alpha n} \notag \\
   &= d_1 + d_2 a_1 t^\alpha + \sum_{n=2}^\infty \Big( d_2 a_n
     + \sum_{k=1}^{n-1} a_k a_{n-k} \Big )t^{\alpha n}, \label{eq:ins ans}
\end{align}
where
\[
  v_n := \frac{\Gamma(\alpha n+1)}{\Gamma(\alpha n-\alpha+1)}.
\]
Note that~$v_n$ is an increasing sequence; this follows easily from the fact that
$\log \circ\, \Gamma$ is convex (see Example~11.14 in~\cite{Sc05}).
By Stirling's formula, $v_n\sim(\alpha n)^\alpha$
for $n\to\infty$.
%It should be easy to show that $v_n\geq (\alpha n)^\alpha$.
From~\eqref{eq:ins ans}, we obtain a convolution recurrence for $a_n=a_n(u)$:
\begin{align}
  a_1(u) &= d_1(u)/v_1, \label{eq:rec ic} \\
 % a_2 &= d_2 a_1/v_2, \notag \\
  a_{n+1}(u) &= \frac{1}{v_{n+1}}\Big( d_2(u) a_n(u)
    + \sum_{k=1}^{n-1}a_k(u) a_{n-k}(u)\Big), \quad n\geq1.
  \label{eq:rec}
\end{align}
The function $f$ can thus be expressed as $f(u,t) = F(u,t^\alpha)$, where
\[
   F(u,z):=\sum_{n=1}^\infty a_n(u) z^n.
\]
\begin{lemma}\label{le:rad}
Let $u\in\mathbb{R}$, satisfying case (A)
(recall the definition in Section~\ref{se:prelim}). Then $F(u,\cdot)$
is analytic at zero, with a positive and finite radius of convergence~$R(u)$.
\end{lemma}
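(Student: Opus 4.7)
The plan is to establish positivity of the coefficients first, then obtain the positive radius of convergence by a majorant argument, and finally deduce finiteness of the radius by a uniqueness argument.

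First I would observe that in case~(A) one has $d_1(u) = c_1(u) c_3 > 0$ and $d_2(u) = c_2(u) = 2 e_0(u) \geq 0$, so a direct induction on \eqref{eq:rec ic}--\eqref{eq:rec} yields $a_n(u) \geq 0$ for every $n$, with $a_1(u)>0$. In particular, Pringsheim's theorem applies to $F(u,\cdot)$: if $R(u) < \infty$, then $z = R(u)$ is automatically a singular point of $F(u,\cdot)$.

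For the positive radius, the idea is to majorize the $a_n$ by a sequence whose generating function can be handled explicitly. Since $v_n$ is non-decreasing, $1/v_{n+1} \leq 1/v_1$ for all $n \geq 1$; defining $b_1 := a_1(u)$ and
\[
  b_{n+1} := \frac{1}{v_1}\Bigl(d_2(u) b_n + \sum_{k=1}^{n-1} b_k b_{n-k}\Bigr),\qquad n\geq 1,
\]
a straightforward induction (using $d_2(u)\geq 0$) gives $0 \leq a_n(u) \leq b_n$. The generating function $B(z) := \sum_{n \geq 1} b_n z^n$ then satisfies the quadratic functional equation
\[
  z\,B(z)^2 + \bigl(z d_2(u) - v_1\bigr) B(z) + v_1 b_1 z = 0,
\]
and the branch determined by $B(0) = 0$ is analytic in a neighborhood of the origin. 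This follows either from the explicit quadratic formula (the discriminant equals $v_1^2$ at $z=0$) or from the implicit function theorem, since the partial derivative of the left-hand side in $B$ equals $-v_1 \neq 0$ at $(z,B)=(0,0)$. Hence $B$ has strictly positive radius of convergence, and by coefficient-wise majorization so does $F(u,\cdot)$.

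For the upper bound $R(u) < \infty$, I would argue by contradiction: if $F(u,\cdot)$ were entire, then $f_0(t) := F(u, t^\alpha)$ would be continuous on $[0,\infty)$. Applying $D^\alpha$ termwise to the absolutely convergent series $\sum a_n(u) t^{\alpha n}$ via~\eqref{eqIII:frac1b} and comparing coefficients of $t^{\alpha n}$ shows (by construction of the recurrence) that $f_0$ satisfies the fractional Riccati IVP~\eqref{eq:ric2}, and hence the VIE~\eqref{eq:vie} on all of $[0,\infty)$. By the uniqueness part of Theorem~\ref{thm:vie ex}, $f_0$ would be a global continuous extension of $f(u,\cdot)$, contradicting the finite-time blow-up established in Proposition~\ref{prop:case1}. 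The main technical step is the construction of the scalar majorant in the positive-radius argument; once monotonicity of $v_n$ is used to reduce matters to a quadratic, analyticity at $z=0$ is essentially automatic, and the finiteness is then a one-line uniqueness argument.
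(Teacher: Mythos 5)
Your proof is correct, but both halves take a genuinely different route from the paper's. For the positive radius, the paper proves the explicit inductive bound $|a_n|\le A^n n^{\alpha-1}$ directly from the recurrence, estimating the convolution sum $\sum_{k=1}^{n-1}k^{\alpha-1}(n-k)^{\alpha-1}$ by a Beta integral; you instead majorize $a_n$ by the coefficients $b_n$ of an algebraic generating function (replacing $1/v_{n+1}$ by $1/v_1$, which is legitimate since $v_n$ is increasing and all recurrence data are non-negative in case (A)) and invoke the implicit function theorem at $(z,B)=(0,0)$. Both are sound; the paper's bound carries the sharper subexponential factor $n^{\alpha-1}$, consistent with the conjectured asymptotics~\eqref{eq:asym a}, but your geometric majorant equally suffices to justify the termwise fractional differentiation used right after the lemma. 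For finiteness of the radius, the paper stays at the level of the coefficients, proving $a_n\ge B^n$ by induction using the eventual monotonicity of $r_n=(d_2+n-1)/v_{n+1}$; you argue by contradiction from the finite-time blow-up in Proposition~\ref{prop:case1} together with the uniqueness statement of Theorem~\ref{thm:vie ex}. That is legitimate --- Proposition~\ref{prop:case1} is established independently in Section~\ref{se:expl vie}, so there is no circularity --- and it is essentially the same mechanism the paper deploys later, in Theorem~\ref{thm:conv}, to prove $R(u)^{1/\alpha}\le T^*_\alpha(u)$. The paper's version has the advantage of being self-contained and purely combinatorial; yours is shorter and more conceptual, at the cost of importing the analytic blow-up result. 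Your preliminary observation that $d_1(u)>0$ and $d_2(u)=2e_0(u)\ge 0$ in case (A), hence $a_n(u)\ge 0$, is exactly what the paper also relies on (for Pringsheim's theorem) in Theorem~\ref{thm:conv}.
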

\begin{proof}
  To see that the radius of convergence is positive, we show that there is $A=A(u)>0$ such that
  \begin{equation}\label{eq:A}
    |a_n| \leq A^n n^{\alpha-1}, \quad n\geq1.
  \end{equation}
  (Adding the factor $n^{\alpha-1}$ to this geometric bound facilitates the inductive proof.)
  We have
  %\[
  %  \frac{|d_2|/n+n^{\alpha-1}}{\alpha^\alpha(1+1/n)^{\alpha-1} n^{\alpha-1}} \to \alpha^{-\alpha},
  %  \quad n\to\infty.
  %\]
  \[
    \frac{\alpha^{-\alpha}|d_2|n^{-1}}{(n+1)^{\alpha-1}}
    + \frac{2\alpha^{-\alpha}\Gamma(\alpha)^2 n^{\alpha-1}}
      {\Gamma(2\alpha)(n+1)^{\alpha-1}} \to \frac{2\alpha^{-\alpha}\Gamma(\alpha)^2}
      {\Gamma(2\alpha)},
    \quad n\to\infty.
  \]
  Choose~$n_0$ such that the left-hand side is
  $\leq 3\alpha^{-\alpha}\Gamma(\alpha)^2/\Gamma(2\alpha)$ for all
  $n\geq n_0$, and such that $2v_n\geq (\alpha n)^\alpha$ for all
  $n\geq n_0$.
  The latter is possible because $v_n\sim(\alpha n)^\alpha$.
  Fix a number $A$ with
  $A\geq 3\alpha^{-\alpha}\Gamma(\alpha)^2/\Gamma(2\alpha)$ and such that $A^n n^{\alpha-1}\geq |a_n|$ holds
  for $1\leq n\leq n_0$. Let $n\geq n_0$ and assume, inductively, that
  $|a_k| \leq A^k k^{\alpha-1}$ holds for $1\leq k\leq n$. From the recurrence~\eqref{eq:rec},
  we then obtain
  \begin{align*}
    |a_{n+1}|&\leq 2(\alpha n+\alpha)^{-\alpha}\left(|d_2| A^n n^{\alpha-1}
    +A^n \sum_{k=1}^{n-1}k^{\alpha-1}(n-k)^{\alpha-1}\right) \\
    &\leq 2(\alpha n)^{-\alpha}\left(|d_2| A^n n^{\alpha-1}
    +A^n \sum_{k=1}^{n-1}k^{\alpha-1}(n-k)^{\alpha-1}\right).
  \end{align*}
  Since $x^{\alpha-1}(n-x)^{\alpha-1}$ is a strictly convex function
  of~$x$ on $(0,n)$ with minimum at $n/2$, it
  is easy to see that
  \begin{align*}
    \sum_{k=1}^{n-1}k^{\alpha-1}(n-k)^{\alpha-1}
    &\leq \int_0^n x^{\alpha-1}(n-x)^{\alpha-1} dx \\
    &= n^{2\alpha-1}\int_0^1 y^{\alpha-1}(1-y)^{\alpha-1} \\
     &=       \frac{\Gamma(\alpha)^2}{\Gamma(2\alpha)}n^{2\alpha-1},
  \end{align*}
  where the last equality follows from the well-known representation of the
  beta function in terms of the gamma function (see 12.41 in~\cite{WhWa96}).
  We conclude
  \begin{align*}
    |a_{n+1}|&\leq 2\alpha^{-\alpha} |d_2| A^n n^{-1}
    + \frac{2\alpha^{-\alpha}\Gamma(\alpha)^2}{\Gamma(2\alpha)} A^n n^{\alpha-1} \\
    &= A^n (n+1)^{\alpha-1} \left(
     \frac{2\alpha^{-\alpha}|d_2|n^{-1}}{(n+1)^{\alpha-1}}
    + \frac{2\alpha^{-\alpha}\Gamma(\alpha)^2 n^{\alpha-1}}
      {\Gamma(2\alpha)(n+1)^{\alpha-1}}
       \right) \\
    &\leq A^n (n+1)^{\alpha-1}   \frac{3\alpha^{-\alpha}\Gamma(\alpha)^2}{\Gamma(2\alpha)} \\
    &\leq A^{n+1} (n+1)^{\alpha-1}.
  \end{align*}
  This completes the inductive proof of~\eqref{eq:A}.
  
  The finiteness of the radius of convergence will follow from the existence of a
  number $B=B(u)>0$ such that
  \begin{equation}\label{eq:B bd}
    a_n\geq B^n, \quad n\geq1.
  \end{equation}
  To this end, define
  \[
     r_n := \frac{d_2+n-1}{v_{n+1}}, \quad n\geq 1.
  \]
  By Stirling's formula, we have $r_n/r_{n-1} = 1 +(1-\alpha)/n + O(n^{-2})$ as
  $n\to\infty$, and so~$r_n$ eventually increases. Let $n_0\geq2$ be such that~$r_n$
  increases for $n\geq n_0$, and define
  \[
    B := \min\{r_{n_0},a_1,a_2^{1/2},\dots,a_{n_0}^{1/{n_0}} \}.
  \]
  This number satisfies $a_n\geq B^n$ for $n\leq n_0$ by definition. Let us fix some $n\geq n_0$
  and assume, inductively, that $a_k\geq B^k$ holds for $1\leq k\leq n$. By~\eqref{eq:rec}
  \begin{align*}
    a_{n+1} &\geq \frac{1}{v_{n+1}}(d_2 B^n + (n-1)B^n) \\
    &= B^n r_n \\
    &\geq B^n r_{n_0} \geq B^{n+1}.
  \end{align*}
  Thus,~\eqref{eq:B bd} is proved by induction.
\end{proof}
{}From the estimates in Lemma~\ref{le:rad}, it is clear that termwise
fractional derivation of the series~\eqref{eq:fps} is allowed, and so
the right-hand side of~\eqref{eq:fps} really represents the solution~$f$
of~\eqref{eq:ric2} with initial condition $I^{1-\alpha}f(0)=0$, as long
as~$t$ satisfies $0\leq t<R(u)^{1/\alpha}$.
We proceed to show how the explosion time $T_\alpha^*(u)$ can be computed from the coefficients~$a_n(u)$.
The essential fact is that there is no gap between $R(u)^{1/\alpha}$
and $T_\alpha^*(u)$.
For this, we require the following classical result from complex analysis (\cite{Re91},
p.~235).

\begin{theorem}[Pringsheim's theorem, 1894]\label{thm:pring}
  Suppose that the power series $F(z)=\sum_{n=0}^\infty a_n z^n$ has positive finite radius
  of convergence~$R$, and that all the coefficients
  are non-negative real numbers. Then~$F$ has a singularity at~$R$.
\end{theorem}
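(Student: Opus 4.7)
The plan is to prove Pringsheim's theorem by contradiction, exploiting positivity of the coefficients via a rearrangement of a non-negative double series.

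First, I would suppose that $F$ is analytic at $z=R$ (equivalently, has no singularity there). Combining the disk of analyticity $\{|z|<R\}$ with an open neighborhood of $R$ on which $F$ extends analytically, I obtain, by compactness of a small real segment near $R$, a real point $r\in(0,R)$ and a radius $\rho>0$ with $r+\rho>R$ such that $F$ is analytic on the open disk $D(r,\rho):=\{z:|z-r|<\rho\}$. Consequently the Taylor series of $F$ centred at $r$ converges throughout $D(r,\rho)$.

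Next, since $r<R$, termwise differentiation of $F(z)=\sum_{n\ge 0}a_n z^n$ is justified on $|z|<R$, and the Taylor coefficients of $F$ at $r$ are
\[
\frac{F^{(k)}(r)}{k!}=\sum_{n\ge k}\binom{n}{k}a_n r^{n-k},\qquad k\ge 0.
\]
Because $a_n\ge 0$ and $r>0$, each such coefficient is non-negative. For any real $t$ with $R<t<r+\rho$ (such $t$ exists by the choice of $r$ and $\rho$), convergence of the Taylor expansion at $r$ gives
\[
F(t)=\sum_{k\ge 0}\frac{F^{(k)}(r)}{k!}(t-r)^k=\sum_{k\ge 0}\sum_{n\ge k}\binom{n}{k}a_n r^{n-k}(t-r)^k.
\]
All summands are non-negative, so Tonelli's theorem (for counting measure) permits interchanging the order of summation:
\[
F(t)=\sum_{n\ge 0}a_n\sum_{k=0}^{n}\binom{n}{k}r^{n-k}(t-r)^k=\sum_{n\ge 0}a_n\,t^n,
\]
by the binomial theorem. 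The finiteness of the left-hand side forces convergence of $\sum_n a_n t^n$ at the real point $t>R$, contradicting the hypothesis that $R$ is the radius of convergence of $F$. Hence $F$ must have a singularity at~$R$.

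The only real subtlety is the justification of the interchange of summations, which is precisely where the hypothesis $a_n\ge 0$ enters; without non-negativity the absolute convergence needed on $|z-r|<\rho$ is not guaranteed to extend to the real point $t$, and the rearrangement (equivalently, the reconstruction of the original series $\sum a_n t^n$ from the Taylor series at $r$) fails. Everything else is a routine appeal to standard facts about power series (termwise differentiation inside the disk of convergence and the identity theorem implicit in the construction of the analytic extension).
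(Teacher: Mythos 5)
Your proof is correct. The paper does not prove this statement at all---it is quoted as a classical result with a citation to Remmert---so there is no argument to compare against; what you give is the standard textbook proof of Pringsheim's theorem, and it is complete: re-expanding at a real centre $r\in(0,R)$, observing that all Taylor coefficients at $r$ are non-negative, and using Tonelli to rearrange the resulting non-negative double series back into $\sum_n a_n t^n$ at a real point $t>R$ is exactly where the positivity hypothesis does its work. The only step you gloss over slightly is the construction of the disk $D(r,\rho)$ with $r+\rho>R$ on which $F$ is analytic: analyticity at $R$ gives a disk $D(R,\varepsilon)$, and one should check explicitly that for $r=R-\varepsilon/4$ the disk $D(r,\varepsilon/2)$ lies inside $D(0,R)\cup D(R,\varepsilon)$ (any $z$ there with $|z|\geq R$ satisfies $|z-R|\leq|z-r|+|r-R|<\varepsilon$); this is routine and does not affect the validity of the argument.
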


\begin{theorem}\label{thm:conv}
  Suppose that $u\in\mathbb R$ satisfies case~(A). Define the sequence $a_n(u)$ by the
  recurrence~\eqref{eq:rec} with initial value~\eqref{eq:rec ic}. Then we have
  \begin{equation}\label{eq:lim}
    \limsup_{n\to\infty}  a_n(u)^{-1/(\alpha n)} = T_\alpha^*(u).
   %   \limsup_{n\to\infty} \bigg( a_n(u) n^{1-\alpha}
   %   \frac{\Gamma(\alpha)^2}{\alpha^\alpha \Gamma(2\alpha)} \bigg)^{-1/(\alpha n)} = T_\alpha^*(u).
  \end{equation}
\end{theorem}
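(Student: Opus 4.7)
The plan is to identify $T_\alpha^*(u)$ with $R(u)^{1/\alpha}$, where $R(u)$ denotes the radius of convergence of $F(u,\cdot)$ from Lemma~\ref{le:rad}, and then to recover~\eqref{eq:lim} from the Cauchy--Hadamard formula.

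First, I would verify that every coefficient $a_n(u)$ is strictly positive in case~(A). The initial value $a_1=d_1/v_1$ is positive because $c_1(u)>0$, and in case~(A) the hypothesis $e_0(u)\geq 0$ also forces $d_2(u)=c_2(u)=2e_0(u)\geq 0$. The convolution recurrence~\eqref{eq:rec} thus produces only non-negative (in fact strictly positive) terms, and by Lemma~\ref{le:rad} the radius of convergence $R(u)$ is finite and positive.

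Next, I would establish the equality $T_\alpha^*(u)=R(u)^{1/\alpha}$. For the inequality $T_\alpha^*(u)\geq R(u)^{1/\alpha}$, the estimates in the proof of Lemma~\ref{le:rad} justify termwise fractional integration of~\eqref{eq:fps} on $[0,R(u)^{1/\alpha})$, so $t\mapsto F(u,t^\alpha)$ is a bona fide continuous solution of~\eqref{eq:vie} on that interval; uniqueness (Theorem~\ref{thm:vie ex}) forces it to coincide with $f(u,\cdot)$ there, and hence $f$ cannot blow up before $R(u)^{1/\alpha}$. For the reverse direction, I would argue by contradiction: if $T_\alpha^*(u)>R(u)^{1/\alpha}$, then the Lubich-type analyticity already invoked in the alternative proof of Theorem~\ref{thm:validity} makes $f(u,\cdot)$ analytic at the positive point $t_0:=R(u)^{1/\alpha}$. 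Composing with the analytic branch of $z\mapsto z^{1/\alpha}$ near $z=R(u)>0$ would then render $F(u,\cdot)=f(u,(\cdot)^{1/\alpha})$ analytic at $R(u)$, contradicting Pringsheim's theorem (Theorem~\ref{thm:pring}), which guarantees that $R(u)$ is a singular point of $F(u,\cdot)$ since all $a_n(u)$ are positive.

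Combined with the Cauchy--Hadamard formula $R(u)^{-1}=\limsup_n a_n(u)^{1/n}$, this yields $T_\alpha^*(u)=R(u)^{1/\alpha}=\liminf_n a_n(u)^{-1/(\alpha n)}$, so~\eqref{eq:lim} reduces to the assertion that $\lim_n a_n(u)^{1/n}$ exists. This upgrade from $\liminf$ to $\limsup$ is the main technical obstacle. I would attack it by exploiting the supermultiplicativity $v_{m+\ell+1}a_{m+\ell+1}\geq a_m a_\ell$ that is contained in~\eqref{eq:rec} (retain the single summand corresponding to the split into indices $m$ and $\ell$): upon taking logarithms this becomes an approximate superadditivity $\log a_{m+\ell+1}\geq \log a_m+\log a_\ell-\alpha\log(m+\ell+1)+O(1)$, and a Fekete-type argument applied to a suitable correction of $\log a_n$ should then deliver the existence of $\lim_n(\log a_n)/n=-\log R(u)$, completing the proof.
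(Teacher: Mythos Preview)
Your core strategy --- identifying $T_\alpha^*(u)$ with $R(u)^{1/\alpha}$ via Pringsheim's theorem together with the analyticity of $f(u,\cdot)$ on $(0,T_\alpha^*(u))$ from Lubich~\cite{Lu83}, and then invoking Cauchy--Hadamard --- is exactly the paper's proof.

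Where you go further is in noting that Cauchy--Hadamard only delivers $\liminf_n a_n(u)^{-1/(\alpha n)}=R(u)^{1/\alpha}$, whereas the statement claims the same for the $\limsup$. This observation is correct: the paper simply writes $R(u)^{-1}=\limsup_n a_n(u)^{1/n}$ and declares the proof finished, without addressing the $\liminf$/$\limsup$ discrepancy. Your proposed fix --- extracting the supermultiplicative inequality $v_{m+\ell+1}\,a_{m+\ell+1}\geq a_m a_\ell$ from the convolution recurrence (all terms being non-negative in case~(A)) and applying a Fekete-type argument with an $O(\log n)$ correction --- is sound and would upgrade the $\liminf$ to a genuine limit; the index shift by one and the logarithmic defect are both covered by standard de~Bruijn--Erd\H{o}s type extensions of Fekete's lemma. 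In this respect your proposal actually closes a gap that the paper's own proof leaves open.
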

\begin{proof}
  %The subexponential factor $n^{1-\alpha}\frac{\Gamma(\alpha)^2}{\alpha^\alpha \Gamma(2\alpha)}$
  %is of no relevance for the proof; its significance will be explained below.
  %
  Recall that~$f(u,\cdot)$, the solution of~\eqref{eq:ric2}, also solves the Volterra integral
  equation~\eqref{eq:vie}. From the references on smoothness cited
  before~\eqref{eq:blowup u}, it follows that~$f(u,\cdot)$ is analytic on
  the whole interval $(0,T_\alpha^*(u))$.
  As $f(u,t)$ blows up for $t\uparrow T_\alpha^*(u)$ by Proposition~\ref{prop:case1},
  and $t\mapsto F(u,t^\alpha)$ is
  analytic on $(0,R(u)^{1/\alpha})$, we must have $R(u)^{1/\alpha} \leq T_\alpha^*(u)$.
  
  Assume for contradiction that $R(u)^{1/\alpha} < T_\alpha^*(u)$. Then~$f$ is
  analytic at $R(u)^{1/\alpha}$.
  But since $z\mapsto z^{1/\alpha}$ is analytic at~$R(u)>0$, the composition $F(u,z)=f(u,z^{1/\alpha})$
  would be analytic at $z=R(u)$ as well, which contradicts Theorem~\ref{thm:pring}.
  Therefore,
  \[
    R(u)^{1/\alpha} = T_\alpha^*(u).
  \]
  It is well-known that the radius of convergence is given by the Cauchy-Hadamard
  formula~\cite[p.~111]{Re91}
  \[
    R(u)^{-1} = \limsup_{n\to\infty} a_n(u)^{1/n},
  \]
  which concludes the proof.
\end{proof}
Note that, in case~(B), we can argue similarly as in the preceding proof. However,
the coefficients~$a_n$ are no longer positive, and so Pringsheim's theorem is not applicable.
Then, the inequality $R(u)^{1/\alpha}\leq T_\alpha^*(u)$ need not be an equality. Still, we can compute
a lower bound for the explosion time:
\begin{equation}\label{eq:lb T}
  \limsup_{n\to\infty} |a_n(u)|^{-1/(\alpha n)} \leq T_\alpha^*(u).
\end{equation}
Now assume that we are in case~(A) again. We now discuss how to speed up the convergence
in~\eqref{eq:lim}.
Roberts and Olmstead~\cite{RoOl96} studied the blow-up behavior of solutions of nonlinear
Volterra integral equations with (asymptotically) fractional kernel.
Their arguments hinge on the asymptotic behavior
of the nonlinearity for large argument. In particular, in our situation,
with $G(u,w)$ from~\eqref{eqIII:G1} satisfying $G(u,w)\sim w^2$ for $w\to\infty$, formula~(3.2) in~\cite{RoOl96} yields
\begin{equation}\label{eq:Rob}
  f(t) \stackrel{(?)}{\sim} \frac{\Gamma(2\alpha)}{\Gamma(\alpha)}%T_\alpha^*(u)^\alpha
    (T_\alpha^*(u)-t)^{-\alpha}, \quad t\uparrow T_\alpha^*(u).
\end{equation}
We write $\stackrel{(?)}{\sim}$ for two reasons: First, our integral equation~\eqref{eq:vie} does
not quite satisfy the technical assumptions in~\cite{RoOl96}. %, except in case~(A) for $\rho>0$.
Second, not all steps in~\cite{RoOl96} are rigorous.
We proceed, heuristically, to infer refined asymptotics of~$a_n(u)$ from~\eqref{eq:Rob}. Define
\[
  \Phi(z):= \sum_{n=1}^\infty a_n(u) R(u)^n z^n,
\]
a power series with radius of convergence~$1$, by the definition of~$R(u)$ in Lemma~\ref{le:rad}.
Its asymptotics for $z\uparrow1$ can be derived from~\eqref{eq:Rob}. Recall that the explosion
time and the radius of convergence of~$F$ are related by $T_\alpha^*(u)=R(u)^{1/\alpha}$.
\begin{align*}
  \Phi(z) &= f\big((Rz)^{1/\alpha}\big) \\
  &\stackrel{(?)}{\sim} \frac{\Gamma(2\alpha)}{\Gamma(\alpha)}
    (T_\alpha^*-(Rz)^{1/\alpha})^{-\alpha} \\
  &= \frac{\Gamma(2\alpha)}{\Gamma(\alpha)} R^{-1} (1-z^{1/\alpha})^{-\alpha} \\
  &\sim \frac{\alpha^\alpha\Gamma(2\alpha)}{\Gamma(\alpha)} R^{-1}(1-z)^{-\alpha},
    \quad z\uparrow 1.
\end{align*}
The method of \emph{singularity analysis} (see Section~VI in~\cite{FlSe09}) allows to transfer the asymptotics of~$\Phi$ to
asymptotics of its Taylor coefficients $a_n R^n$.
Sweeping some analytic conditions under the rug, we arrive at
\[
  a_n(u)R(u)^n \stackrel{(?)}{\sim}  \frac{\alpha^\alpha\Gamma(2\alpha)}{\Gamma(\alpha)}\,
  R^{-1} \frac{n^{\alpha-1}}{\Gamma(\alpha)}, \quad n\to\infty,
\]
and thus
\begin{equation}\label{eq:asym a}
  a_n(u) \stackrel{(?)}{\sim} R(u)^{-n-1} n^{\alpha-1}
      \frac{\alpha^\alpha \Gamma(2\alpha)}{\Gamma(\alpha)^2}, \quad n\to\infty.
\end{equation}
Numerical tests confirm~\eqref{eq:asym a}, and we have little doubt that it is true
(in case~(A)).
Summing up, $T_\alpha^*(u)$ can be computed by the following algorithm,
which converges much faster
than the simpler approximation $\limsup_{n\to\infty} a_n^{-1/(\alpha n)}$:
\begin{algorithm}\label{algo:T}
Let $u$ be a real number satisfying case~(A).
\begin{itemize}
  \item Fix $n_{\max} \in \mathbb N$ (e.g.\ $n_{\max}=100$),
  \item compute $a_1(u),\dots,a_{n_{\max}}(u)$ by the recursion~\eqref{eq:rec},
  \item compute the approximation
     \begin{equation}\label{eq:algo}
      \bigg( a_n(u) n^{1-\alpha}
      \frac{\Gamma(\alpha)^2}{\alpha^\alpha \Gamma(2\alpha)} \bigg)^{\frac{-1}{\alpha (n+1)}}
      \bigg|_{n=n_{\max}}
      \approx T_\alpha^*(u)
     \end{equation}
     for the explosion time.
\end{itemize}
\end{algorithm}
We stress that, while the arguments leading to~\eqref{eq:asym a} are heuristic,
we have rigorously shown in Theorem~\ref{thm:conv} that $T_\alpha^*(u)$
is the $\limsup$ of the left-hand side of~\eqref{eq:algo}.
The heuristic part is that the subexponential factor $n^{1-\alpha}\times \mathit{const}$
improves the relative error of the approximation from $O(\frac{\log n}{n})$
to $O(\frac{1}{n^2})$.
Note that our approach to compute the blow-up time can of course be extended to more general fractional
Riccati equations.
Finally, as mentioned above (see~\eqref{eq:lb T}), we can compute a \emph{lower bound}
for $T_\alpha^*(u)$ if it is finite, but~$u$ is outside of case~(A):
\begin{algorithm}\label{algo:lb}
Let $u$ be a real number satisfying case~(B).
\begin{itemize}
  \item Fix $n_{\max} \in \mathbb N$ (e.g.\ $n_{\max}=200$),
  \item compute $a_1(u),\dots,a_{n_{\max}}(u)$ by the recursion~\eqref{eq:rec},
  \item compute the approximate lower bound
     \[
       |a_n(u)|^{-1/(\alpha n)}\big|_{n=n_{\max}}\, \lessapprox T_\alpha^*(u)
     \]
     of the explosion time.
\end{itemize}
\end{algorithm}
\begin{remark}
As for the applicability of Algorithm~\ref{algo:T}, suppose that $\rho<0$
(with analogous comments applying to the less common case $\rho>0$).
From~\eqref{eqIII:e_0}, we have
 $e_0(u)\sim\tfrac12 \rho\xi u>0$ for $u\downarrow -\infty$, and so we are
in case~(A) for large enough~$|u|$. More precisely, case~(A) corresponds
to the interval $u\in(-\infty,\lambda/(\xi \rho)]$. For~$u$ from that interval,
the explosion time can be computed by Algorithm~\ref{algo:T}. To the right
of $u=\lambda/(\xi \rho)$, there is a (possibly empty) interval corresponding
to case~(B), where $T_\alpha^*(u)$ is still finite, but Algorithm~\ref{algo:T}
cannot be applied. Still, a lower bound can be computed by~\eqref{eq:lb T},
and we have the bounds from Theorems~\ref{thm:lb} and~\ref{thmIII:T*-upper},
which can be easily evaluated numerically.
Proceeding further to the right on the $u$-axis, we encounter an interval
containing $[0,1]$, on which $T_\alpha^*(u)=\infty$ (cases~(C) and~(D)). Afterwards, $T_\alpha^*(u)$
becomes finite again, but these~$u$ belong to case~(B), leaving
us with bounds for $T_\alpha^*(u)$ only.
\end{remark}

\begin{figure}[ht]
	\centering
  \includegraphics[width=0.7\linewidth]{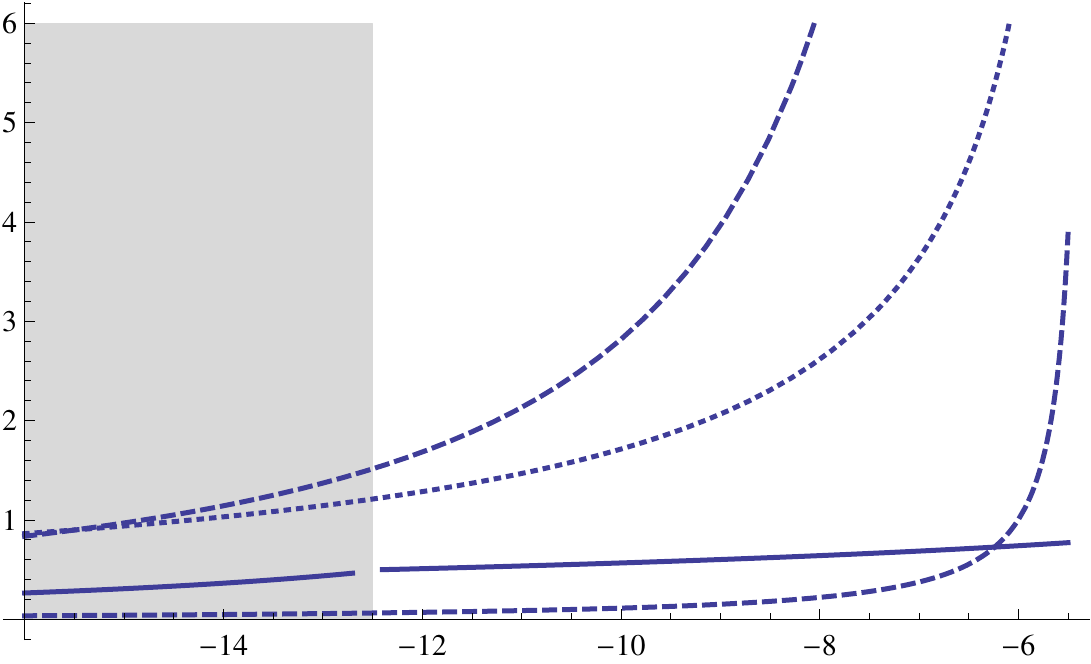} 
	\caption{Moment explosion time and bounds for $u<0$. The parameters are
	$\alpha=0.6,$ $\rho=-0.8,$ $\lambda=2,$ and $\xi=0.2.$
	The gray rectangle sits above the interval $(-\infty,\lambda/(\xi\rho)]=
	(-\infty,-12.5]$ corresponding to case~(A).
	Left solid curve: $T^*_\alpha,$
	computed by Algorithm~\ref{algo:T}. Right solid curve: lower bound, computed
	by Algorithm~\ref{algo:lb}. Dashed curves: bounds from Theorems~\ref{thm:lb}
	resp.~\ref{thmIII:T*-upper}. Dotted curve: $T^*_1$ (classical Heston model).}
	\label{fig:06}
\end{figure}

\begin{figure}[ht]
	\centering
  \includegraphics[width=0.7\linewidth]{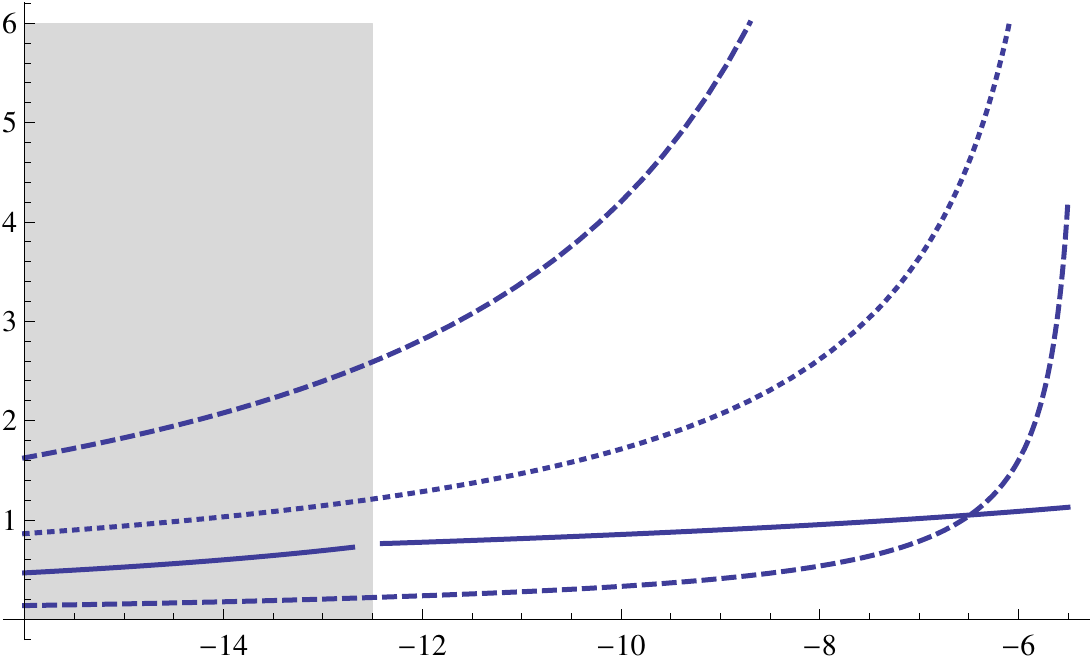} 
	\caption{As Figure~\ref{fig:06}, but with $\alpha=0.75$.}
  \label{fig:075}
\end{figure}

\begin{figure}[ht]
	\centering
  \includegraphics[width=0.7\linewidth]{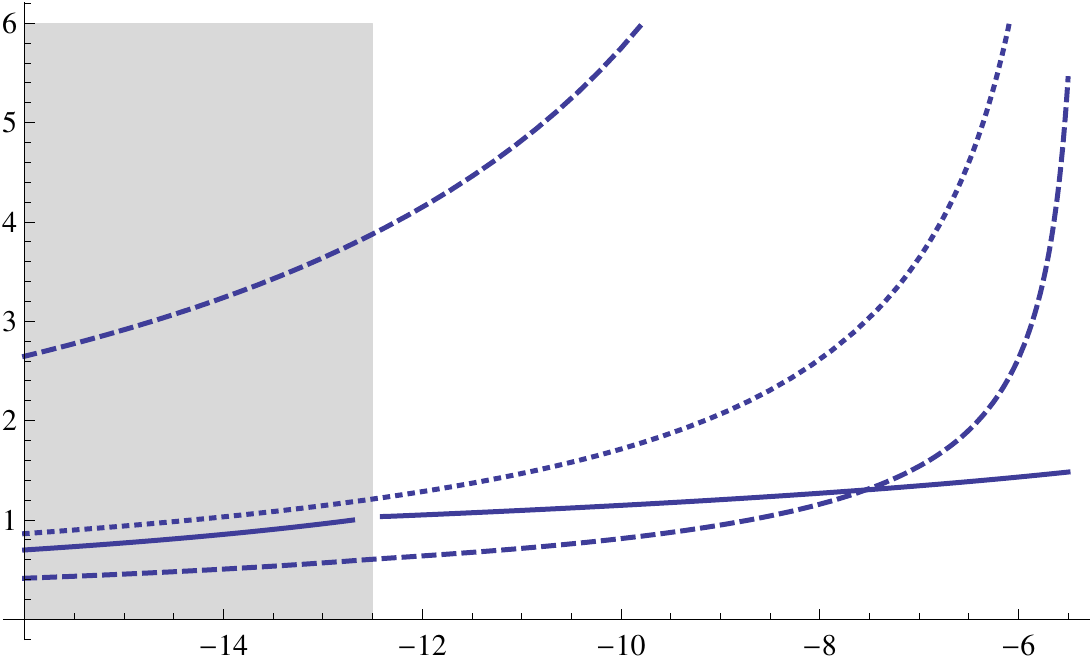} 
	\caption{As Figure~\ref{fig:06}, but with $\alpha=0.9$. Note that $T^*_\alpha$
	(left solid curve) is close to $T^*_1$ (classical Heston model, i.e.\ $\alpha=1$; dotted curve).}
   \label{fig:09}
\end{figure}

To conclude this section, we note that~$f$ can be approximated by replacing
the coefficients in~\eqref{eq:fps} by the right-hand side of~\eqref{eq:asym a}.
Let us write~$b_n(u)$ for the latter. Retaining the first~$N$ \emph{exact}
coefficients, this leads to the approximation
\begin{align}
  f(u,t) &\approx \sum_{n=1}^\infty b_n(u) t^{\alpha n} 
     + \sum_{n=1}^{N}\big(a_n(u)-b_n(u)\big) t^{\alpha n} \notag \\
  &=  \frac{\alpha^\alpha \Gamma(2\alpha)}{\Gamma(\alpha)^2 T_\alpha^*(u)^\alpha}
    \mathrm{Li}_{1-\alpha}\big( (t/T_\alpha^*(u))^\alpha \big) 
    + \sum_{n=1}^{N}\big(a_n(u)-b_n(u)\big) t^{\alpha n}, \label{eq:Li}
\end{align}
where $\mathrm{Li}_{\nu}(z):=\sum_{n=1}^\infty z^n/n^\nu$ denotes the polylogarithm.
While this approximation seems to be very accurate even for small~$N$ (see~\cite{Ge18}), it is limited
to \emph{real}~$u$ satisfying case~(A), and thus not applicable to option pricing.

\section{Finiteness of the critical moments}\label{se:cm fin}

While we have analyzed the \emph{explosion time} of the rough Heston model so far,
in most applications of moment explosions (see the introduction), the
\emph{critical moments}
\begin{align}
  u^+(T) &:= \sup\{u\in\mathbb{R}\colon\mathbb{E}[e^{u X_T}]<\infty\}, \notag \\
  u^-(T) &:= \inf\{u\in\mathbb{R}\colon\mathbb{E}[e^{u X_T}]<\infty\}, \quad T>0,\label{eq:def u-}
\end{align}
are of interest.
Using the upper bound for the moment explosion time $T_\alpha^\ast$ in Theorem~\ref{thmIII:T*-upper}, we
will now show the finiteness of the critical moments for every maturity $T>0$.
Computing $u^+(T)$ and $u^-(T)$ is discussed in Section~\ref{se:comp mom}.

%-----------------------------------------------------------------------------------------------------

\begin{theorem}
  In the rough Heston model the critical moments $u^+(T)$ and $u^-(T)$ are finite for every $T>0$.
\end{theorem}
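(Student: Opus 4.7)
The plan is to apply the upper bound from Theorem~\ref{thmIII:T*-upper} and show that $T_\alpha^*(u)\to 0$ as $|u|\to\infty$; then, for any fixed $T>0$, $\mathbb{E}[e^{uX_T}]=\infty$ whenever $|u|$ is sufficiently large, forcing $u^\pm(T)$ into a bounded set.

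First I would check that case~(A) or~(B) holds for $|u|$ large enough, since Theorem~\ref{thmIII:T*-upper} is only stated there. Case~(D) is $u\in[0,1]$, and case~(C) requires $e_1(u)\geq 0$. But
\[
  e_1(u)=e_0(u)^2-\tfrac14\xi^2 u(u-1)\sim \tfrac14\xi^2(\rho^2-1)u^2\to-\infty,\quad |u|\to\infty,
\]
because $|\rho|<1$; hence the union of cases~(C) and~(D) is bounded, and for all $|u|$ beyond some threshold $M_0$ one of (A), (B) applies.

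Next I would drive the integral $I(u):=\int_0^\infty (w/\hat G(u,w))^{1/\alpha}\,dw/w$ in~\eqref{eqIII:T*-upper} to zero. In case~(A), since $e_0(u)\geq 0$, for $w\geq 0$ one has $G(u,w)=w^2+2e_0w+c_3c_1\geq\max(w^2,c_3c_1(u))$, and splitting the integral at $w=\sqrt{c_3c_1(u)}$ yields
\[
  I(u)\leq 2\alpha\bigl(c_3c_1(u)\bigr)^{-1/(2\alpha)}\to 0.
\]
In case~(B) one has $\hat G\geq -e_1(u)>0$ on all of $[0,\infty)$, while $G(u,w)\geq w^2/4$ for $w\geq 2|e_0(u)|$ (as then $w+e_0\geq w/2$). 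Setting $R_u:=2\max(|e_0(u)|,\sqrt{-e_1(u)})$ and splitting at $R_u$ gives
\[
  I(u)\leq \alpha(-e_1(u))^{-1/\alpha} R_u^{1/\alpha}+ 4^{1/\alpha}\alpha R_u^{-1/\alpha}.
\]
Because $-e_1(u)\sim\tfrac14\xi^2(1-\rho^2)u^2$ grows quadratically (with strictly positive coefficient thanks to $|\rho|<1$) and $R_u$ grows only linearly in $|u|$, both terms are $O(|u|^{-1/\alpha})\to 0$.

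Finally, given $T>0$, pick $M\geq M_0$ so that the right-hand side of~\eqref{eqIII:T*-upper} is $<T$ for every $|u|>M$. Then $T_\alpha^*(u)<T$ for such $u$, and Theorems~\ref{thm:finite} and~\ref{thm:validity} yield $\mathbb{E}[e^{uX_T}]=\infty$, so $-M\leq u^-(T)$ and $u^+(T)\leq M$. The main technical delicacy is the case~(B) analysis: $\hat G$ is constant on the initial segment $[0,-e_0]$, so one has to carefully balance the $u$-dependence of $-e_0(u)$ and $-e_1(u)$. The decisive structural input is $|\rho|<1$, which is exactly what makes $-e_1(u)$ grow quadratically and dominate the linear growth of $|e_0(u)|$.
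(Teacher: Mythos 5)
Your proposal is correct and follows essentially the same route as the paper: invoke the upper bound of Theorem~\ref{thmIII:T*-upper}, show the integral there is $O(|u|^{-1/\alpha})$ by splitting it at a $u$-dependent point and using the lower bounds $G\geq c_3c_1(u)$ (resp.\ $\hat G\geq -e_1(u)$) near zero and $G\gtrsim w^2$ at infinity, and conclude that $T_\alpha^*(u)\to 0$ forces $\mathbb{E}[e^{uX_T}]=\infty$ for all large $|u|$. The only differences are cosmetic (your split points $\sqrt{c_3c_1(u)}$ and $R_u$ versus the paper's $u$ and $-2e_0(u)$, and your symmetric treatment of $u\to\pm\infty$ where the paper treats $u^+$ and notes $u^-$ is analogous).
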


%-----------------------------------------------------------------------------------------------------

\begin{proof}
  Only the finiteness of $u^+(T)$ is proven, as the proof for $u^-(T)$ is very similar. Denote the upper bound of $T_\alpha^\ast(u)$ in \eqref{eqIII:T*-upper} by $B(u)$ for all $u\in\mathbb{R}$ in the cases~\ref{itemIII: A} and~\ref{itemIII: B}. First, we show that for sufficiently large $u$, we are always in case~\ref{itemIII: A} or~\ref{itemIII: B}, depending on the sign of the correlation parameter $\rho$. From \eqref{eqIII:e_0} and \eqref{eqIII:e_1}, it is easy to see that
  \begin{equation}\label{eqIII:e0-e1}
    e_0(u)\sim \tfrac12\xi\rho u \quad\text{and}\quad e_1(u)\sim -\tfrac14\xi^2\bar\rho^2 u^2 \quad\text{as~} u\to\infty,
  \end{equation}
  where $\bar\rho^2 = 1- \rho^2$. Thus, eventually $e_1(u)<0$ for sufficiently large~$u$. In the next step, we show that the upper bound $B(u)$ converges to $0$ as $u\to\infty$. Indeed, in case~\ref{itemIII: A} the integral in \eqref{eqIII:T*-upper} satisfies
  \begin{align*}
    \int_0^\infty \bigg(\frac{w}{G(u,w)}\bigg)^{1/\alpha}\frac{dw}{w} &= \int_0^{u} \bigg(\frac{w}{G(u,w)}\bigg)^{1/\alpha}\frac{dw}{w} + \int_{u}^\infty \bigg(\frac{w}{G(u,w)}\bigg)^{1/\alpha}\frac{dw}{w} \\
    &\leq G(u,0)^{-1/\alpha}\int_0^{u} w^{-1+1/\alpha}\,dw +\int_{u}^\infty w^{-1-1/\alpha}\,dw \\
    &\leq c u^{-1/\alpha}, \quad u\to\infty,
  \end{align*}
  for some $c>0$, using the monotonicity $G(u,w)\geq G(u,0)$, the inequality $G(u,w)\geq w^2$ and $G(u,0)=c_3c_1(u)\sim \tfrac12 c_3 u^2$ as $u\to\infty$.
  
  If we are eventually in case~\ref{itemIII: B} as $u\to\infty$, then $-e_1(u)>0$ and $-e_0(u)>0$ holds for all sufficiently large $u$. Note that in this case $G(u,\cdot)$ attains its global minimum at $-e_0(u)$ and the minimum value is $-e_1(u)$. Thus, the integral  in \eqref{eqIII:T*-upper} satisfies
  \begin{align*}
    \int_0^\infty &\bigg(\frac{w}{G(u,w)}\bigg)^{1/\alpha}\frac{dw}{w} = \int_0^{-2e_0(u)} \bigg(\frac{w}{G(u,w)}\bigg)^{1/\alpha}\frac{dw}{w} + \int_{-2e_0(u)}^\infty \bigg(\frac{w}{G(u,w)}\bigg)^{1/\alpha}\frac{dw}{w} \\
    &\leq (-e_1(u))^{-1/\alpha}\int_0^{-2e_0(u)} w^{-1+1/\alpha}\,dw +4^{1/\alpha}\int_{-2e_0(u)}^\infty w^{-1-1/\alpha}\,dw \\
    &= \alpha\left((-e_1(u))^{-1/\alpha}(-2e_0(u))^{1/\alpha} + 4^{1/\alpha}(-2e_0(u))^{-1/\alpha}\right) \\
    &\leq c u^{-1/\alpha}, \quad u\to\infty,
  \end{align*}
  for some $c>0$, using the monotonicity $G(u,w)\geq -e_1(u)$, the inequality $G(u,w)\geq (w+e_0(u))^2\geq w^2/4$ on $[-2e_0(u),\infty)$ and \eqref{eqIII:e0-e1}.
  
  Altogether, we have $\lim_{u\to\infty} B(u)=0$. Since $0\leq T_\alpha^\ast(u)\leq B(u)$, the same is true for the moment explosion time $T_\alpha^\ast$, i.e.\ $\lim_{u\to\infty} T_\alpha^\ast(u) = 0$.
  Now  let $T>0$ be arbitrary. Then there exists $u_0\in\mathbb{R}$ such that $T_\alpha^\ast(u)<T$ for all $u\geq u_0$. This inequality implies $\mathbb{E}[e^{uX_T}]=\infty$ for all $u\geq u_0$, and therefore $u_+(T)\leq u_0$.
\end{proof}

From the preceding proof, it easily follows that $u^+(T)$ and $u^-(T)$ are of order $T^{-\alpha}$ as $T\downarrow0$. This is consistent with the classical Heston model ($\alpha=1$),
where the decay order is $T^{-1}$, by inverting~\eqref{eq:He et}.

\section{Computing the critical moments}\label{se:comp mom}

We first collect some simple facts that apparently have not been made explicit
in the literature on moment explosions. Moment explosion time and critical moments
are defined as in~\eqref{eq:def exp time} resp.~\eqref{eq:def u-}.
\begin{lemma}\label{le:gen}
  Let $S=(S_t)_{t\geq0}=(e^{X_t})_{t\geq0}$ be a positive stochastic process.
  Its moment explosion time is denoted by $T^*(u)$, $u\in\mathbb R$, and
  its critical moments by $u^-(T)$ and $u^+(T)$, $T>0$.
  \begin{itemize}
    \item[(i)] $T^*(u)$ increases for $u\leq0$, and decreases for $u\geq 0$.
    \item[(ii)] If~$S$ is a martingale, then $u^+(T)$ decreases, and $u^-(T)$ increases.
    \item[(iii)] Suppose that~$S$ is a martingale. If $T^*(u)$ decreases
    strictly on the interval
    \[
      \mathcal{D}^+:= \{ u\geq 1 : T^*(u)<\infty\},
    \]
    then $T^*=(u^+)^{-1}$ on~$\mathcal{D}^+$. Analogously, if $T^*(u)$ increases
    strictly on the interval
    \[
      \mathcal{D}^-:= \{ u\leq 0 : T^*(u)<\infty\},
    \]
    then $T^*=(u^-)^{-1}$ on~$\mathcal{D}^-$.
  \end{itemize}
\end{lemma}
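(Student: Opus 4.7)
For part (i), the plan is to exploit the elementary pointwise inequality $x^v\le 1+x^u$, valid for all $x>0$ whenever $0\le v\le u$ (split into the cases $x\ge 1$, where $x^v\le x^u$, and $x<1$, where $x^v<1$). Applying it to $x=S_t$ and taking expectations gives that $\expec[S_t^u]<\infty$ implies $\expec[S_t^v]<\infty$, so $T^*(v)\ge T^*(u)$; this yields the decrease of $T^*$ on $[0,\infty)$. A symmetric case analysis establishes $x^v\le 1+x^u$ for $u\le v\le 0$, giving the increase of $T^*$ on $(-\infty,0]$.

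For part (ii), I would use the convexity of $x\mapsto x^u$ on $(0,\infty)$ for $u\ge 1$ or $u\le 0$, so that conditional Jensen applied to the positive martingale $S$ gives $\expec[S_{T_2}^u\mid\mathcal{F}_{T_1}]\ge S_{T_1}^u$ for $T_1\le T_2$. Taking expectations yields $\expec[S_{T_1}^u]\le \expec[S_{T_2}^u]$, so the set $\{T>0:\expec[S_T^u]<\infty\}$ is an initial interval in $T$. Consequently, for $u\ge 1$ the set $\{u\ge 1:\expec[S_T^u]<\infty\}$ shrinks as $T$ grows, making $u^+(T)$ non-increasing in $T$; the same argument on $u\le 0$ gives the monotonicity of $u^-$.

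For part (iii), fix $u_0\in\mathcal{D}^+$ and set $T_0:=T^*(u_0)$; the goal is to show $u^+(T_0)=u_0$. For any $v>u_0$, part~(i) gives $T^*(v)\le T_0<\infty$, so $v\in\mathcal{D}^+$, and strict monotonicity on $\mathcal{D}^+$ then forces $T^*(v)<T_0$, whence $\expec[S_{T_0}^v]=\infty$; this shows $u^+(T_0)\le u_0$. Conversely, for any $v<u_0$ one checks $\expec[S_{T_0}^v]<\infty$ case by case: when $v\in[0,1]$ this is just the martingale property, when $v\in[1,u_0)\cap\mathcal{D}^+$ strict monotonicity gives $T^*(v)>T_0$, and otherwise $T^*(v)=\infty$. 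Thus $u^+(T_0)\ge u_0$, and the identity $u^+\circ T^*=\mathrm{id}$ on $\mathcal{D}^+$ is established. The argument for $u^-$ on $\mathcal{D}^-$ is symmetric.

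There is no serious obstacle; the only mild subtlety is the behavior of $\expec[S_t^u]$ exactly at the boundary $t=T^*(u)$, which is sidestepped by always taking $v$ strictly away from $u_0$ and invoking the \emph{strict} monotonicity hypothesis rather than only the weak monotonicity from part~(i).
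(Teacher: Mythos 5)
Your proposal is correct and follows essentially the same route as the paper: part (i) via a moment comparison (the paper uses Jensen with $x\mapsto x^{v/u}$ where you use the pointwise bound $x^v\le 1+x^u$, which is equivalent in effect), part (ii) via conditional Jensen exactly as in the paper, and part (iii) by combining strict monotonicity of $T^*$ with the monotonicity from (ii) and the definition of $u^+$ as a supremum, which the paper phrases as two contradictions rather than your two direct inequalities. The only blemish is your blanket claim that $\mathbb{E}[S_{T_0}^v]<\infty$ for \emph{all} $v<u_0$: for $v<0$ the lower critical moment may already have exploded, so $T^*(v)=\infty$ need not hold there; this is harmless, since $u^+(T_0)$ is a supremum and only $v$ slightly below $u_0\ge 1$ are needed.
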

\begin{proof}
  (i) As $T^*(u)=\infty$, we may assume that $u>0$ ($u<0$ is analogous). The assertion
  follows from Jensen's inequality, since $x\mapsto x^{v/u}$ is convex for $0<u\leq v$.
  
  (ii) We just consider $u^+$. Since~$S$ is a martingale, we have $u^+\geq1$.
  For any number $u\geq 1$ and $0<T\leq T'$, we have
  \[
    \mathbb{E}[S_{T'}^u]=\mathbb{E}\big[\mathbb{E}[S_{T'}^u|\mathcal{F}_T]\big]
    \geq \mathbb{E}[S_{T}^u]
  \]
  by the conditional Jensen inequality. This shows the assertion.
  
  (iii) We just prove the first statement. Note that (i) implies that~$\mathcal D^+$ is an interval.
  Now suppose for contradiction that $u\in\mathcal D^+$ satisfies
  \[
    u < u^+(T^*(u))
    =\sup\{ v: \mathbb{E}[S_{T^*(u)}^v]<\infty\}.
  \]
  This means that there is $v>u$ satisfying $\mathbb{E}[S_{T^*(u)}^v]<\infty$.
  Hence, $T^*(v)= T^*(u)$, contradicting the strict decrease of~$T^*$.
  Finally, suppose for contradiction that $u\geq 1$ satisfies
  $u > u^+(T^*(u))$. Then there is $1\leq v<u$ such that $\mathbb{E}[S_{T^*(u)}^v]=\infty$.
  For arbitrary $t\geq T^*(u)$, we get
  \[
    \mathbb{E}[S_t^v]=\mathbb{E}\big[\mathbb{E}[S_t^v|\mathcal{F}_{T^*(u)}]\big]
    \geq \mathbb{E}[S_{T^*(u)}^v]=\infty.
  \]
  This implies $T^*(v)=T^*(u)$, again contradicting the strict decrease of~$T^*$.
\end{proof}
If the assumptions of part~(iii) hold, then we can compute the critical moments
from the explosion time, by numerically solving the equations
$T^*(u^+(T)) = T$ resp.\ $T^*(u^-(T)) = T$.
Note, however, that \emph{strict} monotonicity may fail for
reasonable stochastic volatility models.
In the $3/2$-model~\cite{Le00}, the explicit characteristic function shows that
the critical moments do not depend on maturity (for positive maturity),
and the explosion time assumes only the values zero and infinity.

In the classical Heston model, on the other hand, it easily follows from~\eqref{eq:He et} that
$T_1^*(u)$ (recall that the index denotes $\alpha=1$)
is a strictly monotonic function of~$u$: On the set where it is finite,
$T_1^*$ strictly increases for negative~$u$, and strictly decreases for positive~$u$.
By part~(iii) of Lemma~\ref{le:gen}, this implies that we have
\begin{equation}\label{eq:mom eq}
  T_1^*(u^+(T)) = T \quad \text{and} \quad T_1^*(u^-(T)) = T, \quad T>0.
\end{equation}
Thus, although the critical moments do not admit an explicit expression,
they can be computed using~\eqref{eq:He et} and~\eqref{eq:mom eq},
 by numerical root finding with an appropriate starting value.
 
% We will now show that $T_\alpha^*(u^-(T)) = T$ in the rough Heston model, if $T$ is not too big, namely
% such that $u^-(T)$ satisfies case~(A). The latter restriction can certainly
% be removed with some extra work, but so far this is of little interest, because we have
% no algorithm to compute $T_\alpha^*$ in case~(B).

While we have no doubt that strict monotonicity of the explosion time extends
to the \emph{rough} Heston model, this seems not easy to verify. If we accept it as given,
then the lower critical moment can be computed for $\rho<0$ from
\begin{equation*}\label{eq:u- rough}
    T_\alpha^*(u^-(T)) = T.
  \end{equation*}
Again, we focus on the \emph{lower} critical moment, because then we can apply
Algorithm~\ref{algo:T} to compute $T_\alpha^*$ for $\rho<0$. Recall that this algorithm works
only for $u\leq \lambda/(\rho\xi)$, which amounts to case~(A).
Thus, $T$ must not be too large in~\eqref{eq:u- rough}, namely such that $u^-(T)$
satisfies case~(A). (Usually, this requirement is not too prohibitive.)

To provide some indication for the strict monotonicity of~$T_\alpha^*$,
recall that according to Lemma~\ref{le:mon}, $ f(u, t) = c_3 \psi(u, t) $
(see Section~\ref{se:prelim}) decreases strictly w.r.t.~$u$, if $u<0$ satisfies
case~(A). It is then plausible (although
not proven) that the strictly smaller function $ f(u_2,\cdot)$ explodes
at a larger time than $ f(u_1,\cdot)$, where $u_1<u_2<0$.
As another indication, the bounds in Theorems~\ref{thm:lb} and~\ref{thmIII:T*-upper}
are strictly monotonous, as seen by differentiating them w.r.t.~$u$.

Even in case strict monotonicity should not hold, we certainly have
\begin{align}
  u^-(T) &= \sup\{ u<0: T_\alpha^*(u)=T\} \label{eq:u sup} \\
  u^+(T) &= \inf\{ u>1: T_\alpha^*(u)=T\} \label{eq:u inf}
\end{align}
for all $T>0$, which suffices for numerical computations (under the above restriction
on~$T$). The validity of~\eqref{eq:u sup} and~\eqref{eq:u inf} is clear
from~\eqref{eq:blowup u}: If $T^*_\alpha(\cdot)$ is constant
on some interval, lying to the left of zero, say, then the mgf blows
up as~$u$ approaches the interval's right endpoint from the right.

\section{Application to asymptotics}\label{se:app}

In the introduction we mentioned several potential applications of our work.
In this section, we give some details on one of them:
Knowing the critical moments gives first order asymptotics for the implied volatility
for large and small strikes. We write $\hat{\sigma}(k)$ for the implied volatility,
where $k=\log(K/S_0)$ is the log-moneyness. According to Lee's moment formula~\cite{Le04a},
the left wing of implied volatility satisfies
\begin{equation}\label{eq:lee}
  T\cdot \limsup_{k\to-\infty} \frac{\hat{\sigma}(k)^2}{|k|}
    = 2-4\big(\sqrt{ u^-(T)^2- u^-(T)} + u^-(T)\big).
\end{equation}
We focus on negative log-moneyness, because then the slope depends on the \emph{lower}
critical moment, which Algorithm~\ref{algo:T} computes in the important case $\rho <0$.
%\begin{equation}\label{eq:lee}
%  T \limsup_{k\to\infty} \frac{\hat{\sigma}(k)^2}{k}
%    = 2-4\big(\sqrt{ u^+(T)^2- u^+(T)} - u^+(T)+1\big).
%\end{equation}
As in any model with finite critical moments, the marginal densities of the rough
Heston model have power-law tails.
More precisely, if we write $f_T$ for the density of~$S_T$, then
\begin{equation*}%\label{eq:dens infty}
  f_T(x)  = x^{-u^+(T)-1+o(1)},\quad x\to\infty,
\end{equation*}
and
\begin{equation}\label{eq:dens zero}
  f_T(x)  = x^{-u^-(T)-1+o(1)},\quad x \downarrow 0.
\end{equation}
Our approach (see Section~\ref{se:comp mom}) allows to evaluate the right-hand sides of~\eqref{eq:lee}
and~\eqref{eq:dens zero} numerically for the rough Heston model, if~$T$ is not too large.

In~\cite{FrGeGuSt11}, \eqref{eq:lee}--\eqref{eq:dens zero} were considerably sharpened
for the classical Heston model. We expect that such a refined smile expansion
can be done for rough Heston, too, with density asymptotics of the form
\[
  f_T(x) \sim c_1 x^{-u^+(T)-1} e^{c_2 (\log x)^{1-1/(2\alpha)}} (\log x)^{c_3},
  \quad x\to\infty,
\]
where the $c_i$ depend on~$T$ and~$\alpha$.
In the classical Heston model, the factor $e^{c_2 (\log x)^{1-1/(2\alpha)}}$ becomes
$e^{c_2 \sqrt{\log x}} ,$ in line with~\cite{FrGeGuSt11}.
Extending the analysis of~\cite{FrGeGuSt11} to $\tfrac12<\alpha<1$ will require
a detailed study of the blow-up behavior of the Volterra integral equation~\eqref{eq:vie}.
Among other things, (a special case of) the heuristic analysis in~\cite{RoOl96},
which we already mentioned in Section~\ref{se:comp exp}, would have to be made
rigorous, and extended to ensure uniformity w.r.t.\ the parameter~$u$. 
We postpone this to future work. Note that the approximation~\eqref{eq:Li} might be useful in this context.

%\section{The Volterra-Heston model}

%\textcolor{blue}{TODO (Arpad)}

\bibliographystyle{siam}
\bibliography{gerhold}
%\bibliography{../gerhold}
%\bibliography{./gerhold_cmod_2018-01-22}

\end{document}